%% file: main.tex
\documentclass[11pt,notitlepage,tightenlines,nofootinbib,superscriptaddress]{revtex4-2}
\input{packages.tex}

\input{macros.tex}

\begin{document}

\title{Sample complexity of quantum resource testing via one-shot quantum blurring}

\author{Dmitry Grinko}
\affiliation{QuSoft, Amsterdam, The Netherlands}
\affiliation{Institute for Logic, Language and Computation, University of Amsterdam, Amsterdam, the Netherlands}
\affiliation{Korteweg-de Vries Institute for Mathematics, University of Amsterdam, Amsterdam, the Netherlands}

\author{Ludovico Lami}
\affiliation{Scuola Normale Superiore, Piazza dei Cavalieri 7, 56126 Pisa, Italy}

\date{August 19, 2026}

\begin{abstract}
Quantum resource testing is a fundamental primitive of quantum information processing, profoundly connected to resource manipulation. Its goal is to discriminate $n$ copies of a given resourceful state $\rho$ from all free (i.e., resourceless) states; key instances for applications are entanglement testing and quantum magic testing. The asymptotic characterisation relies on the recently proven Generalised Quantum Stein's Lemma (GQSL), which establishes the rate of decay of the false negative error probability for a fixed false positive error probability. This result, however, is intrinsically asymptotic and thus can provide no finite-resource guarantees, which makes its practical implications unclear. Here, we establish the first rigorous finite-$n$ bounds on quantum resource testing and hence quantum resource manipulation, thus strengthening the GQSL and providing explicit estimates on the number of copies needed to achieve a prescribed performance. As notable consequences, we obtain (a)~the convergence of the regularised R\'enyi relative entropies of a resource, which settles the important open problem from~\href{https://ieeexplore.ieee.org/document/11482220}{[Fang/Hayashi, IEEE ToIT 72:6, 2026]}; and (b)~the first sample-complexity bound for asymmetric resource testing: for any fixed false positive error probability, a false negative error probability of at most $\delta$ can be achieved with $n=O\left(\frac{\log(1/\delta)}{D^\infty(\rho\|\FF)}\right)$ copies of $\rho$, in the limit where $\delta \to 0$.
\end{abstract}

\maketitle

\let\oldaddcontentsline\addcontentsline
\renewcommand{\addcontentsline}[3]{}
\fakepart{Main text}

\section{Introduction}

Entanglement is at the same time one of the most puzzling quantum phenomena, which has bewildered scientists for more than a century, and one of the essential ingredients of quantum technologies. For these reasons, its detection and manipulation are two of the most fundamental primitives of quantum information science. And yet, in spite of their importance, the ultimate limitations to these tasks remain largely mysterious. 

Furthermore, the rapid progress of quantum technologies has uncovered a whole landscape of different quantum resources, whose detection and manipulation are also critical to different applications. A notable example, besides entanglement, is nonstabiliser-ness, a.k.a.\ `quantum magic', which works as fuel for universal quantum computation~\cite{Bravyi-Kitaev}. To tackle the proliferation of quantum resources, a unified paradigm for their understanding, i.e.\ the formalism of quantum resource theories~\cite{RT-review}, has been developed. In this paper, we will formulate our results in terms of general quantum resources, but all of them apply to entanglement, magic, and many other examples.

In a quantum resource theory, one identifies a set of `free states', whose preparation is assumed to be experimentally inexpensive; any other state is called `resourceful', as it contains some amount of the quantum resource under examination. For example, in entanglement theory separable (a.k.a.\ unentangled) states are considered as free, and any other (entangled) state is resourceful. 
Together with free states, one also specifies a notion of `free operations', namely operations regarded as inexpensive to implement in the laboratory. For consistency, free operations should be `asymptotically resource non-generating' (ARNG), i.e.\ should not inject any additional resource into the system, at least asymptotically.

Quantum resource manipulation is the task of transforming many copies of a given resourceful --- but typically noisy --- quantum state into as many copies as possible of a target --- typically purer --- state by using only free operations. A primary example is entanglement distillation, in which the target is a two-qubit maximally entangled state (a.k.a.\ an `ebit'), which works as fuel for fundamental protocols such as quantum teleportation or quantum key distribution. Resource detection, instead, consists in telling whether an unknown state is free or resourceful. 

A key step in the understanding of 
these two general tasks has been made by Brand\~ao and Plenio~\cite{BrandaoPlenio1, BrandaoPlenio2}, who have recognised that resource manipulation under ARNG operations on the one hand, and resource detection on the other, are in fact two sides of the same coin, and 
can be rigorously analysed within the mathematical framework of quantum hypothesis testing. We refer to the resulting mathematical problem, which is the central object of interest here, as \emph{quantum resource testing}~\cite{gap-comment}. Because of the above equivalence, solving this problem yields a general understanding of both ARNG resource manipulation and resource detection.

While this formalisation is promising, progress has since been hindered by the intrinsic difficulty of the mathematics of quantum resource testing, in turn deeply rooted in the strength of correlations that quantum systems can exhibit. The keystone of the theory is the generalised quantum Stein's lemma (GQSL), which establishes the asymptotic limitations to which both quantum hypothesis testing and quantum resource manipulation are subjected. These are determined by the regularised relative entropy of a resource, which thus acquires a special role as a fundamental quantum resource quantifier~\cite{Vedral1997}. After the initial proof of the GQSL by Brand\~{a}o and Plenio was found to contain a serious gap~\cite{gap, gap-comment}, the result was conclusively proved only recently~\cite{Hayashi-Stein, LamiGQSL}. These solutions, however, have an important drawback: they work only in the asymptotic regime, i.e.\ when the underlying quantum system is composed of $n$ copies of some elementary system, and $n\to \infty$. This is undesirable for applications, for which, instead, one would want to be able to answer the following rather concrete questions: given a fixed error probability threshold, 
\begin{enumerate}[(a), leftmargin=*, itemsep=-8pt, topsep=2pt]
\item how many copies of a quantum system are needed in order to tell whether it contains some quantum resource or not?
\item how many copies of a resourceful quantum state are needed to distill a certain number of copies of some target state with ARNG operations? 
\end{enumerate}

These two questions are essentially equivalent, the link between the two being an instance of the aforementioned general equivalence established by Brand\~{a}o and Plenio~\cite{BrandaoPlenio1, BrandaoPlenio2}. Notably, the intrinsically asymptotic results in~\cite{Hayashi-Stein, LamiGQSL} 
cannot be used to answer them. 
To do that, one needs instead a \emph{fully one-shot} solution of the GQSL, which would yield quantitative estimates on the efficiency of resource detection and manipulation from any concrete (finite) quantum system, as opposed to mere guarantees on the \emph{asymptotic} performance. Indeed, this shift of focus from theoretical, asymptotic results to practically applicable one-shot results has been a major effort throughout the field of quantum information science in the last two decades~\cite{RennerPhD, TOMAMICHEL, KHATRI}.

In the case of the GQSL, this effort has so far been stifled by some serious mathematical obstacles: the information spectrum technique in~\cite{Hayashi-Stein}, as well as the bosonic lifting procedure that is needed to make the quantum blurring work in~\cite{LamiGQSL}, is inherently asymptotic rather than one-shot. As a consequence, the picture of hypothesis testing applied to quantum resources is fundamentally incomplete. In particular, the following key question has been left unanswered~\cite{FangHayashi2025}:
\begin{enumerate}[(a), leftmargin=*, itemsep=-2pt, topsep=2pt] \setcounter{enumi}{2}
\item in quantum resource testing, is it possible to achieve both: a false positive error probability decaying exponentially to zero with $n$, the number of copies of the state; and a false negative error probability decaying exponentially to zero with an optimal exponent arbitrarily close to the regularised relative entropy of the resource?\footnote{Here, a `false positive' error consists in mistaking a resourceful state for a free one, and vice versa for a `false negative' error.}
\end{enumerate}

In this work, we resolve all of the above questions~(a)--(c). We do so by constructing quantitative one-shot estimates on the effectiveness of quantum resource testing --- and thus, in turn, also of quantum resource manipulation. We then leverage these estimates to determine the \emph{sample complexity} of both tasks, thus providing a complete answer to questions~(a)--(b). In the appropriate regime of exponentially decaying false positive and false negative error probabilities, the same estimates also entail an affirmative answer to question~(c). What makes our main technical result particularly appealing is that it is, in information theory parlance, an achievability statement: that is, it amounts to the construction of resource testing schemes with extremely strong error guarantees. This has countless potential technological applications, from the certification of entanglement sources to that of magic state factories. 

From the theoretical point of view, the solution of~(c) also implies that the regularised ($n\to\infty$) $\alpha$-R\'enyi relative entropies of a resource converge, in the limit $\alpha\to 1^-$, to the corresponding regularised relative entropy of a resource. The validity of this statement, which establishes the highly non-trivial commutation of the two limits $\alpha\to 1^-$ and $n\to\infty$, was a major open problem in the field~\cite{FangHayashi2025}. All this is made possible by substantial technical developments, and in particular by a one-shot refinement of the quantum blurring technique of~\cite{LamiGQSL}.

\section{Quantum resource theories}

\subsection{General formalism}

The fundamental ingredient of a quantum resource theory is its associated set of \deff{free states}. 
For a given number of copies $n$ of some elementary quantum system, we thus have an associated set $\FF_n$ of free states. We will assume that the sets $\FF_n$ obey a series of basic compatibility assumptions, known as Brand\~{a}o--Plenio axioms~\cite{BrandaoPlenio2}:
\begin{enumerate}[leftmargin=*, labelindent=4pt, topsep=0pt]
    \item For every $n$, the set $\FF_n$ is closed and convex. 
    
    \item There exists a full-rank state $\tau \in \FF_1$, so that $\tau\ge  c_\tau \frac{\1_A}{d}$ for some constant $c_\tau>0$, where $\1_A/d$ is the maximally mixed state on the $d$-dimensional system $A$.
    
    \item Free states are stable under the operation of discarding subsystems: whenever $\sigma\in \FF_{n+1}$ is free, its marginal on the first $n$ copies is also free, i.e.\ $\Tr_{n+1}\sigma\in \FF_n$, where $\Tr_{n+1}$ denotes the trace over the last copy.
    
    \item Free states can be combined by tensoring: if $\sigma\in \FF_n$ and $\sigma'\in \FF_m$, then $\sigma\otimes\sigma'$ is also free, i.e.\ it belongs to $\FF_{n+m}$.
    
    \item The sets of free states are invariant under relabelling of the copies. More precisely, for every $\sigma\in \FF_n$ and every permutation $\pi\in S_n$, one has
    $U_\pi^{\vphantom{\dag}}\sigma U_\pi^\dag\in \FF_n$, where $U_\pi$ is the unitary that permutes the copies of the system $A$.
\end{enumerate}

Entanglement theory~\cite{Werner} and the resource theory of `quantum magic'~\cite{Veitch2014} are two major examples of quantum resource theories. The former, for instance, corresponds to the choice of $\FF_n$ as the set of separable states, i.e.
\bb
\FF_n = \sep_{A^n:B^n} \defeq \Set*{\sum_x p(x)\ \rho_{A^n}^{(x)}\otimes \sigma_{B^n}^{(x)}} .
\label{eq:separable_states}
\ee
Here, $A$ and $B$ are some elementary quantum systems in Alice's and Bob's laboratories, $x$ is an outcome of a random variable happening with probability $p(x)$, and $\rho_{A^n}^{(x)},\, \sigma_{B^n}^{(x)}$ are collections of arbitrary states on $n$ copies of $A,B$, respectively. In the theory of magic, instead, one takes $\FF_n = \mathrm{STAB}_n$ as the convex hull of so-called stabiliser states.

Essentially all interesting sets of free states, including separable states in entanglement theory and stabiliser states in the resource theory of quantum magic, obey all the Brand\~ao--Plenio axioms~1--5, whose purpose is to make sure that certain elementary operations map free states into other free states. For instance, the convexity requirement (Axiom~1) ensures that the statistical mixture of free states, which corresponds to the procedure of preparing a different free state depending on the outcome of a coin toss, results in another free state. This is guaranteed by construction in~\eqref{eq:separable_states}.

The resource content of a quantum state $\rho$ can be quantified by its \deff{relative entropy of the resource}, mathematically defined by
\bb
D(\rho\|\FF) \defeq \inf_{\sigma\in \FF_1} D(\rho\|\sigma)\, .
\label{eq:relative_entropy_resource}
\ee
Here, $D(\rho\|\sigma) \defeq \Tr [\rho (\log \rho - \log \sigma)]$ is the \deff{quantum relative entropy}, a fundamental statistical distance measure between quantum states~\cite{Umegaki1962, hiai1991proper}. To account for the fact that correlations in free states might lead to better approximations of multiple copies of $\rho$, one needs to \emph{regularise} the simple expression~\eqref{eq:relative_entropy_resource} into
\bb 
D^\infty(\rho\|\FF) \defeq \lim_{n\to\infty} \frac1n \rel{D}{\rho^{\otimes n}}{\FF_n}\, ,
\label{eq:regularised_relative_entropy_resource}
\ee
where we adopted the shorthand $\rel{D}{\rho^{\otimes n}}{\FF_n} \defeq \min_{\sigma_n\in \FF_n} \rel{D}{\rho^{\otimes n}}{\sigma_n}$; this is called the \deff{regularised relative entropy of the resource}.

\subsection{Quantum resource testing}

Given $n$ copies of a quantum system, the task of quantum resource testing consists in designing a measurement to decide between these two hypotheses, one of which is promised to hold: 
\begin{itemize}[leftmargin=*, itemsep=-2pt, topsep=2pt]
\item Null hypothesis: the overall system is in the state $\rho^{\otimes n}$, for some known (resourceful) $\rho$.
\item Alternative hypothesis: the overall system is in a free state, i.e.\ it contains no resource.
\end{itemize}
Two different kinds of errors are possible: 
\begin{itemize}[leftmargin=*, itemsep=-2pt, topsep=2pt]
\item Type I errors (false positives): we reject the null hypothesis even though it was true.
\item Type II errors (false negatives): we accept the null hypothesis even though it was false.
\end{itemize}

Depending on the specifics of the problem, the measurement on the quantum system can be designed to minimise one of the two corresponding error probabilities, or to achieve a suitable tradeoff between them. A key role in the theory is played by the \deff{minimal type~II error probability} for a given threshold $\e\in (0,1)$ on the type~I error probability, denoted by
\bb
\rel{\beta_\varepsilon}{\rho^{\otimes n}}{\FF_n} \defeq \min\Set{ \pr\set{\text{type~II error}}\, \given \, \pr\set{\text{type~I error}} \leq \e}\, ;
\ee
here, the minimisation is over all possible measurements. For a given $\delta\in (0,1)$, the \deff{sample complexity} of quantum resource testing is the minimum number of samples needed to achieve a type~II error probability at most equal to $\delta$. Formally,
\bb
N_{\e,\delta}(\rho\|\FF) \defeq \min\Set*{n\, \given\, \rel{\beta_\varepsilon}{\rho^{\otimes n}}{\FF_n}\leq \delta}
\label{eq:sample-complexity}
\ee
A precise estimate of the above quantity would yield a quantitative answer to question~(a) in the introduction.

\subsection{Connection with resource manipulation}

The main conceptual contribution of Brand\~ao and Plenio~\cite{BrandaoPlenio1, BrandaoPlenio2} is the realisation that the task of quantum resource testing is fully equivalent to quantum resource manipulation under 
ARNG operations. 
This connection is particularly simple and illuminating in the case of entanglement theory. In that setting, ARNG operations can be replaced with the even simpler class of non-entangling (NE) operations, and the Brand\~ao--Plenio connection establishes that \emph{it is possible to distil a $d$-dimensional maximally entangled state from $\rho_{AB}$ using NE operations with trace distance error $\e$ if and only if $\beta_\e(\rho_{AB}\,\|\, \sep_{A:B})\leq 1/d$.} Due to this connection, questions~(a) and~(b) in the introduction are completely equivalent.

\subsection{Error exponents}

The traditional approach to quantum resource testing involves the study of the asymptotic behaviour of error probabilities as $n\to\infty$. A key role is played by the \deff{Stein exponent} $\stein(\rho\|\FF)$, defined to be the largest $r\ge  0$ such that it is possible to make the type~II error probability decay to zero exponentially fast with exponent $r$, i.e.\ $\rel{\beta_{\varepsilon_n}}{\rho^{\otimes n}}{\FF_n}\sim 2^{-rn}$, while at the same time the type~I error vanishes asymptotically, i.e.\ $\e_n \smash{\underset{\text{\raisebox{8pt}{$n\!\to\!\infty$}}}{\longrightarrow}} 0 \vphantom{\Big|}$. The GQSL shows that the Stein exponent coincides, for all states $\rho$, with the regularised relative entropy of the resource~\cite{BrandaoPlenio, Hayashi-Stein, LamiGQSL}:
\bb
\stein(\rho\|\FF) = D^\infty(\rho\|\FF)\, .
\ee

A finer understanding can be obtained by looking also at the speed at which $\e_n$ vanishes asymptotically. When $\e_n = 2^{-sn}$ vanishes exponentially fast, for some coefficient $s>0$ that we think of as sufficiently small,\footnote{More precisely, one needs to have $0<s<D(\FF_1\|\rho) \defeq \min_{\sigma\in \FF_1} D(\sigma\|\rho)$, where the latter quantity is known as the reverse relative entropy of the resource.} it has been shown that~\cite{FangHayashi2025}
\bb
\rel{\beta_{2^{-ns}}}{\rho^{\otimes n}}{\FF_n} \sim 2^{-r(s)\, n},\qquad r(s) = \lim_{n\to\infty} \sup_{0<\alpha<1} \left\{ \frac{1}{n} \rel{D_\alpha}{\rho^{\otimes n}}{\FF_n} - \frac{\alpha}{1-\alpha}\,s \right\},
\ee
where $D_\alpha(\rho\|\sigma) \defeq \frac{1}{\alpha-1} \log \Tr[\rho^\alpha \sigma^{1-\alpha}]$ is the so-called Petz--R\'enyi relative entropy~\cite{PetzRenyi}, and, as before, we used the shorthand notation $\rel{D_\alpha}{\rho^{\otimes n}}{\FF_n} \defeq \min_{\sigma_n\in \FF_n} \rel{D_\alpha}{\rho^{\otimes n}}{\sigma_n}$.

By construction, $r(s) \leq \stein(\rho\|\FF)$ needs to hold for all $s>0$, as the right-hand side is defined by making no promises on the speed of convergence of $\e_n$ to zero. Physically, however, we expect this inequality to become an equality in the limit $s\to 0^+$: if this were really the case, we would immediately obtain an affirmative answer to question~(c) in the introduction. A simple calculation shows that this problem is equivalent to a limit commutation problem:
\bb
\lim_{\alpha\to 1^-} \lim_{n\to\infty} \frac{1}{n} \rel{D_\alpha}{\rho^{\otimes n}}{\FF_n} \eqt{?} \lim_{n\to\infty} \lim_{\alpha\to 1^-} \frac{1}{n} \rel{D_\alpha}{\rho^{\otimes n}}{\FF_n} = D^\infty(\rho\|\FF)
\label{eq:limit_commutation_problem}
\ee
Prior to our work, the difficulty of handling the regularisation limit $n\to\infty$ had precluded a solution to the above question. The difficulty is somewhat analogous to the mechanism underlying phase transitions in statistical physics: the limit $n\to\infty$ plays the role of the thermodynamic limit, while $\alpha$ acts as a system parameter, and a possible `phase transition' at $\alpha=1$ may prevent the two limits from being exchanged.

\section{Results}

Our main result establishes the first one-shot achievability bounds on quantum resource testing:

\begin{theorem}[(One-shot formulation of the GQSL)]
\label{thm:main-one-shot-gqsl}
Consider a quantum resource theory defined by the free states $\FF = (\FF_n)_n$, assumed to satisfy all Brand\~ao--Plenio axioms. Then, there are constants $N,A,B>0$ and $\e_0\in (0,1)$ with the following property: for all $n\ge  N$, all type~I error thresholds $2^{-n/240} \leq \e \leq \e_0$, and all states $\rho$, we have
\bb
- \log \rel{\beta_{\e}}{\rho^{\otimes n}}{\FF_n} \ge  \rel{D}{\rho^{\otimes n}}{\FF_n} - A n \sqrt{\e} - B \sqrt{n \log\frac1\e}\, .
\label{eq:one_shot_GQSL}
\ee
\end{theorem}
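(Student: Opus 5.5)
We follow the blurring strategy of~\cite{LamiGQSL}, but make every step one-shot. The starting point is the minimax/duality formulation of the hypothesis-testing quantity. By convexity and closedness of $\FF_n$ (Axiom~1), a Sion-type minimax gives
\[
\beta_\e(\rho^{\otimes n}\|\FF_n)=\max_{\sigma_n\in\FF_n}\beta_\e(\rho^{\otimes n}\|\sigma_n).
\]
The bound~\eqref{eq:one_shot_GQSL} is therefore equivalent to the following statement. If $-\log\beta_\e(\rho^{\otimes n}\|\FF_n)$ were smaller than the right-hand side, there would be a single free $\sigma_n$ with $D_H^\e(\rho^{\otimes n}\|\sigma_n)$ small. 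We will show that such a $\sigma_n$ can be upgraded to a free state $\tilde\sigma_n$ whose relative entropy distance to $\rho^{\otimes n}$ is at most $D_H^\e + An\sqrt\e + B\sqrt{n\log(1/\e)}$. This contradicts the definition of $D(\rho^{\otimes n}\|\FF_n)$.

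\textbf{Symmetrisation.} First, by Axiom~5 and convexity we may replace $\sigma_n$ by its permutation twirl. This does not decrease $\beta_\e$, since $\rho^{\otimes n}$ is permutation invariant and the optimal test can be twirled.

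\textbf{Blurring.} Next we apply a one-shot blurring map $\mathcal B$. We choose a finite-dimensional version: with $m=\lceil c\sqrt{n\log(1/\e)}\rceil$, apply a random depolarisation to a uniformly random subset of about $m$ of the $n$ copies, replacing them with $\tau$. This is free by Axioms~2--4, plus permutation invariance and convexity. This replaces the bosonic lifting of~\cite{LamiGQSL}, which is the inherently asymptotic ingredient. The key lemma to prove is a two-sided estimate.
\begin{itemize}
\item[(i)] $\mathcal B(\sigma_n)\ge 2^{-O(m\log d)}\,\sigma_n'$ holds in a suitably spread sense. Because $\tau\ge c_\tau\1/d$, each type class of the blurred state is filled up, and the blurred free state dominates a flat operator on the $\rho$-typical subspace.
\item[(ii)] The test $T$ achieving $\beta_\e$ for $\sigma_n$ still has large overlap with the blurred object. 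Concretely, $\Tr[\mathcal B^\dagger(T)\rho^{\otimes n}]\ge 1-O(\sqrt\e)$, and on the typical projector $P$ of $\rho^{\otimes n}$ at width $\sqrt{n\log(1/\e)}$, one has $P\mathcal B(\sigma_n)P\le 2^{-D_H^\e+O(\sqrt{n\log 1/\e})}P\,\rho^{\otimes n}$-type operator inequalities.
\end{itemize}
The typicality estimates come from Hoeffding/Chernoff bounds over i.i.d.\ eigenvalues of $\rho^{\otimes n}$ and the permutation-invariant structure via Schur--Weyl duality. The type-class count $\mathrm{poly}(n)$ costs only $O(\log n)$, which is absorbed. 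The condition $\e\ge 2^{-n/240}$ ensures $m\le n$ with room to spare.

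\textbf{From hypothesis testing to relative entropy.} Finally we convert the information-spectrum statement into a relative entropy bound. We use the standard splitting
\[
D(\rho^{\otimes n}\|\tilde\sigma)\le \Tr[\rho^{\otimes n}\log\rho^{\otimes n}]-\Tr[\rho^{\otimes n}\log\tilde\sigma].
\]
On the typical part, $-\log\tilde\sigma$ is bounded by $D_H^\e+O(\sqrt{n\log 1/\e})$ plus the entropy. On the atypical part, of weight $O(\sqrt\e)$, $-\log\tilde\sigma\le n\log(d/c_\tau)$ because we mix in $\tau^{\otimes n}$ with a tiny weight, which is free by convexity. That atypical part yields exactly the $An\sqrt\e$ term. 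The weight $\sqrt\e$ rather than $\e$ arises from a gentle-measurement step when passing between $T$ and $P$.

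\textbf{Main obstacle.} The hardest step is proving (i)--(ii) for the finite blurring map with only $O(\sqrt{n\log 1/\e})$ loss. The bosonic lifting in~\cite{LamiGQSL} handled the domination inequality asymptotically. Here one must show, uniformly over all permutation-invariant free $\sigma_n$, that replacing $m$ random copies by $\tau$ shifts spectral weight between neighbouring types at a bounded multiplicative cost per step. My first attempt would be a direct combinatorial comparison of blurred type distributions via a hypergeometric/binomial smoothing argument, tracking eigenvalues in the Schur basis.
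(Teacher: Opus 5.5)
Your outer skeleton is compatible with the paper's: a minimax reduction to a single permutation-invariant $\sigma_n$, $\tau$-blurring of $\Theta(\sqrt{n\log(1/\e)})$ sites (matching the paper's $k=\lfloor\sqrt{240s}\,n\rfloor$, with $s=\frac1n\log\frac1\e$), and a final continuity step that mixes in $\tau^{\otimes n}$ and produces the $An\sqrt{\e}$ term. The gap is that the step you defer as the `key lemma' is the entire difficulty, and as you formulate it, it would not go through.

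\emph{The direction in (ii) is wrong.} The inequality $P\mathcal{B}(\sigma_n)P\le 2^{-D_{\mathrm H}^{\e}+O(\cdot)}P\rho^{\otimes n}$ bounds the blurred free state \emph{above} by $\rho^{\otimes n}$. That can only make $D(\rho^{\otimes n}\|\mathcal{B}(\sigma_n))$ large. Your last step needs the reverse domination, $\tilde\sigma_n\ge 2^{-D_{\mathrm H}^{\e}-O(\sqrt{n\log(1/\e)})}\rho^{\otimes n}$ on most of the mass. There is also no reason for $\Tr[\mathcal{B}^\dagger(T)\rho^{\otimes n}]\ge 1-O(\sqrt{\e})$, since $\mathcal{B}(\rho^{\otimes n})$ contains $\Theta(\sqrt{n\log(1/\e)})$ copies of $\tau$.

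\emph{(i) cannot hold uniformly over free $\sigma_n$, as you ask in your last paragraph.} A blurred free state far from $\rho^{\otimes n}$, for example a blurred product pure state, need not cover the $\rho$-typical subspace at all. The smallness of $D_{\mathrm H}^{\e}$ has to enter, and the right way is through its dual, $D_{\mathrm H}^{\e}\ge D_{\max}^{\sqrt{1-\e}}+\log\frac1{1-\e}$. This gives a state $\rho_n'\le 2^{\lambda}\sigma_n$ whose fidelity with $\rho^{\otimes n}$ is only $\sqrt{\e}=2^{-sn/2}$. The real task is then to show that blurring amplifies this exponentially small overlap: $\Tr\big(\rho^{\otimes n}-2^{O(\sqrt{s}\,n)}\mathcal{B}(\rho_n')\big)_+\le 2\e$.

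\emph{The classical mechanism does not transfer.} Filling type classes and tracking eigenvalues in the Schur basis is the classical picture, and it is exactly what breaks quantumly. $\rho_n'$ does not commute with $\rho^{\otimes n}$, and even after Schur--Weyl decomposition it acts as an arbitrary operator on each irrep, so there are no `types' of it to spread.

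The paper closes this gap with three ingredients your plan lacks.
\begin{itemize}
\item \emph{Purification.} By Brand\~ao--Plenio Lemma~III.4, $\rho_n'$ has a purification $\ket{\Psi}\in\Sym^n(\HH\otimes\mathcal{R})$ with $\abs{\braket{\theta^{\otimes n}}{\Psi}}\ge 2^{-sn/2}$. The target becomes the rank-one projector onto a product vector, and non-commutativity disappears.
\item \emph{A variational (Tikhonov) bound.} Let $K_x$ be the Kraus operators of the maximally mixed blurring channel and $K(c)=\sum_x c_xK_x$. Whenever $\abs{\braket{v}{\Psi}}^2\ge\e$ and $\norm{c}^2<M\e$, one has $\Tr\big(\proj{v}-M\sum_xK_x\proj{\Psi}K_x^\dagger\big)_+\le\norm{K(c)-\proj{v}}_\infty^2/(\e-\norm{c}^2/M)$. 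So it suffices to find a $K(c)$ that fixes $\ket{\theta^{\otimes n}}$, satisfies $\norm{K(c)-\proj{\theta^{\otimes n}}}_\infty\le 2^{-sn}$, and has $\norm{c}^2\le 2^{O(\sqrt{s}\,n)}$.
\item \emph{An explicit construction of $K(c)$.} In a Dicke basis adapted to $\ket{\theta}$, the diagonal Kraus operators act through hypergeometric weights. Their combinations therefore realise any polynomial of degree at most $k/2$ in the number $t$ of sites orthogonal to $\ket{\theta}$. A Chebyshev-type discrete delta with $q(0)=1$ and $\abs{q(t)}\le 2e^{-\kappa r^2/n}$ for $1\le t\le n$, of degree $r=\Theta(\sqrt{s}\,n)$, is expanded via the Hahn-polynomial singular value decomposition of the hypergeometric transition matrix. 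This yields the required coefficient bound.
\end{itemize}
Only after this does one trace out $\mathcal{R}$, replace maximally mixed blurring by $\tau$-blurring at cost $c_\tau^{-k}$, and apply the substate construction and continuity of $D(\cdot\|\FF_n)$. Your instinct about hypergeometric smoothing points the right way, but without the dual $D_{\max}$ state, the purification and the Tikhonov step, it cannot produce the operator statement the proof needs.
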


The above result is an achievability statement: namely, it establishes the existence of measurements that detect quantum resources very effectively, i.e.\ with stringent limits on the associated error probabilities. More precisely, for a given type I (false positive) error probability threshold $\e$, within the above range, \cref{eq:one_shot_GQSL} guarantees that we can find a measurement detecting quantum resources with type II (false negative) error probability $\beta(n) \leq 2^{-D(\rho^{\otimes n}\|\FF_n)}\,\!\cdot\,\! 2^{An\sqrt\e + B\sqrt{n\log(1/\e)}}$. In writing out this expression, we distinguished two contributions. (i)~The first factor determines the dominant scaling of the type II error probability, governed by the Stein exponent: indeed, for large $n$ we have $D(\rho^{\otimes n}\|\FF_n) \approx n D^\infty(\rho\|\FF) = n\,\stein(\rho\|\FF)$, thus recovering the asymptotic version of the GQSL. (ii)~The second factor is a penalty term, which quantifies precisely the difference between the $n$-copy performance and the asymptotic benchmark. 

By considering different regimes in our key estimate, \cref{eq:one_shot_GQSL}, we can now successfully answer questions~(a)--(c) in the introduction. We start from question~(c), which has an affirmative answer:

\begin{theorem}[(Solution to the limit commutation problem)]
\label{thm:main-limit-commutation}
In a quantum resource theory that obeys all Brand\~ao--Plenio axioms, question~(c) in the Introduction has an affirmative answer. Equivalently:
\begin{enumerate}[(i)]
\item for all states $\rho$ and all $r < D^\infty(\rho\|\FF)$, where the right-hand side is defined by~\eqref{eq:regularised_relative_entropy_resource}, we can find a sufficiently small $s>0$ such that $\rel{\beta_{2^{-sn}}}{\rho^{\otimes n}}{\FF_n} \leq 2^{-rn}$ for all sufficiently large $n$;
\item the identity in~\eqref{eq:limit_commutation_problem} holds: $\lim_{\alpha\to 1^-} \lim_{n\to\infty} \frac{1}{n} \rel{D_\alpha}{\rho^{\otimes n}}{\FF_n} = D^\infty(\rho\|\FF)$.
\end{enumerate}
\end{theorem}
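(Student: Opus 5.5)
The plan is to derive both parts of \cref{thm:main-limit-commutation} from the one-shot bound of \cref{thm:main-one-shot-gqsl}, with $\e$ taken exponentially small. The restriction $\e\ge 2^{-n/240}$ in \cref{thm:main-one-shot-gqsl} is exactly what allows this.

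For part~(i), fix $r<D^\infty(\rho\|\FF)$ and set $\eta \defeq D^\infty(\rho\|\FF)-r>0$. Choose $s\in(0,1/240]$ small enough that $B\sqrt{s}\le \eta/3$. Then put $\e=2^{-sn}$. For $n\ge N$ large enough that $2^{-sn}\le\e_0$, this $\e$ lies in the admissible range, and \eqref{eq:one_shot_GQSL} gives
\[
-\log\rel{\beta_{2^{-sn}}}{\rho^{\otimes n}}{\FF_n} \ge \rel{D}{\rho^{\otimes n}}{\FF_n} - A n 2^{-sn/2} - B n\sqrt{s}\, .
\]
By Axiom~4, $\rel{D}{\rho^{\otimes n}}{\FF_n}$ is subadditive, so Fekete's lemma gives $\frac1n \rel{D}{\rho^{\otimes n}}{\FF_n}\ge D^\infty(\rho\|\FF)$ for every $n$. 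Indeed the limit equals the infimum, so no approximation error needs controlling. For large $n$ we also have $A2^{-sn/2}\le\eta/3$. Combining these, $-\log\beta\ge n(D^\infty-2\eta/3)\ge rn$, which proves~(i).

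For part~(ii), the target is $\lim_{\alpha\to1^-}\liminf_n \frac1n \rel{D_\alpha}{\rho^{\otimes n}}{\FF_n}\ge D^\infty(\rho\|\FF)$. The reverse inequality is immediate, since $D_\alpha\le D$ for $\alpha<1$ and $D_\alpha$ is monotone in $\alpha$; so is the existence of the limits, which follows from super/subadditivity via Axiom~4. To get the lower bound I would use the converse half of the hypothesis-testing/Rényi relation, in the form of the standard one-shot inequality valid for any test and any $\sigma$:
\[
-\log\beta_\e(\rho\|\sigma)\le D_\alpha(\rho\|\sigma)+\tfrac{\alpha}{1-\alpha}\log\tfrac{1}{1-\e}\, .
\]
This inequality is for type~I error $\e$ bounded away from $1$; we instead want small $\e=2^{-sn}$, with the type~II exponent controlled through a penalty $\frac{\alpha}{1-\alpha}s$. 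The appropriate tool is the Nagaoka/Audenaert-type bound $\e^{\alpha}\beta^{1-\alpha}\ge$ a quantity related to $2^{-(1-\alpha)D_\alpha}$. Applied to the free minimiser, via a minimax exchange over $\FF_n$ using convexity (Axiom~1) and the joint properties of $\Tr\rho^\alpha\sigma^{1-\alpha}$, it gives
\[
\tfrac1n\rel{D_\alpha}{\rho^{\otimes n}}{\FF_n}\ge -\tfrac1n\log\rel{\beta_{2^{-sn}}}{\rho^{\otimes n}}{\FF_n}-\tfrac{\alpha}{1-\alpha}s - o(1)\, .
\]
This is essentially the Fang--Hayashi formula for $r(s)$ quoted above, which I would invoke directly. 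Feeding in part~(i), $r(s)\ge D^\infty-\eta$ for suitable small $s$, so $\sup_{\alpha}\{\lim_n\frac1n \rel{D_\alpha}{\rho^{\otimes n}}{\FF_n}-\frac{\alpha}{1-\alpha}s\}\ge D^\infty-\eta$. Write $f(\alpha)\defeq\lim_n \frac1n \rel{D_\alpha}{\rho^{\otimes n}}{\FF_n}$, which is nondecreasing in $\alpha$ and bounded by $D^\infty$. Then for each small $s$ there is $\alpha_s$ with $f(\alpha_s)\ge D^\infty-\eta+\frac{\alpha_s}{1-\alpha_s}s$. Taking first $s\to0$ and then $\eta\to0$ gives $\lim_{\alpha\to1^-}f(\alpha)\ge D^\infty$, since $f$ is monotone and we can also allow $\alpha_s\to1$ or not.

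The main obstacle is the regime of $\e$ in part~(i): the argument hinges on the one-shot theorem remaining valid for exponentially small $\e$ with a penalty of only $\sqrt{s}$ per copy. Once that is available, everything else is bookkeeping. A secondary subtlety is checking that the Fang--Hayashi variational formula, or the converse inequality, applies in the needed direction with the minimisation over $\FF_n$. If needed, I would reprove that converse bound directly via the data-processing inequality for $D_\alpha$ applied to the two-outcome test.
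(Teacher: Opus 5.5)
Part (i) is correct and follows the paper's argument. With $\e=2^{-sn}$ the one-shot penalty becomes $An2^{-sn/2}+Bn\sqrt{s}$, and Fekete's lemma gives $\frac1n D(\rho^{\otimes n}\|\FF_n)\ge D^\infty(\rho\|\FF)$ for every $n$.

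The gap is in part (ii). Your displayed inequality
\[
\tfrac1n D_\alpha(\rho^{\otimes n}\|\FF_n)\ \ge\ -\tfrac1n\log\beta_{2^{-sn}}(\rho^{\otimes n}\|\FF_n)-\tfrac{\alpha}{1-\alpha}\,s-o(1)
\]
cannot hold for each fixed $\alpha<1$. Since $\beta_{2^{-sn}}$ only decreases as $s$ decreases, you could combine it with part (i), let $n\to\infty$ and then $s\to0^+$. That would give $D^\infty_\alpha(\rho\|\FF)\ge D^\infty(\rho\|\FF)$ for every $\alpha\in(0,1)$. This already fails for the singleton family $\FF_n=\{\sigma^{\otimes n}\}$ with $\sigma$ full rank, which obeys all Brand\~ao--Plenio axioms; there it would force $D_\alpha(\rho\|\sigma)\ge D(\rho\|\sigma)$ for all $\alpha<1$.

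The tools you name do not produce this inequality. Data processing under $\{T,\1-T\}$ bounds $\Tr\rho^\alpha\sigma^{1-\alpha}$ only by the sum $\beta^{1-\alpha}+\e^{\alpha}$. The one-shot converse with a $\log\frac1{1-\e}$ correction is a statement about $\alpha>1$. No minimax exchange is needed either, because the test achieving $\beta_\e(\rho^{\otimes n}\|\FF_n)=\inf_T\sup_{\sigma_n}\Tr\sigma_nT$ is already uniform over $\FF_n$.

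Invoking the Fang--Hayashi formula as a black box would rescue the logic, since afterwards you only use its $\sup_\alpha$ consequence. But you would be importing its converse half. You would also need to justify replacing their $\lim_n\sup_\alpha$ by your $\sup_\alpha\lim_n$. This is possible: monotonicity in $\alpha$ and $\frac1nD_\alpha(\rho^{\otimes n}\|\FF_n)\le\log(d/c_\tau)$ confine near-optimal $\alpha$ below some $\alpha^*(s)<1$ independent of $n$.

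The missing step is to use what binary data processing actually gives, which is exactly the paper's proof. Part (i) supplies tests $T_n$ with $\Tr\rho^{\otimes n}(\1-T_n)\le2^{-sn}$ and $\Tr\sigma_nT_n\le2^{-rn}$ for all $\sigma_n\in\FF_n$ at once. So for $0<\alpha<1$,
\[
\Tr\bigl[(\rho^{\otimes n})^\alpha\sigma_n^{1-\alpha}\bigr]\le 2^{-(1-\alpha)rn}+2^{-\alpha sn}
\quad\Longrightarrow\quad
\tfrac1n D_\alpha(\rho^{\otimes n}\|\FF_n)\ge\min\{r,\tfrac{\alpha s}{1-\alpha}\}-\tfrac{1}{(1-\alpha)n}.
\]
Let $n\to\infty$ and then $\alpha\to1^-$ at fixed $s$; this makes $\frac{\alpha s}{1-\alpha}$ exceed $r$. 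Hence $\lim_{\alpha\to1^-}D^\infty_\alpha(\rho\|\FF)\ge r$ for every $r<D^\infty(\rho\|\FF)$. Together with your correct upper bound $D_\alpha\le D$, this proves (ii) without Fang--Hayashi. The penalty $\frac{\alpha}{1-\alpha}s$ never has to be subtracted from the exponent; it only has to dominate $r$, which it does automatically as $\alpha\to1^-$.
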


The above result resolves the main open problem from~\cite{FangHayashi2025}, showing that it is possible to recover the asymptotic performance of quantum resource testing established by the GQSL of~\cite{Hayashi-Stein, LamiGQSL} from the error exponent picture of~\cite{FangHayashi2025}. This effectively makes the two frameworks click together perfectly, finally yielding a unified, coherent picture of quantum resource testing.

Theorem~\ref{thm:main-limit-commutation} can be proved by looking at \cref{eq:one_shot_GQSL} in the regime where $\e = 2^{-sn}$ is exponentially suppressed. By considering instead a fixed $\e$, we can now answer questions~(a)--(b):

\begin{theorem}[(Sample complexity of quantum resource testing)]
\label{thm:main-sample-complexity}
In a quantum resource theory that obeys all Brand\~ao--Plenio axioms, and for all fixed 
$\e>0$, the sample complexity of resource testing, defined by~\eqref{eq:sample-complexity}, satisfies
\bb
N_{\e,\delta}(\rho\|\FF) = O\left(\frac{\log(1/\delta)}{D^\infty(\rho\|\FF)}\right) \qquad (\delta\to 0^+)\, .
\ee

Hence, to distill $k$ ebits with fixed error using non-entangling operations one needs $O\left(k/E(\rho)\right)$ 
copies of a bipartite state $\rho = \rho_{AB}$, where $E(\rho) = D^\infty(\rho\,\|\,\sep)$ is the regularised relative entropy of entanglement.
\end{theorem}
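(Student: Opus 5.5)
The plan is to apply Theorem~\ref{thm:main-one-shot-gqsl} with $\e$ fixed. We may assume $D^\infty(\rho\|\FF)>0$, since otherwise the claimed bound is vacuous.

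First we reduce to small $\e$. The quantity $\beta_\e$ is nonincreasing in $\e$, so $N_{\e,\delta}\le N_{\e',\delta}$ whenever $\e'\le\e$. Hence it suffices to treat some fixed $\e'\le\min\{\e,\e_0\}$. We choose $\e'$ small enough that $A\sqrt{\e'}\le \tfrac14 D^\infty(\rho\|\FF)$. This choice is legitimate because the constant $A$ is independent of $\rho$. For $n$ large enough we also have $2^{-n/240}\le\e'$, so Theorem~\ref{thm:main-one-shot-gqsl} applies with $\e'$.

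Next we lower-bound $\rel{D}{\rho^{\otimes n}}{\FF_n}$ by a linear function of $n$. Axiom~4 gives subadditivity, $\rel{D}{\rho^{\otimes (n+m)}}{\FF_{n+m}}\le \rel{D}{\rho^{\otimes n}}{\FF_n}+\rel{D}{\rho^{\otimes m}}{\FF_m}$. By Fekete's lemma, the limit defining $D^\infty$ in \eqref{eq:regularised_relative_entropy_resource} is then an infimum, so $\rel{D}{\rho^{\otimes n}}{\FF_n}\ge nD^\infty(\rho\|\FF)$ for every $n$. This is the key observation, and no error term is needed here.

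We now plug this into \eqref{eq:one_shot_GQSL}:
\[
-\log\rel{\beta_{\e'}}{\rho^{\otimes n}}{\FF_n}\ge n D^\infty(\rho\|\FF)-\tfrac14 nD^\infty(\rho\|\FF)-B\sqrt{n\log(1/\e')}.
\]
For $n\ge n_1 \defeq 16B^2\log(1/\e')/D^\infty(\rho\|\FF)^2$, the square-root term is at most $\tfrac14 nD^\infty$. The right-hand side is therefore at least $\tfrac12 nD^\infty(\rho\|\FF)$ for all $n\ge \max\{N,240\log(1/\e'),n_1\}$, which is a constant independent of $\delta$.

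It follows that $\beta_{\e'}\le\delta$ as soon as $n\ge \max\{\text{const},\,2\log(1/\delta)/D^\infty(\rho\|\FF)\}$. As $\delta\to0^+$, the second term dominates, which gives $N_{\e,\delta}\le 2\log(1/\delta)/D^\infty(\rho\|\FF)+O(1)=O\bigl(\log(1/\delta)/D^\infty(\rho\|\FF)\bigr)$.

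For the entanglement statement, we use the Brand\~ao--Plenio equivalence. Distilling $k$ ebits, i.e.\ dimension $d=2^k$, with error $\e$ from $\rho^{\otimes n}$ under NE operations is possible iff $\beta_\e(\rho^{\otimes n}\|\sep)\le 2^{-k}$. Taking $\delta=2^{-k}$ in the bound above gives $n=O(k/E(\rho))$. The separable sets satisfy all Brand\~ao--Plenio axioms, so the theorem applies to them.

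The main obstacle is the bookkeeping in the first step: we must check that $\e$ can be shrunk while keeping the constants $A,B,N$ of Theorem~\ref{thm:main-one-shot-gqsl} uniform. Once that is settled, the remaining steps are elementary.
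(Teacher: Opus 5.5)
Your proof is correct and takes the same route as the paper's. You shrink the type-I threshold to a fixed $\tilde\varepsilon\le\varepsilon$ small enough that the $\sqrt{\tilde\varepsilon}$ penalty is at most $\frac14 D^\infty(\rho\|\FF)$, then apply the one-shot bound together with $D(\rho^{\otimes n}\|\FF_n)\ge nD^\infty(\rho\|\FF)$ from Fekete's lemma, and absorb the $\sqrt{n\log(1/\tilde\varepsilon)}$ term once $n$ exceeds a $\delta$-independent threshold. The only cosmetic difference is that the paper uses the explicit constants of the supplementary version and folds the condition $\log(1/\tilde\varepsilon)\le n/240$ into its condition on $n$ via $D^\infty(\rho\|\FF)\le K_{d,\tau}/16$, whereas you use the abstract constants $A,B,N,\varepsilon_0$ and take a maximum.
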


The above theorem confirms our physical intuition: if the type II error in resource testing scales as $\sim 2^{-n D^\infty(\rho\|\FF)}$, then $n\sim \frac{\log(1/\delta)}{D^\infty(\rho\|\FF)}$ copies of the state should suffice to make it fall below a given threshold $\delta>0$. The problem with this intuitive reasoning, and the reason a rigorous analysis is needed to establish the above result, is that prior work only guarantees that the above exponential behaviour appears \emph{asymptotically}\,---\,it does not tell us at which point, i.e.\ for which $n$. Theorem~\ref{thm:main-sample-complexity}, instead, provides explicit estimates on $n$ and is thus much more suited for practical applications.

\section{Discussion}

We have established rigorous one-shot estimates for quantum resource testing and manipulation, pinpointing how many copies of a given quantum state are needed to achieve a prescribed error threshold. The asymptotic performance of these two tasks is captured by the generalised quantum Stein's lemma, and our main bound can therefore be regarded as a one-shot formulation of that result. Our work thus marks another step in the broader ongoing programme of characterising quantum information tasks directly in the finite blocklength regime, rather than solely through their asymptotic performance.

We have resolved the main open problem from~\cite{FangHayashi2025}, proving that the two error probabilities in quantum resource testing can be made to decrease exponentially at the same time. If the type~I error scales as $\sim 2^{-sn}$, with some small exponent $s>0$, the type~II error (Stein) exponent can be chosen to be as close to $D^\infty(\rho\|\FF)$ as desired, provided that $s$ is sufficiently small. Thus, exponential suppression of the type~I error entails no residual loss in the Stein exponent as $s \to 0^+$. This immediately settles the limit commutation problem in~\eqref{eq:limit_commutation_problem}, showing that the regularised Petz--R\'enyi relative entropy of resource is continuous from below at the Umegaki `critical point' $\alpha=1$. The error exponent formula of~\cite{FangHayashi2025} and the generalised quantum Stein's lemma of~\cite{Hayashi-Stein, LamiGQSL} therefore yield the same rate when the type~I exponent tends to zero. 
This shows that the examples of discontinuity 
constructed in~\cite{FangHayashi2025} are, in some sense, pathological: most physically motivated quantum resource theories, including entanglement and magic, obey the Brand\~ao--Plenio axioms; in those, due to Theorem~\ref{thm:main-limit-commutation}, no discontinuity can appear. 

Our results also show that, at a fixed type~I tolerance, the sample complexity for reducing the type~II probability below $\delta$ is of order $O\left(\log(1/\delta)/D^\infty(\rho\|\FF)\right)$ as $\delta \to 0^+$. This endows the regularised relative entropy of the resource, $D^\infty(\rho\|\FF)$, with a further operational interpretation, as the scale that determines the necessary experimental resources in asymmetric quantum resource testing. Through the Brand\~ao--Plenio correspondence, this also entails a finite-copy guarantee for resource conversion under asymptotically resource non-generating operations. In entanglement theory, for example, $k$ ebits can be produced with constant error under non-entangling operations from $O\left(k\big/D^\infty(\rho\|\sep)\right)$ copies of any input state $\rho$. This is the kind of information that asymptotic rates alone cannot provide, and it supplies a quantitative benchmark for assessing, e.g., entangled state production devices and magic state factories.

Our proof is based on a one-shot version of the quantum blurring technique put forth in~\cite{LamiGQSL}. `Blurring' refers to the controlled addition of a small amount of noise to the underlying quantum system, whose goal is to make it more regular and better behaved. The original technique, however, relied on an asymptotic limit to control the effect of blurring, and could therefore provide no finite-$n$ guarantees. We circumvent this fundamental conceptual obstacle by devising a new mathematical toolkit, mostly employing Tikhonov regularisation and polynomial approximation, to pinpoint the effects of blurring for all finite $n$.

Taken together, our results constitute a finite-resource formulation of the theory of quantum resource detection and manipulation, and therefore bring us one step closer to experimental and technological applications of these fundamental quantum information processing primitives.

\medskip
\textbf{Note added.} After we had derived the main result of this manuscript, we realised that qualitatively similar estimates to ours could be derived by further developing some techniques by Mazzola, Sutter, and Renner~\cite{MazzolaSutterRenner2026}. We present these derivations in the Supplementary Information (Section~\ref{sec:alternative_proof_MSR}). Remarkably, however, these alternative estimates are quantitatively less accurate than ours.

\medskip
\textbf{Acknowledgements.} We thank Marco Tomamichel for first raising the question of exponential convergence below rate for the GQSL, and Felix Leditzky, Filippo Girardi, and Kun Fang for useful discussions. The open problem that we resolve here was also discussed at the conference ``Beyond IID in Information Theory 14'' (Shenzhen, 22--26 June 2026). We acknowledge the assistance of ChatGPT and Codex (OpenAI) in developing proofs of several technical lemmas and in refining the exposition. More specifically, after the proofs of main theorems were reduced to a few critical estimates and we had identified some key lemmas, GPT 5.5 was used to search the literature for techniques and contributed nontrivial insights, which served as a basis for \cref{lem:clean-discrete-delta-approximant,lem:clean-hypergeometric-expansion,lem:clean-diagonal-kraus-approximant}. The central idea underlying this project---namely, to use the blurring channel with the maximally mixed state together with the variational form of Tikhonov regularisation---was conceived by the authors. All statements and proofs were derived and verified by the authors. LL acknowledges financial support from the European Union (ERC StG ETQO, Grant Agreement no.\ 101165230). DG acknowledges support by NWO grant NGF.1623.23.025 (“Qudits in theory and experiment”). DG is also grateful to the Scuola Normale Superiore for its hospitality during his visit (4--15 May 2026), when most of this work was carried out.

\bibliography{refs}

\let\addcontentsline\oldaddcontentsline

\clearpage
\fakepart{Supplementary Information}

\onecolumngrid
\begin{center}
\vspace*{\baselineskip}
{\textbf{ {\Large Supplementary Information:} \\[1.5ex] {\large Sample complexity of quantum resource testing via one-shot quantum blurring} }}
\end{center}

\renewcommand{\thesection}{S.\Roman{section}}
\renewcommand{\theequation}{S\arabic{equation}}
\renewcommand{\thetheorem}{S\arabic{theorem}}
\renewcommand{\thelemma}{S\arabic{theorem}}
\renewcommand{\thecorollary}{S\arabic{theorem}}
\renewcommand{\theproposition}{S\arabic{theorem}}
\renewcommand{\thedefinition}{S\arabic{definition}}
\renewcommand{\theremark}{S\arabic{remark}}
\renewcommand{\thefigure}{S\arabic{figure}}
\setcounter{section}{0}
\makeatletter

\setcounter{secnumdepth}{2}

\tableofcontents

\section{The setup} \label{sec:setup}

\subsection{Quantum divergences}

All logarithms are base two unless explicitly denoted by $\ln$. Let $\HH$ be a finite-dimensional Hilbert space, with $d\defeq\dim\HH$. We will denote by $\D\of{\HH}$ the set of density operators on $\HH$, i.e.\ the set of positive semidefinite operators with trace one. For two states $\rho,\sigma$, their Umegaki \deff{relative entropy} is defined by~\cite{Umegaki1962, hiai1991proper}
\bb \label{eq:relative-entropy}
  D\of{\rho\|\sigma}
   \defeq 
  \begin{cases}
    \Tr\rho\of{\log\rho-\log\sigma},
      & \supp\rho\subseteq\supp\sigma,\\
    +\infty,
      & \text{otherwise}.
  \end{cases}
\ee
The \deff{max-relative entropy} and its purified-distance smoothing, called the \deff{smooth max-relative entropy}, are given by
\begin{align}
  D_{\max}\of{\rho\|\sigma}
  & \defeq 
  \inf\Set*{\lambda\in\R \given \rho\le2^\lambda\sigma},
  \label{eq:Dmax-definition}\\
  D_{\max}^{\varepsilon}\of{\rho\|\sigma}
  & \defeq 
  \inf_{\rho':\,P\of{\rho,\rho'}\le\varepsilon}
  D_{\max}\of{\rho'\|\sigma},
  \label{eq:smooth-Dmax-definition}
\end{align}
where $P\of{\omega,\tau}=\sqrt{1-F\of{\omega,\tau}^2}$ is the \deff{purified distance}, with $F\of{\omega,\tau} \defeq \left\| \sqrt{\omega} \sqrt{\tau} \right\|_1$ being the fidelity, and the infimum in~\eqref{eq:smooth-Dmax-definition} ranges over normalised states.

For a set of states $\FF \subseteq \D\of{\HH}$, some $\rho\in \D\of{\HH}$, and a general function $\mathbb{D}:\D\of{\HH}\times \D\of{\HH} \to \R \cup \{+\infty\}$, we use the convention
\bb
\mathbb{D}\of{\rho \| \FF} \defeq \inf_{\sigma\in \FF} \mathbb{D}\of{\rho \| \sigma}\, .
\ee

\subsection{One-shot quantum hypothesis testing}

We will now describe the task of quantum hypothesis testing in general terms. We start by looking at the most elementary case, that of discrimination between two simple hypotheses: given two states $\rho,\sigma \in \D\of{\HH}$ and a quantum system modelled by $\HH$ in an unknown state $\omega$, we need to guess between the two following hypotheses, given the promise that one of them holds true:
\begin{itemize}
\item $\omega = \rho$ (null hypothesis); and
\item $\omega = \sigma$ (alternative hypothesis).
\end{itemize}
To guess, we are allowed to measure the single copy of $\omega$ that we are given. We assume that we can implement any physically realisable quantum measurement on $\HH$. 

Two types of error are possible: a type~I error, which consists in guessing the alternative hypothesis when the null hypothesis holds, and a type~II error, which, vice versa, consists in guessing the null hypothesis when the alternative hypothesis holds. A key quantity for the quantitative analysis of this task is the minimum type~II error probability given a threshold $\e \in[0,1]$ on the type~I error probability. Its negative logarithm is the hypothesis testing relative entropy, given by~\cite{Buscemi2010}
\bb
\label{eq:DH-definition}
D_{\mathrm H}^{\e}\of{\rho\|\sigma} \defeq -\log\inf\Set*{ \Tr\sigma T \given 0\le T\le\1,\ \Tr\rho T\ge1-\e} .
\ee

The above setting is usually referred to as \emph{simple} hypothesis testing, due to the fact that the two hypotheses are modelled by one quantum state each. In several applications, however, and in this paper in particular, at least one of the two hypotheses is modelled by a \emph{set} of states. The corresponding discrimination task is therefore known as \emph{composite} hypothesis testing. In this paper, the null hypothesis will be taken to be simple, but the alternative hypothesis will be, in general, composite. The setting is therefore described by a state $\rho\in \D\of{\HH}$ and a set $\FF\subseteq \D\of{\HH}$. One is handed an unknown state $\omega\in\D\of{\HH}$ on the same system, and needs to make a measurement to guess between:
\begin{itemize}
\item $\omega = \rho$ (null hypothesis); and
\item $\omega \in \FF$ (alternative hypothesis).
\end{itemize}
In this paper, $\FF$ will typically represent the set of states that are `easy to prepare' in some operational setting modelled by a quantum resource theory~\cite{RT-review}. For this reason, we shall refer to $\FF$ as the set of \deff{free states}. For example, we can think of $\FF$ as the set of separable (i.e.\ unentangled) states on some bipartite quantum system --- more on that below.

In the above hypothesis testing task, the minimal type~II error probability for a given threshold $\e\in [0,1]$ on the type~I error probability is given by
\bb
\beta_\varepsilon\of{\rho\|\FF} \defeq \inf_{\substack{0\le T\le\1\\ \Tr\rho T \ge1-\varepsilon}} \sup_{\sigma \in \FF}\Tr\sigma T\, ,
\ee
where the inner supremum indicates that we consider the worst-case error over all $\sigma\in \FF$, adversarially chosen given the test $T$. Due to Sion's minimax theorem, if $\FF$ is a convex set then we can write (see, e.g.,~\cite[Lemma~31]{Fang2025})
\bb
\dhh{\e}\of{\rho\|\FF} = -\log \beta_\e\of{\rho\|\FF}\, .
\ee

\subsection{Asymptotic hypothesis testing}

We will be interested in the asymptotic behaviour of quantum hypothesis testing. Namely, we will look at \emph{sequences} of composite hypothesis testing problems. Given some $\rho\in \D\of{\HH}$ and a sequence $\FF = (\FF_n)_n$ of sets of free states $\FF_n\subseteq \D\of{\HH^{\otimes n}}$, the hypothesis testing problem at the $n^\text{th}$ level is guessing whether a state $\omega_n\in \D\of{\HH^{\otimes n}}$ satisfies
\begin{itemize}
\item $\omega_n = \rho^{\otimes n}$ (null hypothesis); or
\item $\omega_n \in \FF_n$ (alternative hypothesis).
\end{itemize}

The asymptotics of the error probabilities is described by the \deff{error exponents}. The simplest one is the \deff{Stein exponent}, which corresponds to the maximum $r\ge  0$ such that the type~II error probability for a fixed (and very small) type~I error probability threshold $\e>0$ can be made to decay as $\sim 2^{-rn}$ as $n\to\infty$. Formally, we set
\bb
\stein(\rho\|\FF) \defeq \lim_{\e\to 0^+} \liminf_{n\to\infty} \frac1n\, \rel{\dhh{\e}}{\rho^{\otimes n}}{\FF_n}\, .
\ee
We will however be interested in a more fine-grained understanding of the error behaviour. Specifically, we will be interested in controlling simultaneously the error exponents governing the decay of both types of error simultaneously. To this end, for some $s>0$ we define the \deff{reverse Hoeffding exponent} as the maximum $r\ge  0$ such that, for all $s>s'>0$, we can simultaneously achieve a decay $\sim 2^{-ns'}$ for the type~I error probability and a decay $\sim 2^{-nr}$ for the type~II error probability. Formally,
\bb
\revh_s(\rho\|\FF) \defeq \lim_{t \to 0^+} \liminf_{n\to\infty} \frac1n\, \rel{\dhh{2^{-n(s-t)}}}{\rho^{\otimes n}}{\FF_n}\, .
\ee

Note that $\revh_s(\rho\|\FF)$ is monotonically non-increasing in $s>0$, and 
\bb
\revh_s(\rho\|\FF) \leq \stein(\rho\|\FF) \qquad\forall\ s>0\, ,
\ee
due to the fact that $2^{-n(s-t)} \leq \e$ for all $\e>0$ and all sufficiently small $t>0$, provided that $n$ is sufficiently large. For many `non-pathological' sequences $\FF$, it also holds that
\bb
\lim_{s\to 0^+} \revh_s(\rho\|\FF) = \stein(\rho\|\FF)\, ;
\label{eq:reverse_hoeffding_convergence}
\ee
however, establishing the above equality without a closed-form expression for $\revh_s(\rho\|\FF)$ is, in general, difficult. One of the main contributions of this paper is to show that~\eqref{eq:reverse_hoeffding_convergence} holds for all sets of free states obeying some natural axioms, and in particular for the set of separable states.

The \deff{Hoeffding exponent} is the inverse function of $s\mapsto \revh_s(\rho\|\FF)$; it is defined by
\bb
\hoeff_r(\rho\|\FF) \defeq \sup \left\{ s>0:\ \revh_s(\rho\|\FF)\ge  r \right\} .
\ee

\subsection{Brand\~ao--Plenio axioms}

Physically meaningful sequences of free-state sets $\FF_n\subseteq\D\of{\HH^{\otimes n}}$ usually obey some basic compatibility assumptions, formalised in~\cite{BrandaoPlenio} and henceforth referred to as \deff{Brand\~ao--Plenio axioms}. They can be written as follows:
\begin{align}
\forall\ n\in \N,\ \FF_n &\text{ is closed and convex,} \label{ax:closed-convex} \\
\exists\ \tau\in\FF_1 :\ \ \tau&\ge c_\tau\,\frac{\1_{\HH}}{d} \ \ \text{for some }c_\tau\in(0,1]\, , \label{ax:free-full-rank} \\
\sigma_n\in\FF_n &\Longrightarrow \Tr_n \sigma_n\in\FF_{n-1}\, ,\label{ax:trace} \\
\sigma_n\in\FF_n,\ \sigma'_m\in\FF_m &\Longrightarrow \sigma_n\otimes\sigma'_m\in\FF_{n+m}\, , \label{ax:tensor} \\
\sigma_n\in \FF_n\, ,\ \pi\in S_n &\Longrightarrow U_\pi^{\vphantom{\dag}} \sigma_n U_\pi^\dag \in \FF_n\, . \label{ax:permutation-invariant}
\end{align}
Here, 
$\Tr_n$ in~\eqref{ax:trace} indicates the partial trace over the last subsystem. By \eqref{ax:tensor} and \eqref{ax:free-full-rank},
\begin{equation}\label{eq:free-full-rank-tensor-power}
  \tau^{\otimes n}\in\FF_n,
  \qquad
  \tau^{\otimes n}
  \ge
  c_\tau^n
  \frac{\1_{\HH^{\otimes n}}}{d^n}.
\end{equation}

For a fixed state $\rho\in\D\of{\HH}$, define
\begin{equation}\label{eq:D-infty}
  D^\infty\of{\rho\|\FF}
   \defeq 
  \lim_{n\to\infty}
  \frac1n\inf_{\sigma_n\in\FF_n}
  D\of{\rho^{\otimes n}\|\sigma_n},
\end{equation}
which exists by subadditivity~\cite{Fekete1923}.
It is finite because $\tau^{\otimes n}\in\FF_n$ and
\begin{equation}
  \frac1n
  \inf_{\sigma_n\in\FF_n}
  D\of{\rho^{\otimes n}\|\sigma_n}
  \le
  \frac1n
  D\of{\rho^{\otimes n}\|\tau^{\otimes n}}
  \le
  \log(d/c_\tau).
\end{equation}

For a test $T$, write
\begin{equation}
  \alpha_n\of{T}
  \defeq
  1-\Tr\rho^{\otimes n}T,
  \qquad
  \beta_n\of{T}
  \defeq
  \sup_{\sigma_n\in\FF_n}\Tr\sigma_nT.
\end{equation}
Thus $D_{\mathrm H}^{\varepsilon}$ is the negative logarithm of the smallest type-II error attainable under the constraint $\alpha_n\of{T}\le\varepsilon$.
The usual Stein regime fixes $\varepsilon>0$ before taking $n\to\infty$; in that fixed-error regime the asymptotic type-II exponent is independent of $\varepsilon$.
The strengthened statement below concerns a different axis: it allows the type-I constraint itself to be exponentially small, $\varepsilon_n=2^{-s n}$, and asserts that no exponent is lost when $s\to0$ after the asymptotic limit.

\begin{figure}[t]
  \centering
  \begin{tikzpicture}[>=stealth,x=6.4cm,y=4.0cm]
    \draw[->] (0,0) -- (1.02,0)
      node[below right,font=\small] {$s$};
    \draw[->] (0,0) -- (0,1.02)
      node[left,font=\small] {$r$};
     \node[below,font=\small] at (0.52,1.12)
       {type-I vs type-II exponents};

    \draw[densely dotted] (0,0.76) -- (1.00,0.76);
    \draw (-0.012,0.76) -- (0.012,0.76);
    \node[left,font=\small] at (0,0.76) {$D^\infty$};

    \draw[very thick,blue!70!black]
      plot[domain=0:0.98,samples=100,smooth]
      (\x,{0.76-0.48*sqrt(\x)});

    \draw[very thick,dashed,red!70!black]
      plot[domain=0:0.98,samples=100,smooth]
      (\x,{0.46-0.18*sqrt(\x)});
    \draw[red!70!black,fill=white,thick] (0,0.46) circle (1.7pt);
    \draw[densely dotted,red!70!black] (0,0.46) -- (0.14,0.46);
    \draw[<->,red!70!black] (0.14,0.46) -- (0.14,0.76)
      node[midway,above right,font=\small] {gap?};
  \end{tikzpicture}
  \caption{How \cref{thm:supp-limit-commutation} clarifies the tradeoff between type-I and type-II errors.
  The horizontal axis is the exponent $s$ in the type-I error requirement $\alpha_n\le2^{-s n}$.
  The vertical axis is the corresponding asymptotic type-II exponent.
  At $s=0$ one recovers the usual fixed-error Stein setting, whose exponent is $D^\infty$.
  The blue curve depicts the square-root profile $D^\infty-K_{d,\tau}\sqrt{s}$ established by our lower bound: forcing the type-I error to vanish exponentially may lower the type-II exponent for fixed $s>0$, but the loss disappears as $s\to0$.
  The dashed red curve shows the hypothetical scenario if our result did not hold: the ordinary fixed-error Stein lemma could still hold, while the corresponding type-II exponents approach a strictly smaller value.
  \cref{thm:supp-limit-commutation} rules out this gap quantitatively by giving the lower bound $D^\infty-K_{d,\tau}\sqrt{s}$ for sufficiently small $s>0$, with $K_{d,\tau}=16\of*{2\log d+\log(1/c_\tau)+1}$.
  Here $c_\tau$ is the constant in $\tau\ge c_\tau\1_{\HH}/d$.}
\label{fig:stein-region}
\end{figure}

\subsection{The blurring channel} \label{subsec:blurring-definition}

Let $\HH$ be a finite-dimensional Hilbert space, with $D=\dim\HH$, and fix $0\le k\le n$.
For a subset $J\subseteq\set{1,\ldots,n}$ with $\abs{J}=k$, let $\Tr_J$ denote the partial trace over the tensor factors with index in $J$. For a state $\omega\in \D\of{\HH}$, define the \deff{blurring channel}
\bb \label{eq:blurring-channel}
\BB_{n,k}^\omega\of{X} \defeq \binom nk^{-1}\sum_{\substack{J\subseteq\set{1,\ldots,n} \\ \abs{J}=k}} \left(I_{J^c} \otimes \big(\omega^{\otimes k} \Tr\big)_J\right) \of{X}\, ,
\ee
where the replacer channel acts on the sites in $J$, and on the sites in $J^c$ we act with the identity. We will typically fix $\omega$ to be either the free state $\tau\in \FF_1$ whose existence is guaranteed by~\eqref{ax:free-full-rank}, or the maximally mixed state $\1_\HH/D$. In the latter case, we omit the superscript:
\bb \label{eq:blurring-with-identity}
\BB_{n,k} \defeq \BB^{\1_\HH/D}_{n,k}\, .
\ee
Since $\tau\ge c_\tau\1_{\HH}/d$, every $X\ge0$ obeys
\bb \label{eq:tau-blurring-dominates-maximally-mixed}
\BB^{\tau}_{n,k}\of{X} \ge c_\tau^k\,\BB_{n,k}\of{X}\, .
\ee
Moreover,
\bb \label{eq:blurring-preserves-free}
\BB^\tau_{n,k}\of{\FF_n} \subseteq \FF_n\, ,
\ee
by the partial-trace, tensor-product, permutation-invariance, and convexity axioms.

\subsection{Symmetric subspace}

Let $\HH$ be a $D$-dimensional Hilbert space with basis $\ket{1},\ldots, \ket{D}$. For some positive integer $n$, the \deff{symmetric subspace} $\Sym^n\of{\HH}$ is defined as the set of all vectors on $\HH^{\otimes n}$ that are invariant under any permutation of the tensor factors. A natural basis of $\Sym^n\of{\HH}$ is the \deff{Dicke basis}; to define it, consider some `weight' $\alpha\in \N^D$, and set~\cite[Section~1]{Harrow-church}
\bb
\ket{\alpha} \coloneqq \binom{n}{\alpha}^{-1/2} \sum_{x^n\in \TT_n(\alpha)} \ket{x^n}\, , \qquad \binom{n}{\alpha} \coloneqq \frac{n!}{\prod_{x=1}^D \alpha_x!}\, ,
\label{eq:Dicke-basis}
\ee
where $\TT_n(\alpha)$ denotes the set of all sequences $x^n\in \{1,\ldots,D\}^n$ with \deff{type} $\alpha$, i.e.\ in which every symbol $x\in \{1,\ldots, D\}$ appears exactly $\alpha_x$ times.

\section{Main results: technical statements and proofs} \label{sec:main_results}

We first prove a one-shot lower bound on the hypothesis-testing relative entropy in terms of the relative entropy of resource at the same blocklength.
We then derive the strengthened generalised quantum Stein's lemma, including its consequences for regularised quantum R\'enyi divergences.
Finally, we turn the one-shot bound into explicit sample-complexity estimates for asymmetric testing.
The diagram below records the manuscript-specific proof dependencies.
\begin{center}
\begin{tikzpicture}[
  box/.style={
    draw=black!45,
    rounded corners=2pt,
    align=center,
    text width=2.45cm,
    minimum height=0.86cm,
    inner sep=3pt,
    font=\scriptsize
  },
  lemma/.style={box, fill=black!3},
  result/.style={box, fill=blue!7, draw=blue!45},
  arr/.style={->, semithick, draw=black!55, shorten >=2pt, shorten <=2pt}
]
  \node[lemma] (symk) at (-5.1,6.0)
    {\textbf{Lem.~\ref{lem:clean-symmetric-kraus}}\\Symmetric Kraus representation};
  \node[lemma] (deltaapprox) at (-2.3,6.0)
    {\textbf{Lem.~\ref{lem:clean-discrete-delta-approximant}}\\Discrete delta approximant};
  \node[lemma] (hyper) at (0.5,6.0)
    {\textbf{Lem.~\ref{lem:clean-hypergeometric-expansion}}\\Hypergeometric expansion};
  \node[lemma] (tikh) at (4.3,6.0)
    {\textbf{Lem.~\ref{lem:Tikhonov_regularisation}}\\Tikhonov regularisation};

  \node[lemma] (productapprox) at (-2.3,4.0)
    {\textbf{Lem.~\ref{lem:clean-diagonal-kraus-approximant}}\\Product-vector Kraus approximant};
  \node[lemma] (purif) at (0.8,4.0)
    {\textbf{Lem.~\ref{lem:clean-purification-reduction}}\\Purification reduction};
  \node[lemma] (kapprox) at (4.3,4.0)
    {\textbf{Lem.~\ref{lem:clean-kraus-approximant}}\\Kraus approximant};

  \node[result] (oneshot) at (0.8,2.0)
    {\textbf{Thm.~\ref{thm:supp-one-shot-gqsl}}\\One-shot Stein bound};

  \node[result] (stein) at (-1.8,0.0)
    {\textbf{Thm.~\ref{thm:supp-limit-commutation}}\\Strengthened GQSL};
  \node[result] (sample) at (3.4,0.0)
{\textbf{Thm.~\ref{thm:supp-asymmetric-sample-complexity}}\\ Asymmetric sample complexity};

  \draw[arr] (symk.south) -- (productapprox.north west);
  \draw[arr] (deltaapprox.south) -- (productapprox.north);
  \draw[arr] (hyper.south) -- (productapprox.north east);
  \draw[arr] (tikh.south) -- (kapprox.north);

  \draw[arr] (productapprox.south east) -- (oneshot.north west);
  \draw[arr] (purif.south) -- (oneshot.north);
  \draw[arr] (kapprox.south west) -- (oneshot.north east);

  \coordinate (split) at (0.8,1.0);
  \draw[arr] (oneshot.south) -- (split) -| (stein.north);
  \draw[arr] (split) -| (sample.north);
\end{tikzpicture}
\end{center}

\subsection{Hypothesis testing results}

We use the constant
\begin{equation}\label{eq:clean-K-constant}
  K_{d,\tau}
  \defeq
  16\of*{2\log d+\log(1/c_\tau)+1}.
\end{equation}
We also write $\kappa=\frac{15-2\pi}{32\sqrt{2}}$ for the universal constant in \cref{lem:clean-discrete-delta-approximant}.
We denote the binary entropy by
\begin{equation}
  h_2\of t
  \defeq
  -t\log t-\of{1-t}\log\of{1-t},
  \qquad
  0\le t\le1,
\end{equation}
with the convention $0\log0=0$.

\begin{theorem}[(One-shot formulation of the GQSL, extended version)] 
\label{thm:supp-one-shot-gqsl}
Let $\HH$ be a finite-dimensional Hilbert space, and let $\FF = (\FF_n)_{n\in \N}$ be a sequence of sets of free states $\FF_n\subseteq \D\of*{\HH^{\otimes n}}$ that satisfies the Brand\~ao--Plenio axioms~\eqref{ax:closed-convex}--\eqref{ax:permutation-invariant}. Let $\rho\in\D\of{\HH}$, $n\in\N^+$, and $\varepsilon\in(0,1)$ satisfy  $\of{d^2+2}^2 \le \log\frac1\varepsilon \le \frac{n}{240}$. Then,
\begin{equation}\label{eq:clean-one-shot-stein}
  \frac1n
  D_{\mathrm H}^{\varepsilon}
  \of{\rho^{\otimes n}\|\FF_n}
  \ge
  \frac1n
  D\of{\rho^{\otimes n}\|\FF_n}
  -
  f_{d,\tau}\of{\varepsilon,n},
\end{equation}
where
\begin{equation}\label{eq:clean-one-shot-penalty}
  f_{d,\tau}\of{\varepsilon,n}
  \defeq
  K_{d,\tau}
  \sqrt{\frac1n\log\frac1\varepsilon}
  +
  \sqrt{2\varepsilon}\of*{\log d+\log(1/c_\tau)}
  +
  \frac1n\of*{2+\log\frac1{1-2\varepsilon}}.
\end{equation}
\end{theorem}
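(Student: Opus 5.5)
Our plan is to adapt the blurring proof of the GQSL from~\cite{LamiGQSL} to a single blocklength $n$, using the dependency diagram above: Lemmas~\ref{lem:clean-symmetric-kraus}, \ref{lem:clean-discrete-delta-approximant}, \ref{lem:clean-hypergeometric-expansion} feed into Lemma~\ref{lem:clean-diagonal-kraus-approximant}; Lemma~\ref{lem:Tikhonov_regularisation} feeds into Lemma~\ref{lem:clean-kraus-approximant}; and these are combined with Lemma~\ref{lem:clean-purification-reduction}.

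\textbf{Step 1: reduction to a smoothed max-relative entropy.} We use the standard one-shot converse between the hypothesis testing and smooth max-relative entropies, applied to the convex set $\FF_n$ (minimax is allowed by~\eqref{ax:closed-convex}). Testing against all of $\FF_n$ with type-I error $\varepsilon$ is controlled from below by $D_{\max}^{\sqrt{2\varepsilon}}(\rho^{\otimes n}\|\FF_n)$-type quantities up to additive terms $\log\frac1{1-2\varepsilon}+O(1)$. This accounts for the $\frac1n(2+\log\frac1{1-2\varepsilon})$ part of $f_{d,\tau}$. It therefore suffices to show the following. Suppose some $\sigma\in\FF_n$ and a test $T$ with $\Tr\rho^{\otimes n}T\ge1-\varepsilon$ have $\Tr\sigma T$ too large, i.e.\ $\log\frac1{\Tr\sigma T}$ falls below $D(\rho^{\otimes n}\|\FF_n)-n f$. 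Then we must reach a contradiction by building a free state $\sigma'$ with $D(\rho^{\otimes n}\|\sigma')$ smaller than the minimum. Equivalently, we exhibit an explicit free state that is close to $\rho^{\otimes n}$ in relative entropy whenever testing is hard.

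\textbf{Step 2: blurring.} Let $\sigma_n\in\FF_n$ be the optimiser in the dual form of $D_{\mathrm H}^\varepsilon$. We may assume $\sigma_n$ is permutation invariant, by~\eqref{ax:permutation-invariant} and convexity. Consider $\sigma'_n=\BB^\tau_{n,k}(\sigma_n)\in\FF_n$ with $k\approx\sqrt{n\log(1/\varepsilon)}$; it is free by~\eqref{eq:blurring-preserves-free}. By~\eqref{eq:tau-blurring-dominates-maximally-mixed}, $\sigma'_n\ge c_\tau^k\BB_{n,k}(\sigma_n)$. This costs $k\log(1/c_\tau)$, which is one piece of $K_{d,\tau}\sqrt{n\log(1/\varepsilon)}$, and reduces the problem to the maximally mixed blurring channel.

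The core claim is an operator inequality: the blurred state $\BB_{n,k}(\sigma_n)$ dominates, up to a factor $2^{-O(k\log d)}$, the part of $\rho^{\otimes n}$ lying on typical type classes within the type-I region. We would obtain this from a Kraus representation of $\BB_{n,k}$ that is symmetric across type classes (Lemma~\ref{lem:clean-symmetric-kraus}). A discrete delta approximant on the type-shift variable (Lemma~\ref{lem:clean-discrete-delta-approximant}), expanded in hypergeometric polynomials (Lemma~\ref{lem:clean-hypergeometric-expansion}), should show that $\BB_{n,k}$ spreads weight across types at Hamming distance $\lesssim k$ with only polynomial-in-$k$ exponential loss. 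This is packaged as the product-vector Kraus approximant of Lemma~\ref{lem:clean-diagonal-kraus-approximant}. The term $2\log d$ in $K_{d,\tau}$ should arise here, via $d^2$ for the purification in Lemma~\ref{lem:clean-purification-reduction}, which lets us treat $\rho$ as pure on a doubled system.

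\textbf{Step 3: converting to relative entropy.} With the domination in hand, we bound $D(\rho^{\otimes n}\|\sigma'_n)$. On the high-probability part we get $\log\frac1{\Tr\sigma_nT}+O(k\log d+k\log(1/c_\tau))$. On the complement, which has weight $\le\varepsilon$, we use the crude bound $\sigma'_n\ge c_\tau^n d^{-n}\1$. This is where the $\sqrt{2\varepsilon}(\log d+\log(1/c_\tau))n$ penalty enters, via a continuity argument with fidelity/purified-distance smoothing. The step that splices the two parts is controlled variationally through Tikhonov regularisation: writing $\log$ via $\min_X$ of a regularised quadratic form (Lemma~\ref{lem:Tikhonov_regularisation}, Lemma~\ref{lem:clean-kraus-approximant}) lets us pass the operator inequality through the non-operator-monotone pieces.

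Finally we optimise $k$ to balance $k\log d$ against the blurring-range condition $k^2\gtrsim n\log(1/\varepsilon)$. The hypothesis $(d^2+2)^2\le\log\frac1\varepsilon\le n/240$ guarantees $k\le n$ and that the polynomial approximation degree exceeds $d^2+2$, which gives the constant $16$. The main obstacle is Step 2: the one-shot domination estimate for $\BB_{n,k}$ with explicit constants, replacing the asymptotic bosonic lifting.
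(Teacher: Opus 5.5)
\textbf{The gap is in Step~1.} You have misstated the standard comparison, and that misstatement hides the whole difficulty of the theorem. The one-shot inequality that actually holds, which the paper takes from \cite[Theorem~12]{RegulaLamiDatta}, is
\[
D_{\max}^{\sqrt{1-\e}}(\rho^{\otimes n}\|\FF_n)+\log\frac{1}{1-\e}\le D_{\mathrm H}^{\e}(\rho^{\otimes n}\|\FF_n).
\]
Its smoothing radius is $\sqrt{1-\e}$, not $\sqrt{2\e}$. It only provides a state $\rho_n'$ with $F(\rho_n',\rho^{\otimes n})\ge\sqrt{\e}=2^{-sn/2}$, where $s=\frac1n\log\frac1\e$, and $\rho_n'\le2^{\lambda}\sigma_n$ for some $\sigma_n\in\FF_n$ and $\lambda\le D_{\mathrm H}^{\e}(\rho^{\otimes n}\|\FF_n)$.

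A bound of the form $D_{\mathrm H}^{\e}\ge D_{\max}^{\sqrt{2\e}}-\log\frac{1}{1-2\e}-O(1)$ is false in general. Combine it with the continuity estimate $D_{\max}^{\sqrt{2\e}}(\rho^{\otimes n}\|\FF_n)\ge D(\rho^{\otimes n}\|\FF_n)-\sqrt{2\e}\,n\log(d/c_\tau)-2$. You would get $\frac1n D_{\mathrm H}^{2^{-sn}}(\rho^{\otimes n}\|\FF_n)\ge\frac1n D(\rho^{\otimes n}\|\FF_n)-o(1)$. For the singleton family $\{\sigma^{\otimes n}\}$ with $V(\rho\|\sigma)>0$, this contradicts the quantum Hoeffding bound and its $\sqrt{2\ln2\,V(\rho\|\sigma)\,s}$ loss.

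Upgrading the smoothing from $\sqrt{1-\e}$ to $\sqrt{2\e}$ is exactly what blurring is for, at cost $K_{d,\tau}\sqrt{s}\,n$. The $\log\frac{1}{1-2\e}$ term comes from the additive-defect substate construction at the end of that upgrade, not from the $D_{\mathrm H}$--$D_{\max}$ comparison. Your contrapositive formulation also has the quantifiers wrong. A single test $T$ with large $\Tr\sigma T$ implies nothing; you need a $\sigma_n$ that defeats every admissible test, and the usable form of that information is precisely the pair $(\rho_n',\sigma_n)$.

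\textbf{Step~2 therefore targets the wrong object.} What must be shown is not that $\BB_{n,k}(\sigma_n)$ dominates `typical type classes' of $\rho^{\otimes n}$. It is that blurring turns an exponentially small overlap into approximate domination. The paper does this as follows.
\begin{enumerate}
\item Purify $\rho$ to $\ket\theta$ and $\rho_n'$ to a symmetric $\ket\Psi$ with $\abs{\braket{\theta^{\otimes n}}{\Psi}}\ge2^{-sn/2}$ (Lemma~\ref{lem:clean-purification-reduction}).
\item In the Dicke basis adapted to $\ket\theta$, a combination $K(c)$ of the diagonal blurring Kraus operators acts on each Dicke vector as $q_r(j)$, where $j$ is the number of excitations orthogonal to $\ket\theta$. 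The discrete delta approximant then gives $\norm{K(c)-\proj{\theta^{\otimes n}}}_\infty\le2^{-sn}$ with $\log\norm{c}_2^2=O(\sqrt{s}\,n\log d)$ (Lemma~\ref{lem:clean-diagonal-kraus-approximant}).
\item Lemma~\ref{lem:clean-kraus-approximant} yields $\Tr(\proj{\theta^{\otimes n}}-M\,\widetilde{\BB}_{n,k}(\proj{\Psi}))_+\le2\e$ with $\log M=O(\sqrt{s}\,n)$, where $\widetilde{\BB}_{n,k}$ is the blurring channel on the doubled system.
\item Trace out the purification, and use $\rho_n'\le2^\lambda\sigma_n$ and $c_\tau^{k}\BB_{n,k}(X)\le\BB^\tau_{n,k}(X)$ for $X\ge0$. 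This gives $\Tr(\rho^{\otimes n}-2^{\lambda+K_{d,\tau}\sqrt{s}\,n}\BB^\tau_{n,k}(\sigma_n))_+\le2\e$.
\end{enumerate}
Only after this do the substate construction and the continuity step apply.

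Tikhonov regularisation is not used to write $\log$ variationally; the logarithm is operator monotone, so nothing needs to be `passed through'. Inside Lemma~\ref{lem:clean-kraus-approximant} it bounds the single positive eigenvalue of $\proj{v}-A_M$, where $A_M=M\sum_xK_x\proj{\Psi}K_x^\dagger$. It does so via $\bra{v}(A_M+\mu\1)^{-1}\ket{v}\le1$ and the ansatz $z\propto c$.

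Several of your choices do match the paper: $k\asymp\sqrt{n\log(1/\e)}$, the $k\log(1/c_\tau)$ cost of switching to $\tau$-blurring, and the origin of $2\log d$ in the doubled system. But without the overlap-to-domination step, your plan has no link between the hardness of testing and the relative entropy of resource.
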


This is the explicit version of \cref{thm:main-one-shot-gqsl} in the main text.
More precisely, the constants therein may be chosen as
\begin{equation*}
  \varepsilon_0
  =
  2^{-\of{d^2+2}^2},
  \qquad
  N
  =
  240\of{d^2+2}^2,
  \qquad
  A
  =
  \sqrt{2}\of*{\log d+\log(1/c_\tau)},
  \qquad
  B
  =
  K_{d,\tau}+1.
\end{equation*}
Indeed, if $n\ge N$ and $2^{-n/240}\le\varepsilon\le\varepsilon_0$, then $\of{d^2+2}^2\le\log(1/\varepsilon)\le n/240$, so \cref{thm:supp-one-shot-gqsl} applies.
Using $D_{\mathrm H}^{\varepsilon}=-\log\beta_{\varepsilon}$ and multiplying \eqref{eq:clean-one-shot-stein} by $n$, the second term in \eqref{eq:clean-one-shot-penalty} becomes $A n\sqrt{\varepsilon}$, while the first becomes $K_{d,\tau}\sqrt{n\log(1/\varepsilon)}$.
Moreover, $\varepsilon\le\varepsilon_0\le1/4$ gives $2+\log\frac1{1-2\varepsilon}\le3$, whereas $\sqrt{n\log(1/\varepsilon)}\ge d^2+2\ge3$.
Hence the remaining bounded term is at most $\sqrt{n\log(1/\varepsilon)}$ and can be combined with the first correction term, yielding \eqref{eq:one_shot_GQSL} with the stated value of $B$.

\begin{proof}
The case $d=1$ is trivial, so we henceforth assume that $d\ge2$. Define the type-I error exponent by
\begin{equation}
  s \defeq \frac1n\log\frac1\varepsilon.
\end{equation}
The assumed bounds on $\log(1/\varepsilon)$ imply
\begin{equation}
  \label{eq:bounds_s_n}
  0<s\le\frac1{240}, \qquad s n\ge\of{d^2+2}^2.
\end{equation}
Since $d\ge2$, we have $s n\ge 36$, and therefore $\varepsilon=2^{-s n}<1/2$.
The standard one-shot comparison~\cite[Theorem~12, Eq.~(96)]{RegulaLamiDatta} gives
\begin{equation}\label{eq:clean-DH-Dmax-comparison}
  D_{\max}^{\sqrt{1-\varepsilon}}
  \of{\rho^{\otimes n}\|\FF_n}
  +
  \log\frac1{1-\varepsilon}
  \le
  D_{\mathrm H}^{\varepsilon}
  \of{\rho^{\otimes n}\|\FF_n}.
\end{equation}
We first prove the smoothing estimate
\begin{equation}\label{eq:clean-one-shot-smooth-max-transfer}
  D_{\max}^{\sqrt{1-\varepsilon}}
  \of{\rho^{\otimes n}\|\FF_n}
  \ge
  D_{\max}^{\sqrt{2\varepsilon}}
  \of{\rho^{\otimes n}\|\FF_n}
  -
  K_{d,\tau}\sqrt{s}\,n
  -
  \log\frac1{1-2\varepsilon}.
\end{equation}
The minimum defining the smooth max-relative entropy is attained because the smoothing ball and $\FF_n$ are compact and the operator-order constraint is closed. Its value is finite because the free state $\tau^{\otimes n}$ has full rank.
Now, choose optimisers $\rho_n'$ and $\sigma_n\in\FF_n$ such that
\begin{equation} \label{eq:clean-one-shot-smoothed-order}
P\of{\rho^{\otimes n},\rho_n'} \le \sqrt{1-\varepsilon}, \qquad
\rho_n' \le 2^{D_{\max}^{\sqrt{1-\varepsilon}}\of{\rho^{\otimes n}\|\FF_n}} \sigma_n .
\end{equation}
Applying the permutation twirl to both states preserves the order relation, keeps $\sigma_n$ free, and cannot increase the purified distance from $\rho^{\otimes n}$. We may therefore assume that $\rho_n'$ and $\sigma_n$ are permutation invariant.

Now, let $\ket\theta\in\HH\otimes\mathcal R$ be a purification of $\rho$, where $\mathcal R\simeq\HH$. The first inequality in \eqref{eq:clean-one-shot-smoothed-order} implies that
\begin{equation}
F\of{\rho^{\otimes n},\rho_n'} \ge \sqrt\varepsilon = 2^{-s n/2}.
\end{equation}
Hence, \cref{lem:clean-purification-reduction} allows us to construct a symmetric purification $\ket\Psi\in\Sym^n\of{\HH\otimes\mathcal R}$ of $\rho_n'$ satisfying
\begin{equation}
\abs*{\braket{\theta^{\otimes n}}{\Psi}} \ge 2^{-s n/2}.
\end{equation}

We next convert this overlap into an operator inequality by blurring a suitable number of tensor factors. The one-site space $\HH\otimes\mathcal R$ has dimension $d^2$. Choose the number of blurred tensor factors as
\begin{equation}
k = \floor*{\sqrt{240\,s}\,n}.
\end{equation}
Let $\widetilde{\BB}_{n,k}$ be the maximally mixed blurring channel on $\of{\HH\otimes\mathcal R}^{\otimes n}$.
The degree of the discrete delta approximant used in \cref{lem:clean-diagonal-kraus-approximant} is $\ceil*{\sqrt{2s\ln2/\kappa}\,n}$.
The inequality $s n\ge36$ verifies the condition $s n>1$ in \cref{lem:clean-diagonal-kraus-approximant}.
The bound \eqref{eq:clean-one-shot-sqrt-sn-bound}, together with the choices of $k$ and this degree, ensures that the latter is at most $k/2$.
Thus \cref{lem:clean-diagonal-kraus-approximant} applies with $\gamma=s$.
It yields a linear combination $K(c)$ of the Kraus operators such that
\begin{equation}
  K(c)\ket{\theta^{\otimes n}}
  =
  \ket{\theta^{\otimes n}},
  \qquad
  \norm*{K(c)-\proj{\theta^{\otimes n}}}_\infty^2
  \le
  2^{-2s n}.
\end{equation}
To apply \cref{lem:clean-kraus-approximant}, it remains to control the coefficient norm $\norm{c}_2$.
First, bounds in \eqref{eq:bounds_s_n} imply that
\begin{equation}
  \label{eq:clean-one-shot-sqrt-sn-bound}
  \sqrt{s}\,n = \frac{s n}{\sqrt{s}} \ge \sqrt{240}\of{d^2+2}^2.
\end{equation}
Then \eqref{eq:clean-one-shot-sqrt-sn-bound} implies
\begin{align}
  15\sqrt{s}\,n
  &\le
  k
  \le
  \sqrt{240s}\,n,
  &
  \ceil*{\sqrt{2s\ln2/\kappa}\,n}
  &\le
  4\sqrt{s}\,n,
  \notag\\
  k+1
  &\le
  17\sqrt{s}\,n,
  &
  k+d^2-2
  &\le
  17\sqrt{s}\,n.
\end{align}
Using $\log y\le\sqrt y$ for $y\ge16$ in \eqref{eq:clean-diagonal-kraus-approximant}, we obtain
\begin{align}
  \log\norm{c}_2^2
  &\le
  2k\log d
  +
  \log\of{k+1}
  +
  \of{d^2-2}\log\of{k+d^2-2}
  +
  \frac{2}{k\ln2}
  \ceil*{\sqrt{2s\ln2/\kappa}\,n}^{\!2}
  \notag\\
  &\le
  2\sqrt{240s}\,n\log d
  +
  2\sqrt{s}\,n
  +
  \frac{32}{15\ln2}\sqrt{s}\,n
  \notag\\
  &\le
  2\sqrt{240s}\,n\log d
  +
  8\sqrt{s}\,n.
  \label{eq:clean-one-shot-coefficient-bound}
\end{align}
The definition of $K_{d,\tau}$ now gives
\begin{align}
  &\log\norm{c}_2^2
  +
  \of*{
    s
    -
    K_{d,\tau}\sqrt{s}
    +
    \sqrt{240s}\log(1/c_\tau)
  }n
  \notag\\
  &\quad\le
  \Big[
    -\of{16-\sqrt{240}}
    \of{2\log d+\log(1/c_\tau)}
    -8
    +\sqrt{s}
  \Big]
  \sqrt{s}\,n
  \notag\\
  &\quad\le
  -1.
\end{align}
For the final inequality, $d\ge2$ and $s\le1/240$ make the expression in square brackets smaller than $-8$, while $s n\ge\of{d^2+2}^2$ implies $\sqrt{s}\,n>1$.
Equivalently,
\begin{equation}\label{eq:clean-one-shot-denominator}
  \norm{c}_2^2
  2^{\of*{
    s
    -
    K_{d,\tau}\sqrt{s}
    +
    \sqrt{240s}\log(1/c_\tau)
  }n}
  \le
  \frac12.
\end{equation}
This is precisely the denominator condition required by \cref{lem:clean-kraus-approximant}.
Applying that lemma with squared overlap $2^{-s n}$ and scale factor $2^{\of{K_{d,\tau}\sqrt{s}-\sqrt{240s}\log(1/c_\tau)}n}$ yields
\begin{align}
  \Tr\of*{
    \proj{\theta^{\otimes n}}
    -
    2^{\of*{
      K_{d,\tau}\sqrt{s}
      -
      \sqrt{240s}\log(1/c_\tau)
    }n}
    \widetilde{\BB}_{n,k}\of{\proj{\Psi}}
  }_+
  &\le
  \frac{2^{-2s n}}
       {
         2^{-s n}
         -
         2^{-\of*{
           K_{d,\tau}\sqrt{s}
           -
           \sqrt{240s}\log(1/c_\tau)
         }n}\norm{c}_2^2
       }\\
  &\le
  2^{1-s n}
  =
  2\varepsilon.
\end{align}
The second inequality follows directly from \eqref{eq:clean-one-shot-denominator}.
Although \cref{lem:clean-kraus-approximant} is stated for the compressed channel, the same estimate holds for the full blurring channel.
Indeed, the blurred output is permutation invariant, and therefore block diagonal with respect to the symmetric subspace and its orthogonal complement.
The first block is the compressed output, while the positive complementary block is subtracted inside the positive-part expression and can only decrease it.
Tracing out $\mathcal R^{\otimes n}$ and using data processing of $\Tr(\cdot)_+$ now gives
\begin{equation}
  \Tr\of*{
    \rho^{\otimes n}
    -
    2^{\of*{
      K_{d,\tau}\sqrt{s}
      -
      \sqrt{240s}\log(1/c_\tau)
    }n}
    \BB_{n,k}\of{\rho_n'}
  }_+
  \le
  2\varepsilon.
\end{equation}
We next replace maximally mixed blurring with $\tau$-blurring.
Since $k\le\sqrt{240s}\,n$ and $\tau\ge c_\tau\1/d$, \eqref{eq:tau-blurring-dominates-maximally-mixed} implies
\begin{equation}
  2^{\of*{
    K_{d,\tau}\sqrt{s}
    -
    \sqrt{240s}\log(1/c_\tau)
  }n}
  \BB_{n,k}\of{\rho_n'}
  \le
  2^{K_{d,\tau}\sqrt{s}\,n}
  \BB_{n,k}^{\tau}\of{\rho_n'}.
\end{equation}
The channel $\BB_{n,k}^{\tau}$ preserves the free set.
Combining this fact with \eqref{eq:clean-one-shot-smoothed-order} gives
\begin{equation}
  \Tr\of*{
    \rho^{\otimes n}
    -
    2^{D_{\max}^{\sqrt{1-\varepsilon}}\of{\rho^{\otimes n}\|\FF_n}+K_{d,\tau}\sqrt{s}\,n}
    \BB_{n,k}^{\tau}\of{\sigma_n}
  }_+
  \le
  2\varepsilon.
\end{equation}
The additive-defect form of the substate construction~\cite[Theorem~5]{RegulaLamiDatta} therefore provides a state at purified distance at most $\sqrt{2\varepsilon}$ from $\rho^{\otimes n}$ that is dominated by
\begin{equation}
  \frac{
    2^{D_{\max}^{\sqrt{1-\varepsilon}}\of{\rho^{\otimes n}\|\FF_n}+K_{d,\tau}\sqrt{s}\,n}
  }{1-2\varepsilon}
  \BB_{n,k}^{\tau}\of{\sigma_n}.
\end{equation}
Because $\BB_{n,k}^{\tau}\of{\sigma_n}$ is free by \eqref{eq:blurring-preserves-free}, this proves \eqref{eq:clean-one-shot-smooth-max-transfer}.

It remains to lower-bound $D_{\max}^{\sqrt{2\varepsilon}}$ by the relative entropy distance.
Let $\omega_n$ satisfy $P\of{\omega_n,\rho^{\otimes n}}\le\sqrt{2\varepsilon}$, and let $\sigma_n\in\FF_n$.
The Fuchs--van de Graaf inequality gives
\begin{equation}
  \frac12\norm*{\omega_n-\rho^{\otimes n}}_1
  \le
  \sqrt{2\varepsilon}.
\end{equation}
Since $\tau^{\otimes n}\in\FF_n$ and $\tau^{\otimes n}\ge c_\tau^n\1/d^n$, the standard mixing argument for the relative entropy distance~\cite[Lemma~7]{WinterContinuity} yields
\begin{equation}
  D_{\max}\of{\omega_n\|\sigma_n}
  \ge
  D\of{\omega_n\|\FF_n}
  \ge
  D\of{\rho^{\otimes n}\|\FF_n}
  -
  \sqrt{2\varepsilon}\,n\of*{\log d+\log(1/c_\tau)}
  -
  2.
\end{equation}
Here the entropy term in the continuity estimate is bounded by $2$.
Taking the infimum over $\omega_n$ and $\sigma_n$ gives
\begin{equation}
  D_{\max}^{\sqrt{2\varepsilon}}
  \of{\rho^{\otimes n}\|\FF_n}
  \ge
  D\of{\rho^{\otimes n}\|\FF_n}
  -
  \sqrt{2\varepsilon}\,n\of*{\log d+\log(1/c_\tau)}
  -
  2.
\end{equation}
Combining this inequality with \eqref{eq:clean-DH-Dmax-comparison} and \eqref{eq:clean-one-shot-smooth-max-transfer}, and discarding the nonnegative term $\log(1/(1-\varepsilon))$, yields
\begin{equation}
  D_{\mathrm H}^{\varepsilon}
  \of{\rho^{\otimes n}\|\FF_n}
  \ge
  D\of{\rho^{\otimes n}\|\FF_n}
  -
  K_{d,\tau}\sqrt{s}\,n
  -
  \sqrt{2\varepsilon}\,n\of*{\log d+\log(1/c_\tau)}
  -
  2
  -
  \log\frac1{1-2\varepsilon}.
\end{equation}
Substituting $s=n^{-1}\log(1/\varepsilon)$ and dividing by $n$ proves \eqref{eq:clean-one-shot-stein}.
\end{proof}

\begin{remark}[(Optimality of the square-root scaling)]
The square-root dependence on $s=n^{-1}\log(1/\varepsilon)$ in \cref{thm:supp-one-shot-gqsl} is optimal in general, although we do not claim that the constant $K_{d,\tau}$ is optimal.
Indeed, the Brand\~ao--Plenio axioms include the singleton i.i.d.\ family $\FF_n^\sigma=\set*{\sigma^{\otimes n}}$, where $\sigma$ is any full-rank state.
In this case the problem reduces to ordinary i.i.d.\ quantum hypothesis testing, whose exact error-exponent tradeoff is given, for $0<s<D\of{\sigma\|\rho}$, by the quantum Hoeffding bound~\cite{Hayashi2007Hoeffding,Nagaoka2006Converse,ANSV2008}:
\begin{equation*}
  \lim_{n\to\infty}
  \frac1n
  D_{\mathrm H}^{2^{-s n}}
  \of{\rho^{\otimes n}\|\sigma^{\otimes n}}
  =
  \sup_{\alpha\in(0,1)}
  \set*{
    D_\alpha\of{\rho\|\sigma}
    -
    \frac{\alpha}{1-\alpha}s
  }.
\end{equation*}
If the relative entropy variance $V\of{\rho\|\sigma}\defeq\Tr\rho\of*{\log\rho-\log\sigma}^2-D\of{\rho\|\sigma}^2$ is positive, expansion around $\alpha=1$ gives
\begin{equation*}
  \sup_{\alpha\in(0,1)}
  \set*{
    D_\alpha\of{\rho\|\sigma} - \frac{\alpha}{1-\alpha}s
  }
  = D\of{\rho\|\sigma} - \sqrt{2\ln2\,V\of{\rho\|\sigma}\,s} + O(s)
  \qquad
  \of{s\to0^+}.
\end{equation*}
Thus no bound valid uniformly over all resource theories satisfying the Brand\~ao--Plenio axioms can replace the $\sqrt{s}$ loss by $o\of{\sqrt{s}}$.
Equivalently, when $\varepsilon=2^{-s n}$, the correction of order $\sqrt{n\log(1/\varepsilon)}=n\sqrt{s}$ in \cref{thm:supp-one-shot-gqsl} has the optimal scaling.
\end{remark}

\begin{definition}
  For $\alpha\in(0,1)$, define the Petz R\'enyi divergence by
\begin{equation}
  D_{\alpha}\of{\omega\|\sigma}
  \defeq
  \frac1{\alpha-1}
  \log\Tr\of*{\omega^\alpha\sigma^{1-\alpha}}.
\end{equation}
For $\alpha\in(1/2,1)$, define the sandwiched R\'enyi divergence by
\begin{equation}
  \widetilde{D}_{\alpha}\of{\omega\|\sigma}
  \defeq
  \frac1{\alpha-1}
  \log\Tr\of*{
    \of*{\sigma^{\frac{1-\alpha}{2\alpha}}\omega\sigma^{\frac{1-\alpha}{2\alpha}}}^{\alpha}
  }.
\end{equation}
Define
\begin{equation}
  D_{\alpha}^{\infty}\of{\rho\|\FF}
  \defeq
  \lim_{n\to\infty}
  \frac1n
  \inf_{\sigma_n\in\FF_n}
  D_{\alpha}\of{\rho^{\otimes n}\|\sigma_n},
\end{equation}
and define $\widetilde{D}_{\alpha}^{\infty}\of{\rho\|\FF}$ analogously.
These limits exist by subadditivity, which follows from the tensor-product axiom and the additivity of both R\'enyi divergences on tensor products.
\end{definition}

\begin{theorem}[(Solution to the limit commutation problem, extended version)]
\label{thm:supp-limit-commutation}
Assume \eqref{ax:closed-convex}--\eqref{ax:permutation-invariant}.
For every state $\rho\in\D\of{\HH}$ and every $0<s\le1/240$,
\begin{equation}\label{eq:clean-exponential-stein-quantitative}
  \liminf_{n\to\infty}
  \frac1n
  D_{\mathrm H}^{2^{-s n}}
  \of{\rho^{\otimes n}\|\FF_n}
  \ge
  D^\infty\of{\rho\|\FF}
  -
  K_{d,\tau}\sqrt{s}.
\end{equation}
Consequently,
\begin{equation}\label{eq:clean-exponential-stein}
  \lim_{s\to0^+}
  \liminf_{n\to\infty}
  \frac1n
  D_{\mathrm H}^{2^{-s n}}
  \of{\rho^{\otimes n}\|\FF_n}
  =
  \mathrm{Stein}\of{\rho\|\FF}
  =
  D^\infty\of{\rho\|\FF}.
\end{equation}
Moreover,
\begin{equation}
  \lim_{\alpha\to1^-}
  D_{\alpha}^{\infty}\of{\rho\|\FF}
  =
  \lim_{\alpha\to1^-}
  \widetilde{D}_{\alpha}^{\infty}\of{\rho\|\FF}
  =
  D^\infty\of{\rho\|\FF}.
\end{equation}
\end{theorem}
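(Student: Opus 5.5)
The plan is to derive everything from \cref{thm:supp-one-shot-gqsl} in the regime $\varepsilon=2^{-sn}$, with $s$ fixed, and to supply the matching upper bounds with elementary converse arguments.

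\textbf{Step 1: the quantitative bound \eqref{eq:clean-exponential-stein-quantitative}.} Fix $0<s\le1/240$ and put $\varepsilon_n=2^{-sn}$, so that $\log(1/\varepsilon_n)=sn$. For all $n\ge(d^2+2)^2/s$ the hypothesis $(d^2+2)^2\le sn\le n/240$ of \cref{thm:supp-one-shot-gqsl} holds. The theorem then gives
\[
\tfrac1n D_{\mathrm H}^{2^{-sn}}(\rho^{\otimes n}\|\FF_n)\ge \tfrac1n D(\rho^{\otimes n}\|\FF_n)-K_{d,\tau}\sqrt{s}-\sqrt2\,2^{-sn/2}\bigl(\log d+\log(1/c_\tau)\bigr)-\tfrac1n\Bigl(2+\log\tfrac1{1-2\varepsilon_n}\Bigr).
\]
As $n\to\infty$ the last two terms vanish. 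The first term converges to $D^\infty(\rho\|\FF)$ by \eqref{eq:D-infty}. Taking $\liminf_{n\to\infty}$ therefore gives \eqref{eq:clean-exponential-stein-quantitative}.

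\textbf{Step 2: the identity \eqref{eq:clean-exponential-stein}.} The map $\varepsilon\mapsto D_{\mathrm H}^\varepsilon$ is nondecreasing. For every fixed $\varepsilon>0$ we have $2^{-sn}\le\varepsilon$ once $n$ is large, so the left-hand side of \eqref{eq:clean-exponential-stein} is at most $\liminf_n\frac1n D_{\mathrm H}^{\varepsilon}$. Letting $\varepsilon\to0^+$, it is at most $\mathrm{Stein}(\rho\|\FF)$.

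Next, $\mathrm{Stein}\le D^\infty$ follows from the standard weak converse. For any $\sigma_n\in\FF_n$, data processing of $D$ under the binary test channel gives
\[
D_{\mathrm H}^{\varepsilon}(\rho^{\otimes n}\|\FF_n)\le \frac{D(\rho^{\otimes n}\|\sigma_n)+1}{1-\varepsilon}.
\]
Dividing by $n$, letting $n\to\infty$ and then $\varepsilon\to0$ yields $\mathrm{Stein}\le D^\infty$.

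For the reverse direction, the quantity $\liminf_n\frac1n D_{\mathrm H}^{2^{-sn}}$ is monotone in $s$, so its limit as $s\to0^+$ exists. By Step 1 that limit is at least $D^\infty$. This closes the chain $D^\infty\le\lim_{s\to0^+}\liminf_n\frac1n D_{\mathrm H}^{2^{-sn}}\le\mathrm{Stein}\le D^\infty$.

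\textbf{Step 3: the R\'enyi statements — upper bounds.} For $\alpha\in(1/2,1)$ we have $\widetilde D_\alpha\le D_\alpha\le D$, by the Araki–Lieb–Thirring inequality and monotonicity in $\alpha$. Both $D_\alpha^\infty$ and $\widetilde D_\alpha^\infty$ are nondecreasing in $\alpha$, so the two limits as $\alpha\to1^-$ exist and are at most $D^\infty$.

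\textbf{Step 3: the R\'enyi statements — lower bound.} This is the main point. I would use a Hoeffding-type converse obtained from data processing. Let $Q_\alpha$ denote the Petz quantity $\Tr\omega^\alpha\sigma^{1-\alpha}$, or its sandwiched analogue for $\alpha\ge1/2$; both are non-decreasing under channels for $\alpha<1$. Take a test $T$ with $1-\Tr\rho^{\otimes n}T\le\varepsilon$ and any $\sigma_n\in\FF_n$, and set $\beta=\Tr\sigma_nT$. Applying the binary measurement channel gives
\[
Q_\alpha(\rho^{\otimes n}\|\sigma_n)\le \beta^{1-\alpha}+\varepsilon^{\alpha}.
\]

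Choose $\sigma_n$ to be a minimiser of $D_\alpha(\rho^{\otimes n}\|\cdot)$ over $\FF_n$. Comparing with $\tau^{\otimes n}$ shows $D_\alpha(\rho^{\otimes n}\|\sigma_n)\le n\log(d/c_\tau)$. Hence
\[
Q_\alpha\ge 2^{-(1-\alpha)n\log(d/c_\tau)}.
\]
Now fix $s$ and pick $\alpha$ close enough to $1$ that $(1-\alpha)\log(d/c_\tau)<\alpha s/2$. Then $\varepsilon^\alpha=2^{-\alpha sn}\le Q_\alpha/2$ for large $n$, so $\beta^{1-\alpha}\ge Q_\alpha/2$. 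This is equivalent to
\[
-\log\beta\le D_\alpha(\rho^{\otimes n}\|\sigma_n)+\tfrac{1}{1-\alpha}.
\]
Since the composite type-II error is at least $\Tr\sigma_nT$, optimising over $T$ gives
\[
\tfrac1n D_{\mathrm H}^{2^{-sn}}(\rho^{\otimes n}\|\FF_n)\le \tfrac1n D_\alpha(\rho^{\otimes n}\|\FF_n)+\tfrac{1}{n(1-\alpha)}.
\]
Combining this with Step 1 yields $D_\alpha^\infty\ge D^\infty-K_{d,\tau}\sqrt s$ for all $\alpha$ sufficiently close to $1$, depending on $s$. The identical argument with the sandwiched quantity gives the same bound for $\widetilde D^\infty_\alpha$. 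Sending $\alpha\to1^-$ and then $s\to0^+$ finishes the proof.

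The only delicate point is the choice of $\alpha$ relative to $s$, which ensures that the $\varepsilon^\alpha$ term is negligible uniformly in $n$. This is handled by the crude a priori bound $D_\alpha(\rho^{\otimes n}\|\FF_n)\le n\log(d/c_\tau)$, which comes from the full-rank free state $\tau$ of axiom \eqref{ax:free-full-rank}.
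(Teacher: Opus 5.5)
Your proposal is correct and follows the paper's proof essentially step by step: the quantitative bound comes from Theorem~\ref{thm:supp-one-shot-gqsl} at $\varepsilon=2^{-sn}$, the matching upper bounds come from the binary-test weak converse and from $D_\alpha,\widetilde D_\alpha\le D$, and the R\'enyi lower bound comes from data processing of $Q_\alpha$ under the binary measurement. The only difference is bookkeeping in that last step: you go through a converse $D_{\mathrm H}^{2^{-sn}}\le D_\alpha(\rho^{\otimes n}\|\FF_n)+\frac{1}{1-\alpha}$, which needs the minimiser and the a priori bound from $\tau^{\otimes n}$ to absorb the $\varepsilon^\alpha$ term, whereas the paper fixes a test with errors $2^{-sn}$ and $2^{-rn}$ and obtains $D_\alpha^\infty\ge\min\{r,\alpha s/(1-\alpha)\}$ directly for every $\sigma_n$, with no a priori bound needed.
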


\begin{proof}
We first prove the quantitative lower bound \eqref{eq:clean-exponential-stein-quantitative}.
For fixed $0<s\le1/240$, \cref{thm:supp-one-shot-gqsl} applies with $\varepsilon=2^{-s n}$ whenever $n\ge\of{d^2+2}^2/s$.
For this choice of $\varepsilon$, the one-shot correction satisfies
\begin{equation}
  \lim_{n\to\infty}
  f_{d,\tau}\of{2^{-s n},n}
  =
  K_{d,\tau}\sqrt{s}.
\end{equation}
For every $n$, subadditivity and Fekete's lemma give
\begin{equation}
  \frac1nD\of{\rho^{\otimes n}\|\FF_n}
  \ge
  D^\infty\of{\rho\|\FF}.
\end{equation}
Taking the limit inferior in \cref{thm:supp-one-shot-gqsl} proves \eqref{eq:clean-exponential-stein-quantitative}.

We next identify the limiting hypothesis-testing exponent.
Fix $0<\varepsilon\le1/2$, fix $\sigma_n\in\FF_n$, and let $T$ satisfy $\Tr\rho^{\otimes n}T\ge1-\varepsilon$.
Binary data processing for the relative entropy gives
\begin{equation}
  D\of{\rho^{\otimes n}\|\sigma_n}
  \ge
  \of{1-\varepsilon}\log\frac1{\Tr\sigma_nT}
  -
  h_2\of{\varepsilon}.
\end{equation}
Optimising over $T$ and $\sigma_n$ yields
\begin{equation}\label{eq:generalized-stein-upper-bound-proof}
  D_{\mathrm H}^{\varepsilon}
  \of{\rho^{\otimes n}\|\FF_n}
  \le
  \frac{
    D\of{\rho^{\otimes n}\|\FF_n}
    +
    h_2\of{\varepsilon}
  }{1-\varepsilon}.
\end{equation}
Setting $\varepsilon=2^{-s n}$, dividing by $n$, and letting $n\to\infty$ gives
\begin{equation}
  \limsup_{n\to\infty}
  \frac1n
  D_{\mathrm H}^{2^{-s n}}
  \of{\rho^{\otimes n}\|\FF_n}
  \le
  D^\infty\of{\rho\|\FF}.
\end{equation}
Together with \eqref{eq:clean-exponential-stein-quantitative}, this upper bound implies
\begin{equation}
  \lim_{s\to0^+}
  \liminf_{n\to\infty}
  \frac1n
  D_{\mathrm H}^{2^{-s n}}
  \of{\rho^{\otimes n}\|\FF_n}
  =
  D^\infty\of{\rho\|\FF}.
\end{equation}
For a fixed $0<\varepsilon\le1/2$, \eqref{eq:generalized-stein-upper-bound-proof} similarly gives
\begin{equation}
  \liminf_{n\to\infty}
  \frac1n
  D_{\mathrm H}^{\varepsilon}
  \of{\rho^{\otimes n}\|\FF_n}
  \le
  \frac{D^\infty\of{\rho\|\FF}}{1-\varepsilon}.
\end{equation}
Letting $\varepsilon\to0^+$ proves $\mathrm{Stein}\of{\rho\|\FF}\le D^\infty\of{\rho\|\FF}$.
Conversely, for every fixed $\varepsilon>0$ and $0<s\le1/240$, one has $2^{-s n}\le\varepsilon$ for all sufficiently large $n$.
Monotonicity of the hypothesis-testing divergence and \eqref{eq:clean-exponential-stein-quantitative} therefore imply
\begin{equation}
  \liminf_{n\to\infty}
  \frac1n
  D_{\mathrm H}^{\varepsilon}
  \of{\rho^{\otimes n}\|\FF_n}
  \ge
  D^\infty\of{\rho\|\FF}
  -
  K_{d,\tau}\sqrt{s}.
\end{equation}
Letting $s\to0^+$ gives the reverse inequality and completes the proof of \eqref{eq:clean-exponential-stein}.

It remains to prove the R\'enyi limits. For $0<\alpha<1$ in the Petz case and $1/2<\alpha<1$ in the sandwiched case, monotonicity in the R\'enyi parameter gives
\begin{equation}
  D_{\alpha}\of{\omega\|\sigma}
  \le
  D\of{\omega\|\sigma},
  \qquad
  \widetilde{D}_{\alpha}\of{\omega\|\sigma}
  \le
  D\of{\omega\|\sigma}.
\end{equation}
Minimising over $\sigma_n\in\FF_n$, dividing by $n$, and taking the limit inferior gives
\begin{equation}
  D_{\alpha}^{\infty}\of{\rho\|\FF}
  \le
  D^\infty\of{\rho\|\FF},
  \qquad
  \widetilde{D}_{\alpha}^{\infty}\of{\rho\|\FF}
  \le
  D^\infty\of{\rho\|\FF}.
\end{equation}
If $D^\infty\of{\rho\|\FF}=0$, these upper bounds and nonnegativity already prove both R\'enyi limits. Assume therefore that $D^\infty\of{\rho\|\FF}>0$, fix $0<r<D^\infty\of{\rho\|\FF}$, and choose $0<s\le1/240$ so small that
\begin{equation}
  r
  <
  D^\infty\of{\rho\|\FF}
  -
  K_{d,\tau}\sqrt{s}.
\end{equation}
By \eqref{eq:clean-exponential-stein-quantitative}, the strict inequality above ensures that $D_{\mathrm H}^{2^{-s n}}\of{\rho^{\otimes n}\|\FF_n}>r n$ for all sufficiently large $n$.
Hence there is a test $T_n$ satisfying
\begin{equation}\label{eq:clean-renyi-test-errors}
  \alpha_n\of{T_n}
  \le
  2^{-s n},
  \qquad
  \beta_n\of{T_n}
  \le
  2^{-r n}.
\end{equation}
Fix $\sigma_n\in\FF_n$ and apply data processing with respect to the binary measurement $\set*{T_n,\1-T_n}$.
For the Petz divergence, this gives
\begin{align}
  D_{\alpha}\of{\rho^{\otimes n}\|\sigma_n}
  &\ge
  \frac1{\alpha-1}
  \log\of*{
    \of*{1-\alpha_n\of{T_n}}^{\alpha}
    \of*{\Tr\sigma_nT_n}^{1-\alpha}
    +
    \alpha_n\of{T_n}^{\alpha}
    \of*{1-\Tr\sigma_nT_n}^{1-\alpha}
  }\\
  &\ge
  -\frac1{1-\alpha}
  \log\of*{
    2^{-(1-\alpha)r n}
    +
    2^{-\alpha s n}
  }.
\end{align}
For $\alpha>1/2$, the same bound holds for $\widetilde{D}_{\alpha}\of{\rho^{\otimes n}\|\sigma_n}$ because the Petz and sandwiched divergences coincide on classical distributions.
Dividing by $n$, taking the infimum over $\sigma_n$, and then taking the limit inferior in $n$ yields
\begin{equation}
  D_{\alpha}^{\infty}\of{\rho\|\FF}
  \ge
  \min\set*{r,\frac{\alpha s}{1-\alpha}},
  \qquad
  \widetilde{D}_{\alpha}^{\infty}\of{\rho\|\FF}
  \ge
  \min\set*{r,\frac{\alpha s}{1-\alpha}}.
\end{equation}
Letting $\alpha\to1^-$ gives a lower bound of $r$ for both regularised divergences.
Since $r$ can be chosen arbitrarily close to $D^\infty\of{\rho\|\FF}$, we obtain
\begin{equation}
  \liminf_{\alpha\to1^-}
  D_{\alpha}^{\infty}\of{\rho\|\FF}
  \ge
  D^\infty\of{\rho\|\FF},
  \qquad
  \liminf_{\alpha\to1^-}
  \widetilde{D}_{\alpha}^{\infty}\of{\rho\|\FF}
  \ge
  D^\infty\of{\rho\|\FF}.
\end{equation}
The matching upper bounds established above prove the asserted limits of the regularised R\'enyi divergences.

For completeness, we also justify the second iterated limit in \eqref{eq:limit_commutation_problem}.
For every fixed $n$, monotonicity in $\alpha$ immediately gives
\begin{equation}
  \lim_{\alpha\to1^-}
  \inf_{\sigma_n\in\FF_n}
  D_\alpha\of{\rho^{\otimes n}\|\sigma_n}
  \le
  D\of{\rho^{\otimes n}\|\FF_n}.
\end{equation}
For the reverse inequality, choose minimisers along a sequence $\alpha_j\to1^-$ and, by compactness of $\FF_n$, pass to a subsequence converging to some $\sigma_n^\star\in\FF_n$.
For every fixed $\alpha'<1$, monotonicity in the R\'enyi parameter and continuity in the second argument give
\begin{equation}
  D_{\alpha'}\of{\rho^{\otimes n}\|\sigma_n^\star}
  \le
  \lim_{j\to\infty}
  \inf_{\sigma_n\in\FF_n}
  D_{\alpha_j}\of{\rho^{\otimes n}\|\sigma_n}.
\end{equation}
Letting $\alpha'\to1^-$ shows that the reverse inequality holds, and hence
\begin{equation}
  \lim_{\alpha\to1^-}
  \inf_{\sigma_n\in\FF_n}
  D_\alpha\of{\rho^{\otimes n}\|\sigma_n}
  =
  D\of{\rho^{\otimes n}\|\FF_n}.
\end{equation}
Taking $n\to\infty$ now gives the right-hand side of \eqref{eq:limit_commutation_problem} and completes the proof.
\end{proof}

\subsection{Sample complexity of resource testing}

\begin{theorem}[(Sample complexity of asymmetric resource testing)]
\label{thm:supp-asymmetric-sample-complexity}
Assume \eqref{ax:closed-convex}--\eqref{ax:permutation-invariant}.
Let $\rho\in\D\of{\HH}$ satisfy $D^\infty\of{\rho\|\FF}>0$.
Fix an allowed type-I error $\varepsilon\in(0,1)$.
For fixed $\rho$ and $\varepsilon$, as $\delta\to0$, the sample complexity defined in \eqref{eq:sample-complexity} satisfies
\begin{equation}
  N_{\e,\delta}\of{\rho\|\FF}
  = \Theta\of*{
    \frac{\log(1/\delta)}
         {D^\infty\of{\rho\|\FF}}
  },
\end{equation}
meaning that $N_{\e,\delta}\of{\rho\|\FF}$ is upper and lower bounded by constant multiples of $\frac{\log(1/\delta)}{D^\infty\of{\rho\|\FF}}$ as $\delta \to 0^+$; the involved constants only depend on the underlying resource theory and on $\e$, and not on the state $\rho$.
\end{theorem}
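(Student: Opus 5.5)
The plan is to prove the two directions separately. Throughout, only the leading multiplicative constant needs to be independent of $\rho$; lower-order thresholds, which are swamped as $\delta\to0^+$, may depend on $\rho$ and $\varepsilon$.

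\emph{Upper bound (achievability).} I would apply \cref{thm:supp-one-shot-gqsl} with an exponentially small type-I threshold $\varepsilon_n'\defeq2^{-sn}$, rather than with the fixed $\varepsilon$. The point is that the term $\sqrt{2\varepsilon}\,n(\log d+\log(1/c_\tau))$ in \eqref{eq:clean-one-shot-penalty} is linear in $n$, so it cannot be absorbed for a fixed $\varepsilon$; with $\varepsilon_n'$ it is negligible. Set
\begin{equation*}
s\defeq\min\set*{1/240,\ \of*{D^\infty\of{\rho\|\FF}/(2K_{d,\tau})}^2},
\end{equation*}
and assume $n\ge\max\set*{(d^2+2)^2/s,\ \log(1/\varepsilon)/s}$. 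Then the hypotheses of \cref{thm:supp-one-shot-gqsl} hold and $\varepsilon_n'\le\varepsilon$. Monotonicity of $D_{\mathrm H}^{\varepsilon}$ in $\varepsilon$, the bound $D(\rho^{\otimes n}\|\FF_n)\ge nD^\infty(\rho\|\FF)$ (Fekete: the regularised limit equals the infimum), and $2+\log\frac1{1-2\varepsilon_n'}\le3$ give
\begin{equation*}
D_{\mathrm H}^{\varepsilon}\of{\rho^{\otimes n}\|\FF_n}\ge nD^\infty-K_{d,\tau}\sqrt{s}\,n-\sqrt{2}\,2^{-sn/2}n\of{\log d+\log(1/c_\tau)}-3\ge \tfrac{n}{2}D^\infty-4
\end{equation*}
for $n$ large, where the last step uses $K_{d,\tau}\sqrt{s}\le D^\infty/2$ and the fact that $2^{-sn/2}n$ is bounded. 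Therefore $\beta_\varepsilon\le\delta$ as soon as $n\ge 2(\log(1/\delta)+4)/D^\infty$. The remaining thresholds on $n$ depend on $\rho$ and $\varepsilon$ but not on $\delta$. Hence $N_{\varepsilon,\delta}\le(2+o(1))\log(1/\delta)/D^\infty$ as $\delta\to0^+$, with leading constant $2$.

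\emph{Lower bound (converse).} The bound \eqref{eq:generalized-stein-upper-bound-proof} points the wrong way here, because $D(\rho^{\otimes n}\|\FF_n)/n\ge D^\infty$, and it is also weak for $\varepsilon$ near $1$. Instead I would use a strong-converse-type bound against a concrete free state. Fix $\eta>0$ and a block length $m$ with $\frac1m D(\rho^{\otimes m}\|\FF_m)\le D^\infty+\eta$, with optimiser $\sigma_m$. Replace $\sigma_m$ by $(1-\lambda)\sigma_m+\lambda\tau^{\otimes m}$, which is free and full rank, and take $\lambda$ small enough that its relative entropy is still at most $m(D^\infty+2\eta)$. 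For $n=km+r$ with $0\le r<m$, the state $\sigma_m^{\otimes k}\otimes\tau^{\otimes r}$ is free by \eqref{ax:tensor}. The standard converse $D_{\mathrm H}^{\varepsilon}(P\|Q)\le\widetilde D_\alpha(P\|Q)+\frac{\alpha}{\alpha-1}\log\frac1{1-\varepsilon}$ for $\alpha>1$, combined with additivity of $\widetilde D_\alpha$, gives
\begin{equation*}
D_{\mathrm H}^\varepsilon\of{\rho^{\otimes n}\|\FF_n}\le k\,\widetilde D_\alpha\of{\rho^{\otimes m}\|\sigma_m}+m\log(d/c_\tau)+\tfrac{\alpha}{\alpha-1}\log\tfrac1{1-\varepsilon}.
\end{equation*}
Since $\sigma_m$ is full rank, $\widetilde D_\alpha\to D$ as $\alpha\to1^+$, so we can choose $\alpha$, depending on $\rho$, with $\widetilde D_\alpha(\rho^{\otimes m}\|\sigma_m)\le m(D^\infty+3\eta)$. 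Then $\beta_\varepsilon\le\delta$ forces $n(D^\infty+3\eta)+O_{\rho,\varepsilon,\eta}(1)\ge\log(1/\delta)$, i.e.\ $N_{\varepsilon,\delta}\ge(1-o(1))\log(1/\delta)/(D^\infty+3\eta)$. Letting $\eta\to0$, the leading constant is $1$.

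\emph{Main obstacle.} The delicate step is the achievability direction. It hinges on choosing $\varepsilon_n'=2^{-sn}$ with $s$ tied to $D^\infty(\rho\|\FF)$, and on checking that every $\rho$-dependent requirement on $n$ is a $\delta$-independent threshold, so that none of them contaminates the leading constant. On the converse side, the subtlety is that for fixed $\varepsilon$ possibly close to $1$ we cannot rely on \eqref{eq:generalized-stein-upper-bound-proof}. We need the Rényi ($\alpha>1$) strong converse together with the full-rank mixing with $\tau^{\otimes m}$, which makes $\widetilde D_\alpha$ finite and continuous at $\alpha=1$.
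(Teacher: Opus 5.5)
Your proof is correct, and it takes a somewhat different route from the paper in both directions, with sharper constants.

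\emph{Achievability.} You and the paper both rest on \cref{thm:supp-one-shot-gqsl}. The difference is how the linear penalty $\sqrt{2\varepsilon}\,n(\log d+\log(1/c_\tau))$ is handled. The paper keeps a \emph{fixed} type-I threshold, but shrinks it to some $\tilde\varepsilon\le\varepsilon$, depending on $D^\infty(\rho\|\FF)$, such that $\sqrt{2\tilde\varepsilon}(\log d+\log(1/c_\tau))\le D^\infty/4$. It also imposes $n\ge 4K_{d,\tau}^2\log(1/\tilde\varepsilon)/(D^\infty)^2$, which gets $D_{\mathrm H}^{\tilde\varepsilon}\ge nD^\infty/4-3$ and hence leading constant $4$. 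Your exponentially small threshold $2^{-sn}$, with $K_{d,\tau}\sqrt s\le D^\infty/2$, instead makes that term $o(1)$ and gives constant $2$. So your remark that the linear term ``cannot be absorbed for a fixed $\varepsilon$'' is not accurate: it can, after shrinking $\varepsilon$ in a $\rho$-dependent but $\delta$-independent way.

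\emph{Converse.} The paper uses only the weak converse $D(\rho^{\otimes n}\|\FF_n)\ge(1-\varepsilon)\log(1/\delta)-1$. Your claim that this ``points the wrong way'' overlooks how the paper closes it. Since $\beta_n(T_n)\ge\Tr\tau^{\otimes n}T_n\ge(1-\varepsilon)(c_\tau/d)^n$, every admissible $n$ tends to infinity as $\delta\to0$. Then $n^{-1}D(\rho^{\otimes n}\|\FF_n)\to D^\infty$ gives $D(\rho^{\otimes n}\|\FF_n)\le 2nD^\infty$ on all admissible $n$, and hence leading constant $(1-\varepsilon)/2$. That is harmless for fixed $\varepsilon<1$. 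Your strong-converse argument is heavier: sandwiched R\'enyi with $\alpha>1$ against the block product $\sigma_m^{\otimes k}\otimes\tau^{\otimes r}$, after mixing $\sigma_m$ with $\tau^{\otimes m}$, which needs R\'enyi data processing and continuity at $\alpha=1$. In exchange, it holds at every $n$ without the ``admissible $n\to\infty$'' step, and gives the optimal, $\varepsilon$-independent leading constant $1$.

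Altogether you pin $N_{\varepsilon,\delta}(\rho\|\FF)$ between $(1-o(1))$ and $(2+o(1))$ times $\log(1/\delta)/D^\infty(\rho\|\FF)$, which is more than the $\Theta$ statement requires.
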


\begin{proof}
For the upper bound, choose a fixed $\tilde\varepsilon\in(0,\varepsilon]$ such that
\begin{equation}
  \log\frac1{\tilde\varepsilon}
  \ge
  \of{d^2+2}^2,
  \qquad
  \sqrt{2\tilde\varepsilon}
  \of*{\log d+\log(1/c_\tau)}
  \le
  \frac14D^\infty\of{\rho\|\FF}.
\end{equation}
Choose $n$ so that
\begin{align}
  n
  &\ge
  \frac{4K_{d,\tau}^2}{D^\infty\of{\rho\|\FF}^2}
  \log\frac1{\tilde\varepsilon},
  \label{eq:clean-asymmetric-type-I-condition}\\
  n
  &\ge
  \frac4{D^\infty\of{\rho\|\FF}}
  \of*{3+\log\frac1\delta}.
  \label{eq:clean-asymmetric-type-II-condition}
\end{align}
The full-rank free state gives
\begin{equation}
  D^\infty\of{\rho\|\FF}
  \le
  D\of{\rho\|\tau}
  \le
  \log d+\log(1/c_\tau)
  \le
  K_{d,\tau}/16.
\end{equation}
The condition \eqref{eq:clean-asymmetric-type-I-condition} and the previous bound $D^\infty\of{\rho\|\FF}\le K_{d,\tau}/16$ imply
\begin{equation}
  \log\frac1{\tilde\varepsilon}
  \le
  \frac n{240},
  \qquad
  K_{d,\tau}
  \sqrt{\frac1n\log\frac1{\tilde\varepsilon}}
  \le
  \frac12D^\infty\of{\rho\|\FF}.
\end{equation}
So our choice of $\tilde\varepsilon$ satisfies conditions in \cref{thm:supp-one-shot-gqsl} and also ensures
\begin{equation}
  \sqrt{2\tilde\varepsilon}
  \of*{\log d+\log(1/c_\tau)}
  \le
  \frac14D^\infty\of{\rho\|\FF},
  \qquad
  2+\log\frac1{1-2\tilde\varepsilon}
  \le
  3.
\end{equation}
Applying \cref{thm:supp-one-shot-gqsl} and using $D\of{\rho^{\otimes n}\|\FF_n}\ge nD^\infty\of{\rho\|\FF}$ yields
\begin{align}
  D_{\mathrm H}^{\tilde\varepsilon}
  \of{\rho^{\otimes n}\|\FF_n}
  &\ge
  nD^\infty\of{\rho\|\FF} - \frac n2D^\infty\of{\rho\|\FF} - \frac n4D^\infty\of{\rho\|\FF} - 3 \notag\\
  &= 
  \frac n4D^\infty\of{\rho\|\FF} - 3.
\end{align}
By \eqref{eq:clean-asymmetric-type-II-condition}, the last expression is at least $\log(1/\delta)$.
The definition of the hypothesis-testing relative entropy therefore provides a test whose type-I and type-II errors are at most $\tilde\varepsilon\le\varepsilon$ and $\delta$, respectively.
For fixed $\rho$ and $\varepsilon$, \eqref{eq:clean-asymmetric-type-I-condition} is independent of $\delta$, while \eqref{eq:clean-asymmetric-type-II-condition} gives 
\begin{equation}
  N_{\e,\delta}\of{\rho\|\FF}
  =
  O\of*{ \frac{\log(1/\delta)}{D^\infty\of{\rho\|\FF}}
  }.
\end{equation}
For the lower bound, let $T_n$ be any test satisfying $\alpha_n\of{T_n}\le\varepsilon$ and $\beta_n\of{T_n}\le\delta$.
For every $\sigma_n\in\FF_n$, binary data processing and $h_2\le1$ give
\begin{align}
  D\of{\rho^{\otimes n}\|\sigma_n}
  &\ge
  \of*{1-\alpha_n\of{T_n}}
  \log\frac1{\Tr\sigma_nT_n}
  -
  h_2\of*{\alpha_n\of{T_n}}
  \notag\\
  &\ge
  \of{1-\varepsilon}\log\frac1\delta
  -
  1.
\end{align}
Taking the infimum over $\sigma_n\in\FF_n$ yields
\begin{equation}\label{eq:clean-asymmetric-sample-lower-relative-entropy}
  D\of{\rho^{\otimes n}\|\FF_n}
  \ge
  \of{1-\varepsilon}\log\frac1\delta
  -
  1.
\end{equation}
Moreover, the full-rank free state implies
\begin{equation}
  \beta_n\of{T_n}
  \ge
  \Tr\tau^{\otimes n}T_n
  \ge
  \of*{\frac{c_\tau}{d}}^n
  \Tr T_n
  \ge
  \of*{\frac{c_\tau}{d}}^n
  \Tr\rho^{\otimes n}T_n
  \ge
  \of{1-\varepsilon}
  \of*{\frac{c_\tau}{d}}^n.
\end{equation}
Consequently, every admissible copy number tends to infinity as $\delta\to0$.
Since $n^{-1}D\of{\rho^{\otimes n}\|\FF_n}\to D^\infty\of{\rho\|\FF}>0$, for all sufficiently large $n$ we have
\begin{equation}
  D\of{\rho^{\otimes n}\|\FF_n}
  \le
  2nD^\infty\of{\rho\|\FF}.
\end{equation}
Combining this with \eqref{eq:clean-asymmetric-sample-lower-relative-entropy} proves
\begin{equation}
  N_{\e,\delta}\of{\rho\|\FF}
  =
  \Omega\of*{
    \frac{\log(1/\delta)}
         {D^\infty\of{\rho\|\FF}}
  }.
\end{equation}
\end{proof}

\section{Technical lemmas}

\begin{lemma}[{\cite[Lemma~III.4]{BrandaoPlenio}}]\label{lem:clean-purification-reduction}
Let $\omega\in\D\of{\HH}$ have a purification $\ket\theta\in\HH\otimes\HH'$, where $\HH'\simeq\HH$.
Let $\rho_n\in\D\of{\HH^{\otimes n}}$ be permutation invariant.
Then $\rho_n$ has a permutation-symmetric purification $\ket{\Psi_n}\in\Sym^n\of{\HH\otimes\HH'}$ such that
\begin{equation}
  \braket{\theta^{\otimes n}}{\Psi_n}
  =
  F\of{\omega^{\otimes n},\rho_n}.
\end{equation}
\end{lemma}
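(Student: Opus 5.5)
We will prove \cref{lem:clean-purification-reduction} by combining Uhlmann's theorem with a symmetrisation argument.

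\emph{Step 1: an optimal purification.} Note that $\ket{\theta}^{\otimes n}$ is a purification of $\omega^{\otimes n}$ on $(\HH\otimes\HH')^{\otimes n}$. Since $\dim\HH'^{\otimes n}=d^n$ equals the dimension of the system being purified, the reference is large enough to purify $\rho_n$ as well. Uhlmann's theorem then gives a purification $\ket{\Phi}$ of $\rho_n$ on the same space with
\[
\abs{\braket{\theta^{\otimes n}}{\Phi}} = F\of{\omega^{\otimes n},\rho_n}.
\]
Multiplying $\ket{\Phi}$ by a global phase, we may take the overlap real and nonnegative.

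\emph{Step 2: a canonical form.} Rather than symmetrising $\ket{\Phi}$ directly, we use an explicit formula. Write $\ket{\theta}=(\sqrt{\omega}\otimes\1)\ket{\Gamma}$, where $\ket{\Gamma}=\sum_i\ket{i}\ket{i}$ is the unnormalised maximally entangled vector. If $\theta$ is a different purification, it differs from this one by a unitary on $\HH'$, and we conjugate by that unitary at the end. Then $\ket{\theta^{\otimes n}}=(\sqrt{\omega^{\otimes n}}\otimes\1)\ket{\Gamma_n}$, where $\ket{\Gamma_n}=\ket{\Gamma}^{\otimes n}$ after regrouping factors.

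Every purification of $\rho_n$ on this space has the form $(\sqrt{\rho_n}\,V\otimes\1)\ket{\Gamma_n}$ with $V$ unitary. Its overlap with $\ket{\theta^{\otimes n}}$ is
\[
\Tr\big(\sqrt{\omega^{\otimes n}}\,\sqrt{\rho_n}\,V\big).
\]
Maximising over $V$ gives $\|\sqrt{\omega^{\otimes n}}\sqrt{\rho_n}\|_1$. The maximum is attained at $V=U^\dagger$, where $\sqrt{\omega^{\otimes n}}\sqrt{\rho_n}=|X^\dagger|\,U$ is a polar decomposition, so that $V$ is the adjoint of the unitary polar factor.

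\emph{Step 3: symmetry.} This is the main step. Both $\sqrt{\omega^{\otimes n}}$ and $\sqrt{\rho_n}$ commute with every $U_\pi$, so $X=\sqrt{\omega^{\otimes n}}\sqrt{\rho_n}$ commutes with every $U_\pi$. The polar unitary is not unique when $X$ is singular, so we make a canonical choice: take $V=(X^\dagger X)^{-1/2}X^\dagger$ on the support of $X$, and extend it on the kernel by the partial isometry between the kernels. This extension can be chosen permutation covariant, because both kernels are invariant subspaces of the commuting representation $U_\pi$ and have equal dimension.

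Alternatively, one can avoid choosing a unitary at all. Any maximiser works, and the set of maximising $V$ is convex in the contraction relaxation. Averaging a maximiser over the conjugation $V\mapsto U_\pi VU_\pi^\dagger$ keeps the value of $\Tr(XV)$, since $X$ commutes with each $U_\pi$. This yields a permutation-invariant contraction $\bar V$ with $\Tr(X\bar V)=\|X\|_1$. The vector $(\sqrt{\rho_n}\bar V\otimes\1)\ket{\Gamma_n}$ may then fail to be normalised, so the averaging route needs care. The first route, via the canonical polar factor, is cleaner, and we will use it.

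\emph{Step 4: conclusion.} Set $\ket{\Psi_n}=(\sqrt{\rho_n}V\otimes\1)\ket{\Gamma_n}$. The operator $\sqrt{\rho_n}V$ commutes with every $U_\pi$. Moreover, $(U_\pi\otimes \overline{U_\pi})\ket{\Gamma_n}=\ket{\Gamma_n}$, and $\overline{U_\pi}=U_\pi$ for permutation matrices in the product basis. Together these give
\[
(U_\pi\otimes U_\pi)\ket{\Psi_n}=\ket{\Psi_n}.
\]
Since $U_\pi\otimes U_\pi$ is exactly the operator permuting the composite sites of $(\HH\otimes\HH')^{\otimes n}$, this means $\ket{\Psi_n}\in\Sym^n(\HH\otimes\HH')$.

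Its reduced state on $\HH^{\otimes n}$ is $\sqrt{\rho_n}VV^\dagger\sqrt{\rho_n}=\rho_n$, so $\ket{\Psi_n}$ purifies $\rho_n$. Its overlap with $\ket{\theta^{\otimes n}}$ equals $\|X\|_1=F(\omega^{\otimes n},\rho_n)$, which is real and nonnegative.
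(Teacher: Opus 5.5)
Your route is the standard argument behind the cited Brand\~ao--Plenio lemma; the paper itself gives no proof of its own. You write every purification as $(\sqrt{\rho_n}V\otimes\1)\ket{\Gamma_n}$, reduce the overlap to $\Tr(XV)$ with $X=\sqrt{\omega^{\otimes n}}\sqrt{\rho_n}$, and pick a permutation-covariant polar unitary. Steps~2 and~4 are correct.

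The flaw sits in the one delicate point, Step~3. A permutation-covariant partial isometry from $\ker X^\dagger$ onto $\ker X$ does \emph{not} exist merely because both kernels are invariant and have equal dimension. For $n=d=2$, the lines spanned by $\ket{01}+\ket{10}$ and $\ket{01}-\ket{10}$ are invariant and one-dimensional. However, they carry the trivial and the sign representation of $S_2$, so by Schur's lemma the only intertwiner between them is zero. What you actually need is that $\ker X$ and $\ker X^\dagger$ are isomorphic \emph{as $S_n$-representations}.

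This does hold, so the gap is repairable, for example in either of these two ways:
\begin{itemize}
\item The restriction of $X$ is an intertwining bijection from $(\ker X)^\perp$ onto $\mathrm{range}(X)=(\ker X^\dagger)^\perp$. Subtracting characters from that of $\HH^{\otimes n}$ gives $\ker X\cong\ker X^\dagger$. An intertwining isomorphism $S$ between unitary representations can then be replaced by its unitary polar part $S(S^\dagger S)^{-1/2}$, which still intertwines. Equivalently, in the isotypic decomposition $\HH^{\otimes n}\cong\bigoplus_\lambda\mathcal Q_\lambda\otimes\mathcal P_\lambda$ one has $X=\bigoplus_\lambda X_\lambda\otimes\1_{\mathcal P_\lambda}$, and you can polar-decompose each square block $X_\lambda$ separately.
\item A representation-free fix also works. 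For all sufficiently small $\epsilon\neq0$, the operator $X_\epsilon=X+\epsilon\1$ is invertible and commutes with every $U_\pi$. Hence the unitary $V_\epsilon=(X_\epsilon^\dagger X_\epsilon)^{-1/2}X_\epsilon^\dagger$ also commutes with every $U_\pi$ and satisfies $\Tr(X_\epsilon V_\epsilon)=\norm{X_\epsilon}_1$. By compactness of the unitary group, any limit point $V$ of these unitaries as $\epsilon\to0$ is a permutation-covariant unitary, and $\Tr(XV)=\norm{X}_1$ by continuity.
\end{itemize}
With either fix your proof is complete. Step~1 and the abandoned averaging paragraph can be dropped.
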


We now consider the action of the blurring channel $\BB_{n,k}$, defined by~\eqref{eq:blurring-channel} with the convention~\eqref{eq:blurring-with-identity}, as restricted to the symmetric subspace $\Sym^n(\HH)$. To wit, we will define the \deff{compressed blurring channel} by~\cite[Eq.~(90)]{LamiGQSL}
\bb
X \longmapsto \Pi_n\BB_{n,k}\big( \Pi_n^\dag X \Pi_n\big)\Pi_n^\dag ,
\label{eq:compressed-blurring}
\ee
where 
\bb
\Pi_n :\HH^{\otimes n} \to \Sym^n(\HH)
\label{eq:Pi_n}
\ee
is the projector onto the symmetric subspace, and $X$ is an arbitrary operator on $\Sym^n(\HH)$. The following generalises~\cite[Lemma~17]{LamiGQSL}.

\begin{lemma}[(Symmetric-subspace Kraus representation)]\label{lem:clean-symmetric-kraus}
Let $\HH \simeq\C^D$, let $1\le k\le n$, and let $\Pi_n$ be the projection onto $\Sym^n\of{\HH}$. For $\alpha\in\N^D$ with $\abs{\alpha} = \sum_{x=1}^D \alpha_x = n$, let $\ket{\alpha}$ be the normalized Dicke vector of weight $\alpha$ given by~\eqref{eq:Dicke-basis}. 
Then the compressed blurring channel~\eqref{eq:compressed-blurring} has Kraus operators $K_{\beta',\beta}$, indexed by $\beta,\beta'\in\N^D$ with $\abs{\beta}=\abs{\beta'}=k$ and satisfying
\bb \label{eq:clean-symmetric-kraus}
K_{\beta',\beta}\ket{\alpha} =
  \begin{cases}
    \displaystyle
    \sof*{
      \frac{\binom k\beta\binom k{\beta'}}{D^k}
      \frac{\binom{n-k}{\alpha-\beta}^2}
           {\binom n\alpha\binom n{\alpha-\beta+\beta'}}
    }^{1/2}
    \ket{\alpha-\beta+\beta'}
      &\quad \text{if $\beta\le\alpha$,} \\[3ex]
    0,
      &\quad \text{otherwise,}
  \end{cases}
\ee
where the inequality $\beta\le\alpha$ is understood to be entry-wise. In particular, the diagonal Kraus operators act by
\bb
\label{eq:clean-diagonal-symmetric-kraus}
K_{\beta,\beta}\ket{\alpha} = D^{-k/2} \frac{\binom k\beta\binom{n-k}{\alpha-\beta}}{\binom n\alpha}\ket{\alpha}\, .
\ee
\end{lemma}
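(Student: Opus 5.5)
We need a Kraus representation of the compressed blurring channel restricted to the symmetric subspace. The plan is to write the replacer channel $\omega^{\otimes k}\Tr$ with $\omega=\1/D$ on a fixed block $J$ through its standard Kraus operators $D^{-k/2}\ket{y^k}\!\bra{x^k}$, indexed by strings $x^k,y^k\in\{1,\dots,D\}^k$. The channel $\BB_{n,k}$ then has Kraus operators $\binom nk^{-1/2}D^{-k/2}\,\1_{J^c}\otimes\ket{y^k}\!\bra{x^k}_J$. Since the input and output are compressed by $\Pi_n$, we may regroup these into a smaller, symmetrised family.

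\textbf{Regrouping.} For $\beta,\beta'$ of size $k$, we define
\[
K_{\beta',\beta}\defeq c\,\Pi_n\Big(\sum_{J}\sum_{x^k\in\TT_k(\beta)}\sum_{y^k\in\TT_k(\beta')}\1_{J^c}\otimes\ket{y^k}\!\bra{x^k}_J\Big)\Pi_n^\dag ,
\]
with $c$ chosen so that normalisation works. The cleaner route is to observe that on symmetric inputs the map $X\mapsto\sum_{x^k\in\TT_k(\beta)}\ldots$ factorises. A symmetric vector restricted to $J\cup J^c$ decomposes as $\ket{\alpha}=\sum_{\gamma}\sqrt{\binom k\gamma\binom{n-k}{\alpha-\gamma}/\binom n\alpha}\,\ket{\gamma}_J\ket{\alpha-\gamma}_{J^c}$, a Dicke decomposition. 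This holds for every $J$, so the partial trace over $J$ and re-preparation of $\1/D$ is independent of $J$ on symmetric inputs. The average over $J$ therefore becomes trivial, and the compressed channel equals
\[
X\mapsto \Pi_n\Big(\Tr_{k}[X]\otimes \tfrac{\1^{\otimes k}}{D^k}\Big)\Pi_n^\dag
\]
for any fixed last $k$ sites.

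\textbf{Kraus operators.} Next, I write $\1^{\otimes k}/D^k=\sum_{\beta'}D^{-k}\binom{k}{\beta'}^{\!}\,(\text{type-class projector})$. I use the Dicke resolution $\sum_{y^k\in\TT_k(\beta')}\proj{y^k}$ on the output and $\sum_{\beta}\proj{\beta}_J$ restricted appropriately on the input; since the input marginal on the last $k$ sites of a symmetric state lives in $\Sym^k$, the partial trace can be written with the Dicke basis $\ket{\beta}$ of $\Sym^k$. This yields the candidate Kraus operators
\[
K_{\beta',\beta}=\sqrt{\tfrac{\binom{k}{\beta'}}{D^k}}\;\Pi_n\big(\1_{n-k}\otimes\ket{\beta'}\!\bra{\beta}\big)\Pi_n^\dag ,
\]
where I use that the projection of the type-class projector onto states fed into $\Pi_n$ may be replaced by $\binom{k}{\beta'}\proj{\beta'}$. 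Since $\Pi_n$ only sees the symmetric part, the cross terms between different $y^k$ of the same type collapse to $\ket{\beta'}$ with multiplicity $\binom k{\beta'}$; this step has to be justified carefully. One then checks that $\sum K^\dag K$ reproduces the correct map, or more simply that the channel equals $\sum K(\cdot)K^\dag$ by direct expansion.

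\textbf{Action on Dicke states.} Applying $\bra{\beta}$ to the decomposition of $\ket{\alpha}$ gives $\sqrt{\binom k\beta\binom{n-k}{\alpha-\beta}/\binom n\alpha}\,\ket{\alpha-\beta}$, which vanishes unless $\beta\le\alpha$. Tensoring with $\ket{\beta'}$ and projecting with $\Pi_n$ uses the reverse decomposition: the overlap of $\ket{\alpha-\beta}\ket{\beta'}$ with $\ket{\alpha-\beta+\beta'}$ is $\sqrt{\binom{n-k}{\alpha-\beta}\binom k{\beta'}/\binom n{\alpha-\beta+\beta'}}$. Multiplying the factors gives \eqref{eq:clean-symmetric-kraus}, and setting $\beta'=\beta$ gives \eqref{eq:clean-diagonal-symmetric-kraus}.

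The main obstacle is the bookkeeping in the middle step: justifying that the maximally mixed replacement, followed by symmetric projection, is captured exactly by the Dicke-indexed operators with weight $\binom k{\beta'}/D^k$, rather than only up to the symmetric-subspace components of $\1^{\otimes k}$. I would verify this by computing both sides on $\ket\alpha\!\bra{\alpha'}$ and comparing coefficients. If that proves messy, I would fall back on the generalisation of \cite[Lemma~17]{LamiGQSL}, which treats the analogous qubit case.
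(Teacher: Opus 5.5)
Your overall route (reduce to a single block by symmetry, write the partial trace in the Dicke basis of $\Sym^k$, and use the Dicke decompositions of $\ket{\alpha}$ and of $\ket{\alpha-\beta}\ket{\beta'}$) is sound and close to the paper's. However, there is a genuine error: your candidate Kraus operators are off by a factor $\sqrt{\binom{k}{\beta'}}$. The claimed match with \eqref{eq:clean-symmetric-kraus} is an arithmetic slip. With $\gamma=\alpha-\beta+\beta'$, your three factors actually multiply to
\[
\sqrt{\frac{\binom{k}{\beta'}}{D^k}}\,
\sqrt{\frac{\binom{k}{\beta}\binom{n-k}{\alpha-\beta}}{\binom{n}{\alpha}}}\,
\sqrt{\frac{\binom{n-k}{\alpha-\beta}\binom{k}{\beta'}}{\binom{n}{\gamma}}}
=
\sqrt{\frac{\binom{k}{\beta}\binom{k}{\beta'}^{2}}{D^k}\,\frac{\binom{n-k}{\alpha-\beta}^{2}}{\binom{n}{\alpha}\binom{n}{\gamma}}}\,,
\]
which has $\binom{k}{\beta'}^2$ under the root. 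On the diagonal it would give $\binom{k}{\beta}^{3/2}$ instead of $\binom{k}{\beta}$.

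The faulty step is your replacement rule. After compression, the type-class projector $P_{\beta'}=\sum_{y^k\in\TT_k(\beta')}\proj{y^k}$ equals $\proj{\beta'}$, not $\binom{k}{\beta'}\proj{\beta'}$. To see this, note that $\Pi_n U_\pi=U_\pi\Pi_n=\Pi_n$ for permutations $\pi$ of the last $k$ sites. Hence $\Pi_n(A\otimes\ketbra{y^k}{z^k})\Pi_n^\dagger$ is unchanged when $y^k$ and $z^k$ are independently permuted within their type classes. Consequently $P_{\beta'}$ ($\binom{k}{\beta'}$ such terms with weight $1$) and $\proj{\beta'}$ ($\binom{k}{\beta'}^2$ terms with weight $\binom{k}{\beta'}^{-1}$) compress to the same operator. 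Your ``cross terms collapse with multiplicity $\binom{k}{\beta'}$'' counts this multiplicity twice.

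A concrete check: for $n=k=2$ and $D=2$, the compressed channel is $X\mapsto\frac14\Tr X\,\Pi_2$. Your operators instead give $\frac14\Tr X\,\bigl(\proj{00}+2\proj{\psi^+}+\proj{11}\bigr)$, where $\ket{\psi^+}$ is the Dicke vector of weight $(1,1)$. The $\sum K^\dagger K$ test you propose would not catch this. Your family is trace preserving in this example, whereas the true compressed channel is trace non-increasing, with $\sum K^\dagger K=\frac34\Pi_2$.

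The correct operator is $K_{\beta',\beta}=D^{-k/2}\,\Pi_n\bigl(\1_{n-k}\otimes\ketbra{\beta'}{\beta}\bigr)\Pi_n^\dagger$. With this prefactor, your Dicke computation reproduces \eqref{eq:clean-symmetric-kraus} and \eqref{eq:clean-diagonal-symmetric-kraus} exactly. This is precisely what the paper's coalescing argument produces. The $\binom{k}{\beta}\binom{k}{\beta'}$ compressed standard-basis Kraus operators $D^{-k/2}\Pi_n(\ketbra{y^k}{z^k}\otimes\1^{\otimes(n-k)})\Pi_n^\dagger$ with type pair $(\beta',\beta)$ are all identical. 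Each acts on $\ket{\alpha}$ with amplitude $D^{-k/2}\binom{n-k}{\alpha-\beta}\big/\sqrt{\binom{n}{\alpha}\binom{n}{\gamma}}$. Merging $N$ identical Kraus operators multiplies by $\sqrt{N}$, which gives the factor $\sqrt{\binom{k}{\beta}\binom{k}{\beta'}}$: one power of each multinomial under the root. The correct diagonal factor $\binom{k}{\beta}\binom{n-k}{\alpha-\beta}/\binom{n}{\alpha}=\prod_x\binom{\alpha_x}{\beta_x}/\binom{n}{k}$ is also what the Vandermonde step in \cref{lem:clean-diagonal-kraus-approximant} relies on, so your normalisation would break that downstream argument.
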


\begin{proof}
Due to permutation symmetry, all the terms appearing in~\eqref{eq:blurring-channel} act in the same way, i.e.
\bb
\Pi_n \left[ \left(I_{J^c} \otimes \big(\omega^{\otimes k} \Tr\big)_J\right) \big( \Pi_n^\dag X \Pi_n\big) \right] \Pi_n^\dag
\ee
does not depend on $J$, for a fixed $k = |J|$. We can therefore, without loss of generality, pick $J = \{1,\ldots,k\}$. The standard-basis Kraus operators of the corresponding term in~\eqref{eq:blurring-channel} are thus
\begin{equation}
D^{-k/2}\, \ketbra{y^k}{z^k} \otimes \1^{\otimes(n-k)}\, ,
\end{equation}
indexed by $y^k,z^k\in \{1,\ldots,D\}^k$. 
Let $\beta',\beta \in \N^D$ denote the types of $y^k,z^k$, respectively. If $\beta\not\le\alpha$, the Kraus operator annihilates $\ket\alpha$.
If $\beta\le\alpha$, put $\gamma=\alpha-\beta+\beta'$.
Using the Dicke expansion,
\bb
\bra{\gamma} \left( D^{-k/2}\, \ketbra{y^k}{z^k} \otimes \1^{\otimes(n-k)} \right) \ket{\alpha} &= D^{-k/2} \frac{\binom{n-k}{\alpha-\beta}}{\sqrt{\binom n\alpha\binom n\gamma}}\, .
\ee
For fixed $\beta',\beta$, there are $\binom k{\beta'}\binom k\beta$ Kraus operators with these two types. Coalescing identical compressed Kraus operators multiplies the preceding amplitude by the square root of this number, giving~\eqref{eq:clean-symmetric-kraus}. Setting $\beta'=\beta$ then yields \eqref{eq:clean-diagonal-symmetric-kraus}.
\end{proof}

\begin{lemma}[(Variational form of Tikhonov regularisation)] \label{lem:Tikhonov_regularisation}
Let $W\in \C^{m\times n}$ be an $m\times n$ complex matrix, $\ket{v} \in \C^m$ a complex vector, and $\lambda>0$. Then
\bb
\min_{\ket{z}\in \C^n} \Big\{ \norm{\ket{z}}^2 + \frac1\lambda \norm{\ket{v} - W\ket{z}}^2 \Big\} = \bra{v}\big(WW^\dag + \lambda\1_m \big)^{-1} \ket{v}\, .
\label{eq:Tikhonov_regularisation}
\ee
\end{lemma}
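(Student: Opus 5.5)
We prove the identity by direct minimisation of the quadratic functional
\[
\Phi(\ket z)\defeq\norm{\ket z}^2+\frac1\lambda\norm{\ket v-W\ket z}^2 ,
\]
followed by a push-through manipulation. Since $\lambda>0$, $\Phi$ is a strictly convex quadratic form in $\ket z$: its Hessian (with respect to real and imaginary parts) is $\1_n+\lambda^{-1}W^\dag W>0$. It is also coercive, so the minimum exists and is attained at the unique stationary point. Setting the Wirtinger derivative with respect to $\bra z$ to zero gives the normal equations
\[
\big(\lambda\1_n+W^\dag W\big)\ket{z_\star}=W^\dag\ket v ,
\qquad\text{i.e.}\qquad
\ket{z_\star}=\big(W^\dag W+\lambda\1_n\big)^{-1}W^\dag\ket v .
\]
By the push-through identity $(W^\dag W+\lambda\1_n)^{-1}W^\dag=W^\dag(WW^\dag+\lambda\1_m)^{-1}$, which follows from $W^\dag(WW^\dag+\lambda\1_m)=(W^\dag W+\lambda\1_n)W^\dag$ and invertibility of both brackets, this becomes $\ket{z_\star}=W^\dag G\ket v$ with $G\defeq(WW^\dag+\lambda\1_m)^{-1}$.

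Next we compute the residual. We have $\ket v-W\ket{z_\star}=(\1_m-WW^\dag G)\ket v$. Since $WW^\dag=G^{-1}-\lambda\1_m$, this gives $\1_m-WW^\dag G=\lambda G$, so the residual is $\lambda G\ket v$. Substituting back, and using that $G$ is Hermitian,
\[
\Phi(\ket{z_\star})=\bra v G\,WW^\dag\, G\ket v+\frac{1}{\lambda}\,\lambda^2\bra v G^2\ket v
=\bra v G\big(WW^\dag+\lambda\1_m\big)G\ket v=\bra v G\ket v ,
\]
which is the claimed right-hand side of~\eqref{eq:Tikhonov_regularisation}.

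An alternative route avoids calculus: complete the square,
\[
\Phi(\ket z)=\big(\ket z-\ket{z_\star}\big)^\dag\big(\1_n+\lambda^{-1}W^\dag W\big)\big(\ket z-\ket{z_\star}\big)+\text{const},
\]
and identify the constant as above. This also justifies that the stationary point is the global minimum.

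There is no real obstacle here. The only points needing care are getting the push-through identity and the residual simplification $\1_m-WW^\dag G=\lambda G$ right, and handling complex (rather than real) variables, which the completed-square form takes care of cleanly.
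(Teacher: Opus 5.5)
Your proof is correct and is essentially the paper's argument. The paper completes the square in $\ket{z}$, which is your ``alternative route''. It then identifies the leftover constant $\frac1\lambda\norm{\ket v}^2-\frac1{\lambda^2}\bra v W(\1_n+\lambda^{-1}W^\dag W)^{-1}W^\dag\ket v$ with $\bra v (WW^\dag+\lambda\1_m)^{-1}\ket v$ via the Woodbury identity, which is the role played in your version by the push-through identity and the residual simplification $\1_m-WW^\dag G=\lambda G$.
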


\begin{proof}
It suffices to complete the square:
\begin{align}
&\norm{\ket{z}}^2 + \frac1\lambda \norm{\ket{v} - W\ket{z}}^2 \nonumber \\
&\qquad = \bra{z} \big(\1_n + \tfrac1\lambda W^\dag W \big) \ket{z} - \frac2\lambda \Re \bra{v} W \ket{z} + \frac1\lambda \norm{\ket{v}}^2 \nonumber \\
&\qquad = \left\| \big(\1_n + \tfrac1\lambda W^\dag W \big)^{1/2} \ket{z}\right\|^2 - \frac2\lambda \Re \bra{v} W \ket{z} + \frac1\lambda \norm{\ket{v}}^2 \nonumber \\
&\qquad = \left\| \big(\1_n + \tfrac1\lambda W^\dag W \big)^{1/2} \ket{z} - \frac1\lambda \big(\1_n + \tfrac1\lambda W^\dag W \big)^{-1/2} W^\dag \ket{v} \right\|^2 \\
&\qquad \quad - \frac{1}{\lambda^2} \bra{v} W \big(\1_n + \tfrac1\lambda W^\dag W \big)^{-1} W^\dag \ket{v} + \frac1\lambda \braket{v}{v} \nonumber \\
&\qquad = \left\| \big(\1_n + \tfrac1\lambda W^\dag W \big)^{1/2} \ket{z} - \frac1\lambda \big(\1_n + \tfrac1\lambda W^\dag W \big)^{-1/2} W^\dag \ket{v} \right\|^2 + \bra{v} \big(WW^\dag + \lambda\1_m \big)^{-1} \ket{v}\, , \nonumber 
\end{align}
where in the last step we used the Woodbury matrix identity. Minimising over $\ket{z}$ annihilates the first term, completing the proof. 
\end{proof}

\begin{lemma}[(Kraus approximant)]\label{lem:clean-kraus-approximant}
For some $M>0$ and $\varepsilon\in(0,1]$, let $\ket{v},\ket{\Psi}\in \HH$ be two unit vectors in the Hilbert space $\HH$ such that 
\bb
\abs*{\braket{v}{\Psi}}^2\ge\varepsilon\, .
\ee
Let $(K_x)_{x\in \XX}$ be a finite collection of operators. Then, for all vectors $c\in \C^{\XX}$ with the property that $\norm{c}^2 < M\e$, it holds that
\bb
\label{eq:clean-kraus-approximant}
\Tr\of*{\proj{v}-M\sum\nolimits_x K_x\proj{\Psi}K_x^\dag}_+ \le \frac{\norm*{K(c)-\proj{v}}_\infty^2}{\varepsilon-\norm{c}^2/M}\, ,
\ee
where $K(c) \defeq \sum_x c_x K_x$, and $\|\cdot\|_\infty$ denotes the operator norm.
\end{lemma}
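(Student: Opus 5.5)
The plan is to reduce the trace of the positive part to a one-dimensional variational problem. Then I would lower-bound the blurred operator using Cauchy--Schwarz with the coefficient vector $c$, and finally solve an elementary scalar optimisation. Write $P\defeq\proj{v}$ and $A\defeq M\sum_x K_x\proj{\Psi}K_x^\dag\ge0$.

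\textbf{Step 1 (rank-one reduction).} Since $A\ge0$ and $P$ has rank one, Weyl's inequalities show that $P-A$ has at most one positive eigenvalue. Therefore
\[
\Tr(P-A)_+=\max\Big\{0,\ \max_{\norm{u}=1}\big(\abs{\braket{u}{v}}^2-\bra{u}A\ket{u}\big)\Big\}.
\]
So it suffices to bound $\abs{\braket{u}{v}}^2-\bra{u}A\ket{u}$ for an arbitrary unit vector $u$. The right-hand side of~\eqref{eq:clean-kraus-approximant} is nonnegative under the hypothesis $\norm{c}^2<M\varepsilon$, so the zero branch is harmless.

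\textbf{Step 2 (Cauchy--Schwarz against $c$).} We have $\bra{u}A\ket{u}=M\sum_x\abs{\bra{u}K_x\ket{\Psi}}^2$. By Cauchy--Schwarz, $\abs{\bra{u}K(c)\ket{\Psi}}^2\le\norm{c}^2\sum_x\abs{\bra{u}K_x\ket{\Psi}}^2$. Setting $\mu\defeq M/\norm{c}^2$, this gives
\[
\bra{u}A\ket{u}\ge\mu\,\abs{\bra{u}K(c)\ket{\Psi}}^2 .
\]
(If $c=0$ the claim is trivial, since then the right-hand side is $\norm{P}_\infty^2/\varepsilon\ge1$.) Next write $K(c)=P+E$ with $\eta\defeq\norm{E}_\infty$. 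Then $\bra{u}K(c)\ket{\Psi}=\braket{u}{v}\braket{v}{\Psi}+\bra{u}E\ket{\Psi}$ with $\abs{\bra{u}E\ket{\Psi}}\le\eta$. Put $t\defeq\abs{\braket{u}{v}}\in[0,1]$ and use $\abs{\braket{v}{\Psi}}\ge\sqrt\varepsilon$. The reverse triangle inequality then yields $\abs{\bra{u}K(c)\ket{\Psi}}\ge(t\sqrt\varepsilon-\eta)_+$. Hence
\[
\abs{\braket{u}{v}}^2-\bra{u}A\ket{u}\le g(t)\defeq t^2-\mu\,(t\sqrt\varepsilon-\eta)_+^2 .
\]

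\textbf{Step 3 (scalar optimisation).} This step is the only place where the hypothesis $\mu\varepsilon>1$, i.e.\ $\norm{c}^2<M\varepsilon$, enters, and it is the crux. On the region $t\sqrt\varepsilon\le\eta$ we have $g(t)\le\eta^2/\varepsilon$, which is at most $\eta^2/(\varepsilon-1/\mu)$. On the complementary region, $g$ is a concave quadratic because $\mu\varepsilon>1$. Its stationary point is $t^\ast=\mu\sqrt\varepsilon\,\eta/(\mu\varepsilon-1)$. There one computes $t^\ast\sqrt\varepsilon-\eta=\eta/(\mu\varepsilon-1)$, and hence
\[
g(t^\ast)=\frac{\mu\eta^2}{\mu\varepsilon-1}=\frac{\eta^2}{\varepsilon-\norm{c}^2/M}.
\]
This is an upper bound for $g$ on all of $\R$, and a fortiori on $[0,1]$. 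Combining the two regions with Step~1 gives~\eqref{eq:clean-kraus-approximant}.

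I expect no serious obstacle in this argument. The point requiring care is the rank-one reduction in Step~1, namely justifying that the positive part has trace equal to a single maximal Rayleigh quotient. This follows from $P-A\le P$ together with the Courant--Fischer characterisation of the second eigenvalue.
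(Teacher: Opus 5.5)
Your proof is correct and gives exactly the paper's bound, but by a genuinely different, essentially dual, route. Write $A=M\sum_x K_x\proj{\Psi}K_x^\dag$ and $\eta=\norm{K(c)-\proj v}_\infty$.

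\textbf{The paper's route.} It also starts from the fact that $\proj v - A$ has at most one positive eigenvalue. It then rewrites $\Tr(\proj v - A)_+$ as $\inf\{\lambda>0 : \bra v (A+\lambda\1)^{-1}\ket v\le 1\}$. The resolvent quadratic form is evaluated through the variational Tikhonov identity of Lemma~\ref{lem:Tikhonov_regularisation}, with $W=M^{1/2}\sum_x K_x\ketbra{\Psi}{x}$ so that $WW^\dag=A$. Inserting the trial vector $z=M^{-1/2}\alpha^{-1}c$, with $\alpha=\braket{v}{\Psi}$, gives $\bra v(A+\lambda\1)^{-1}\ket v\le\norm{c}^2/(M\e)+\eta^2/(\lambda\e)$. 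The paper then reads off the $\lambda$ at which this equals $1$.

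\textbf{Your route.} You work on the primal side. You maximise the Rayleigh quotient of $\proj v - A$ directly, and lower-bound $\bra u A\ket u$ by Cauchy--Schwarz against the fixed vector $c$. You finish with a one-variable concave quadratic maximisation, plus the trivial region $t\sqrt\e\le\eta$ and the trivial case $c=0$. I checked the stationary point $t^\ast$ and the value $\mu\eta^2/(\mu\e-1)=\eta^2/(\e-\norm{c}^2/M)$; both are right. The rank-one reduction via $\proj v - A\le\proj v$ and Courant--Fischer is also sound.

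\textbf{Comparison.} Your version is more elementary and self-contained: it needs no operator inverse, no Woodbury identity and no Tikhonov lemma, at the cost of a small case analysis. The paper's version reduces the whole optimisation to choosing one trial vector $z$ and solving a linear equation in $1/\lambda$. That is the variational Tikhonov viewpoint the authors present as the conceptual core of their method.
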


\begin{proof}
Set $A_M \defeq M\sum\nolimits_x K_x\proj{\Psi}K_x^\dag$, and define $W:\C^{\XX}\to\HH$ by
\begin{equation}
  W
  \defeq
  M^{1/2}\sum_{x\in\XX}K_x\ketbra{\Psi}{x},
  \qquad
  WW^\dag=A_M.
\end{equation}
Due to the Weyl monotonicity principle, the operator $\proj{v}-A_M$ has at most a single positive eigenvalue.
Therefore,
\bb
\Tr\of{\proj{v}-A_M}_+ = \inf\Set{ \lambda > 0  \given \proj{v}-A_M \leq \lambda \1} = \inf\Set{ \lambda > 0  \given \bra{v}(A_M + \lambda \1)^{-1} \ket{v} \leq 1}\, .
\label{clean-kraus-approximant_proof_eq1}
\ee 
Setting $\alpha=\braket{v}{\Psi}$, we now write
\bb
\bra{v}(A_M + \lambda \1)^{-1} \ket{v} &\eqt{(i)} \min_{\ket{z}\in \C^{\XX}} \Big\{ \norm{\ket{z}}^2 + \frac1\lambda \norm{\ket{v} - W\ket{z}}^2 \Big\} \\
&\leqt{(ii)} \frac{1}{M|\alpha|^2} \norm{c}^2 + \frac1\lambda \norm*{\ket{v} - \tfrac1\alpha K(c) \ket{\Psi}}^2 \\
&\leqt{(iii)} \frac{1}{M|\alpha|^2} \norm{c}^2 + \frac{1}{\lambda |\alpha|^2} \norm*{\proj{v} - K(c)}_\infty^2 \\
&\leqt{(iv)} \frac{1}{M\e} \norm{c}^2 + \frac{1}{\lambda \e} \norm*{\proj{v} - K(c)}_\infty^2
\label{clean-kraus-approximant_proof_eq2}
\ee
Here, in~(i) we applied Lemma~\ref{lem:Tikhonov_regularisation} to the operator $W$ defined above; in~(ii) we substituted the ansatz $\ket{z} \to M^{-1/2} \alpha^{-1} c$; in~(iii) we realised that
\bb
\ket{v} - \alpha^{-1} K(c) \ket{\Psi} = \alpha^{-1} \left( \proj{v} - K(c) \right) \ket{\Psi}\, ,
\ee
so that 
\bb
\norm*{\ket{v} - \alpha^{-1} K(c) \ket{\Psi}} \leq |\alpha|^{-1} \norm*{\proj{v} - K(c)}_\infty\, ;
\ee
finally, (iv)~holds because $|\alpha|^2 = \abs{\braket{v}{\Psi}}^2 \ge \e$ by assumption.

Now, to upper bound~\eqref{clean-kraus-approximant_proof_eq1} we can find the minimum $\lambda>0$ such that the rightmost side of~\eqref{clean-kraus-approximant_proof_eq2} is equal to $1$. Since this is equal to
\bb
\lambda_0 \defeq \frac{\norm*{K(c)-\proj{v}}_\infty^2}{\varepsilon-\norm{c}^2/M}\, ,
\ee
we obtain that
\bb
\Tr\of{\proj{v}-A_M}_+ \leq \lambda_0\, ,
\ee
which concludes the proof.
\end{proof}

The next lemma constructs a discrete polynomial approximation to the Kronecker delta function at $s=0$.

\input{fig_delta_approximant.tex}

\begin{lemma}[(Discrete delta approximant)]\label{lem:clean-discrete-delta-approximant}
For every $n\ge1$ and every integer $0\le r\le n$, there is a real polynomial $q_r$ of degree at most $r$ such that
\begin{equation}
  q_r\of0=1
\end{equation}
and
\begin{equation}\label{eq:clean-discrete-delta-approximant}
\max_{s\in \{1,\ldots,n\}} \abs*{q_r\of s} \le 2 e^{-\kappa r^2/n}\, ,
\end{equation}
where $\kappa$ is a universal constant, and we can take for instance $\kappa=\frac{15-2\pi}{32\sqrt{2}} \approx 0.1926$.
\end{lemma}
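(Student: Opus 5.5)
The plan is to take $q_r$ to be a rescaled Chebyshev-type polynomial that equals $1$ at $0$ and is uniformly small on the discrete points $\{1,\dots,n\}$. The obvious candidate is the Chebyshev polynomial on the continuous interval $[1,n]$. It gives $q(s)=T_r\big(\tfrac{n+1-2s}{n-1}\big)/T_r\big(\tfrac{n+1}{n-1}\big)$, and the resulting bound is only of order $e^{-c r/\sqrt n}$, which is weaker than the claimed $e^{-\kappa r^2/n}$ whenever $r\ll n$. The decay $e^{-\kappa r^2/n}$ is Gaussian-like, which is typical of \emph{discrete} orthogonal polynomials: Krawtchouk or Hahn polynomials, or the binomial-type polynomials that appear naturally in \cref{lem:clean-symmetric-kraus}. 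I will therefore exploit discreteness.

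First I reduce to $r$ not too small and not too large. If $\kappa r^2/n\le\ln 2$ the constant polynomial $q_r\equiv1$ already satisfies the bound, since $2e^{-\kappa r^2/n}\ge1$. So I may assume $r\ge c\sqrt n$.

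For the main construction I try the explicit polynomial
\[
q(s)=\prod_{j=1}^{m}\Big(1-\frac{s}{a_j}\Big)
\]
with roots placed at integers, or a binomial-coefficient form such as $q(s)=\binom{n-s}{r}\big/\binom{n}{r}$. This has degree $r$, satisfies $q(0)=1$, and has $0\le q(s)\le (1-s/n)^r\le e^{-rs/n}$. On its own this decays only like $e^{-r/n}$ at $s=1$, which is too weak.

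The fix is to combine two ingredients. The first kills the small values $s\le m$ exactly, via the factor $\prod_{j=1}^m(1-s/j)=(-1)^m\binom{s-1}{m}$, which vanishes at $s=1,\dots,m$. The second is a Chebyshev polynomial of degree $r-m$ on $[m+1,n]$, normalised at $0$. The Chebyshev factor contributes a decay of roughly $\exp\!\big(-2(r-m)\sqrt{(m+1)/n}\big)$. The first factor has size $\binom{s-1}{m}$ for $s>m$, which grows, but this is compensated by choosing $m\approx r^2/n$. Balancing gives an exponent of order $r^2/n$: we have $(r-m)\sqrt{m/n}\approx r\cdot r/n$, and the growth of $\binom{n}{m}$ must be controlled against it.

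The delicate step, and the one I expect to be the real obstacle, is bounding $|\binom{s-1}{m}|\,|T_{r-m}(\cdot)|/T_{r-m}(\text{at }0)$ uniformly in $s\in[m+1,n]$. A crude bound $\binom{n}{m}\le (en/m)^m=e^{m\ln(en/m)}$ loses a logarithm. So instead I would use a scaled Chebyshev on a shifted interval $[\theta n,n]$ for the bulk, and handle $s\in[m+1,\theta n]$ directly with the product, choosing $m=\lceil r^2/(c n)\rceil$. The specific constant $\kappa=\frac{15-2\pi}{32\sqrt2}$ (involving $\pi$) strongly suggests that the authors used an explicit integral estimate, likely $\arccos$ or $\cosh^{-1}$ bounds such as $\cosh^{-1}(1+x)\ge\sqrt{2x}(1-x/12)$, together with a Riemann-sum comparison. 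I would follow that route to fix constants after verifying the $r^2/n$ scaling.
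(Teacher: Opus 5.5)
There is a genuine gap. The proposal never produces a polynomial that meets the bound, and the repair you sketch for the obstacle you correctly spotted does not work.

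\textbf{Why the linear zero-killing factor fails.} The culprit is $P_m(s)=\prod_{j=1}^m(1-s/j)=(-1)^m\binom{s-1}{m}$. At every integer $s\ge m+1$ one has $|P_m(s)|\ge 1$, with equality at $s=m+1$. So $P_m$ can never supply decay there, only growth. This sinks ``handle $s\in[m+1,\theta n]$ directly with the product''. At $s=m+1$ you get $|q(m+1)|=|C(m+1)|$. A Chebyshev factor $C$ of degree at most $r$, adapted to $[\theta n,n]$ and normalised at $0$, satisfies $C(m+1)\ge e^{-2r(m+1)/(\sqrt\theta\, n)}$ when $m+1\le\theta n/2$. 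With $m\approx r^2/(cn)$ this is only $e^{-O(r^3/n^2)}$, far larger than $e^{-\kappa r^2/n}$ when $r\ll n$.

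\textbf{Keeping the Chebyshev factor on $[m+1,n]$ does not help either.} The logarithmic loss there is not an artifact of the crude bound $\binom nm\le(en/m)^m$. With the standard affine map of $[m+1,n]$ onto $[-1,1]$ one has exactly $|q(n)|=\binom{n-1}{m}/T_{r-m}\big(\tfrac{n+m+1}{n-m-1}\big)$. Optimising over $m=\mu r^2/n$ then gives at best an exponent of order $r^2/(n\ln(n/r))$. Adding one more Chebyshev factor on $[\theta n,n]$ does not cure this, because the growth of $\binom{s-1}{m}$ at intermediate $s$ reintroduces the same $\ln(n/r)$. Any cofactor would have to beat $\binom{s-1}{m}$ at every dyadic scale between $m$ and $n$, and a single Chebyshev factor cannot.

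\textbf{The missing idea.} The paper's $q_r$ also vanishes at $1,\dots,L-1$, so your factorised shape is not wrong; what matters is how the zeros are produced. The paper kills the small integers with factors bounded by $1$ on all of $[0,n]$. Each zero costs more degree, but there is no growth to compensate.
\begin{itemize}
\item For each integer $1\le j<L$, set $Z_j(x)=T_{m_j}(1-a_jx/n)$ with $m_j=\lceil\frac\pi4\sqrt{n/j}\rceil$ and $a_j=\frac nj\big(1-\cos\frac{\pi}{2m_j}\big)\in[0,2]$. Then $Z_j(0)=1$, $Z_j(j)=0$ (the first zero of $T_{m_j}$), and $|Z_j|\le 1$ on $[0,n]$.
\item The total degree cost is $\sum_{j<L}m_j\le\frac\pi2\sqrt{n(L-1)}+L-1$. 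This is $O(r)$ once $L=\lceil\alpha r^2/n\rceil$.
\item Multiply by one Chebyshev polynomial $B_m$ of degree $m=\lceil\beta r\rceil$ on $[L,n]$, normalised at $0$. Using $\arccosh\big(1/(1-z)\big)\ge\sqrt{2z}$, this gives $\max_{1\le s\le n}|q_r(s)|\le 2e^{-m\arccosh(n/(n-L))}\le 2e^{-\beta\sqrt{2\alpha}\,r^2/n}$.
\item The degree constraint $\beta+\frac\pi2\sqrt\alpha+\alpha\le 1$ with $\alpha=1/16$ yields exactly $\kappa=\frac{15-2\pi}{32\sqrt2}$. The $\pi$ comes from the degree $\frac\pi4\sqrt{n/j}$ needed to place the zero $\cos\frac{\pi}{2m_j}$ of $T_{m_j}$ at $x=j$.
\end{itemize}

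A minor point: your opening claim is slightly off. The Chebyshev bound $e^{-cr/\sqrt n}$ on $[1,n]$ is weaker than $e^{-\kappa r^2/n}$ only when $r\gg\sqrt n$, not whenever $r\ll n$.
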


\begin{proof}
For $r=0$, take $q_0\equiv1$.
The remaining claim is obvious for $n=1$, so we can assume that $n\ge2$ and $r\ge1$.
Let $T_m$ be the Chebyshev polynomial of the first kind of degree $m$, defined by the identity $T_m(\cos\theta)=\cos(m\theta)$.
In particular, the first non-negative zero of $T_m$ is at $x=\cos\frac{\pi}{2m}$, and
\bb
\max_{x\in [-1,1]} \abs*{T_m(x)} = 1\, .
\label{eq:Chebyshev-max-unit-interval}
\ee
For $x\ge  1$, one can also write the explicit formula
\bb
T_m(x) = \cosh\big( m \arccosh(x)\big)\, ,
\label{eq:Chebyshev-cosh-arccosh}
\ee
from which we immediately observe that $T_m(x)>0$ for all $x\ge  1$.

Now, fix some integers $L\in \{1,\ldots, n-1\}$ and $m\in \{1,\ldots,r\}$, to be determined later. The main idea of the proof is to construct $q_r$ by multiplying two different polynomials:
\begin{itemize}
\item First, a polynomial $B_m$ of degree $m$, normalised at $0$, i.e.\ such that $B_m(0)=1$, and with small modulus on the whole real interval $[L,n]$. The construction of $B_m$ rests on classic results in approximation theory, and, indeed, $B_m$ will be nothing but a rescaled and translated version of $T_m$, where the rescaling and the translation are meant to map the interval $[-1,1]$ to $[L,n]$.
\item The problem with Chebyshev polynomials is that they have, in general, a large modulus outside of the interval where they are designed to be small. (For $T_m$, this interval is $[-1,1]$; for our $B_m$, it will be instead $[L,n]$.) To remedy this problem, we will use the remaining degree budget $r-m$ to multiply $B_m$ by appropriate Chebyshev polynomials that are designed to vanish for every integer $s\in [1,L)$ (if any), and are anyway small in modulus on the whole interval $[0,n]$.
\end{itemize}

We now execute this plan. Start by setting
\bb
B_m(x) \defeq \frac{T_m\of*{\frac{n-x}{n-L}}}{T_m\of*{\frac{n}{n-L}}}\, ,
\ee
so that $B_m(0)=1$, and 
\bb
\max_{x\in [L,n]} \abs*{B_m(x)} &= \frac{1}{T_m\of*{\frac{n}{n-L}}} \max_{x\in [L,n]} \abs*{T_m\of*{\frac{n-x}{n-L}}} \\
&\eqt{(i)} \frac{1}{T_m\of*{\frac{n}{n-L}}} \max_{y\in [0,1]} \abs*{T_m\of*{y}} \\
&\eqt{(ii)} \frac{1}{T_m\of*{\frac{n}{n-L}}} \\
&\leqt{(iii)} 2 \exp\left[ -m \arccosh\of*{\frac{n}{n-L}} \right]\, .
\label{eq:B_m-bound}
\ee
Here, in~(i) we changed variables by setting $y \defeq \frac{n-x}{n-L}$, in~(ii) we leveraged~\eqref{eq:Chebyshev-max-unit-interval}, and in~(iii) we exploited~\eqref{eq:Chebyshev-cosh-arccosh}.

For any integer $j\in (0,L)$, if any, we now set
\bb
m_j=\ceil*{\frac\pi4\sqrt{\frac{n}{j}}}, \qquad a_j=\frac nj\of*{1-\cos\frac{\pi}{2m_j}}, \qquad Z_j(x) = T_{m_j}\of*{1-a_j\frac{x}{n}}\, .
\label{eq:polynomials_T_mj}
\ee
Since $0\leq 1-\cos t\le t^2/2$, one has $0\leq a_j\le2$; therefore, we have $1-a_j\frac{x}{n}\in [-1,1]$ for all $x\in [0,n]$, entailing, via~\eqref{eq:Chebyshev-max-unit-interval}, that 
\bb
\abs{Z_j(x)}\le 1\qquad \forall\ 0\le x\le n\, .
\label{eq:Z_j-bound-0-n}
\ee
The polynomials $Z_j$ are normalised at $0$, i.e.\ $Z_j(0)=1$, and moreover
\bb
Z_j(j) = T_{m_j}\of*{\cos\frac{\pi}{2m_j}} = \cos\frac{\pi}{2} = 0 \qquad \forall\ j\in (0,L)\cap \N\, .
\label{eq:Z_j-vanishes-at-integers}
\ee

We are finally ready to define
\bb
q_r(x) = B_m(x) \prod_{j\in (0,L)\cap \N} Z_j(x)\, ,
\ee
where it is understood that $q_r(x)=B_m(x)$ if there are no integers in the interval $(0,L)$, i.e.\ if $L=1$. Clearly, $q_r(0)=1$, and moreover
\bb
\max_{s\in \{1,\ldots,n\}} \abs*{q_r(s)}\ &\eqt{(iv)}\ \max_{s\in \{L,\ldots,n\}} \abs*{q_r(s)} \\
&\leqt{(v)}\ 2 \exp\left[ -m \arccosh\of*{\frac{n}{n-L}} \right] ,
\label{eq:first-bound-q_r}
\ee
where (iv)~holds due to~\eqref{eq:Z_j-vanishes-at-integers}, and~(v) due to~\eqref{eq:B_m-bound} and~\eqref{eq:Z_j-bound-0-n}.

We will now make the ansatzes
\bb
L \defeq \ceil*{\alpha\, \frac{r^2}{n}}\, ,\qquad m \defeq \ceil*{\beta r}\, ,
\label{eq:ansatzes-L-m}
\ee
where $\alpha\in (0,1/2)$ and $\beta\in (0,1)$ will be determined later. Note that, since $n\ge  2$, we also have 
\bb
L\leq \alpha\,\frac{r^2}{n} + 1 \leq \alpha n + 1 < \frac{n}{2} + 1 \leq n\, ,
\ee
so that $L\in \{1,\ldots, n-1\}$, as we required. Plugging the values~\eqref{eq:ansatzes-L-m} into~\eqref{eq:first-bound-q_r}, we obtain
\bb
\max_{s\in \{1,\ldots,n\}} \abs*{q_r(s)}\ &\leq 2\ \exp\left[ - \beta r \arccosh\of*{\frac{n}{n-\alpha \frac{r^2}{n}}} \right] \\
&\eqt{(vi)}\ 2 \exp\left[ - \frac{\beta}{\sqrt\alpha}\, n \varphi\of*{\alpha \tfrac{r^2}{n^2}} \right] ,
\label{eq:second-bound-q_r}
\ee
where the auxiliary function $\varphi$ is defined by
\bb
\varphi(z) \defeq \sqrt{z} \arccosh\of*{\frac{1}{1-z}}\, .
\ee

Note that, for $0\le z<1$,
\begin{equation}
  \arccosh\of*{\frac1{1-z}}
  \ge
  \sqrt{2z}\, .
\end{equation}
Indeed, this is the standard bound $\arccosh(1+u)\ge\sqrt{2u/(1+u)}$ with $u=z/(1-z)$. Thus $\varphi(z)\ge \sqrt2\,z$ for $0\le z<1$, entailing that
\bb
\frac{\beta}{\sqrt\alpha}\,
  n\varphi\of*{\alpha\frac{r^2}{n^2}}
  &\ge
  \beta\sqrt{2\alpha}\,\frac{r^2}{n}
  =
  \kappa\, \frac{r^2}{n}\, , \qquad \kappa \defeq \beta \sqrt{2\alpha}\, .
\ee
Hence, the bound in~\eqref{eq:second-bound-q_r} becomes
\begin{equation}
  \max_{s\in \{1,\ldots,n\}} \abs*{q_r(s)}
  \le
  2e^{-\kappa r^2/n}\, .
\end{equation}

It remains to verify under what conditions on $\alpha$ and $\beta$ the above choices yield a polynomial $q_r$ of degree no larger than $r$. 
If $L=1$, then $q_r=B_m$ and $\deg q_r=m=\ceil*{\beta r}\le r$, because $\beta<1$. Assume now that $L\ge2$. Then,
\bb
\sum_{j=1}^{L-1}m_j\ &\leqt{(vii)}\ \sum_{j=1}^{L-1} \left( \frac{\pi}{4} \sqrt{\frac{n}{j}} +1 \right) \\ 
&\leq\ \frac{\pi}{4} \sqrt{n} \sum_{j=1}^{L-1}j^{-1/2} + L-1\\
&\leqt{(viii)}\ \frac\pi2\sqrt{n(L-1)} + L-1\\
&\lt{(ix)}\ \of*{\frac{\pi}{2}\, \sqrt\alpha + \alpha} r\, ,
\ee
where (vii)~comes from~\eqref{eq:polynomials_T_mj}, in~(viii) we employed the elementary estimate $\sum_{j=1}^{L-1}j^{-1/2} \leq \int_0^{L-1} x^{-1/2}\, \mathrm{d}x = 2\sqrt{L-1}$, and in~(ix) we observed that $L-1 < \alpha r^2/n \leq \alpha r$. Therefore,
\bb
\deg q_r &\le \ceil*{\beta r} + \sum_{j=1}^{L-1}m_j < \beta r+1 + \of*{\frac{\pi}{2}\, \sqrt\alpha + \alpha} r = \of*{\beta + \frac{\pi}{2}\, \sqrt\alpha + \alpha} r + 1\, .
\ee
Due to the strict inequality, to ensure that the left-hand side (which is an integer) is at most $r$ it suffices to impose that
\bb
\beta + \frac{\pi}{2}\, \sqrt\alpha + \alpha \leq 1\, .
\ee
We can now set
\bb
\alpha=\frac{1}{16}\, , \qquad \beta = 1 - \alpha - \frac{\pi}{2} \sqrt{\alpha} \approx 0.545\, , \qquad \kappa=\frac{15-2\pi}{32\sqrt{2}} \approx 0.1926\, ,
\ee
completing the proof.
\end{proof}

\begin{lemma}[(Hypergeometric expansion bound)] \label{lem:clean-hypergeometric-expansion}
Let $D\ge2$ and $1\le k\le n$. If $q$ is a polynomial of degree $r\le k/2$ on $\set{0,\ldots,n}$, then there are coefficients $b_0,\ldots,b_k$ such that
\begin{equation}
  \label{eq:q_b_relation}
  q\of s
  =
  \sum_{j=0}^k b_j \frac{\binom sj\binom{n-s}{k-j}}{\binom nk}
  \qquad
  \forall\ s \in \{0,1,\ldots,n\}.
\end{equation}
Furthermore, for $m_j=\binom{j+D-2}{D-2}$,
\begin{equation}\label{eq:clean-hypergeometric-expansion}
  \sum_{j=0}^k m_j\abs{b_j}^2
  \le
  \frac{k+1}{n+1}\,
  \binom{k+D-2}{D-2}\,
  \exp\of*{\frac{2r^2}{k}}\,
  \sum_{s=0}^n\abs{q\of s}^2\, .
\end{equation}
\end{lemma}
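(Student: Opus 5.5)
The plan is to work with the hypergeometric kernel $H(s,j)\defeq\binom sj\binom{n-s}{k-j}/\binom nk$, which is the probability of drawing $j$ marked items when $k$ items are drawn without replacement from $n$ items of which $s$ are marked. The relation \eqref{eq:q_b_relation} then says $q(s)=\mathbb E[b_J\mid s]$, where $J\sim H(s,\cdot)$.

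\textbf{Existence.} I would use the factorial-moment identity $\mathbb E[(J)_i\mid s]=(k)_i\,(s)_i/(n)_i$, where $(x)_i$ is the falling factorial. Expand $q(s)=\sum_{i\le r}c_i\,(s)_i/(n)_i$, which is possible for $r\le n$ because the falling factorials form a basis. Then set
\[
b_j\defeq\sum_{i\le r}c_i\,(j)_i/(k)_i ,
\]
which is well defined for $i\le r\le k$. This gives \eqref{eq:q_b_relation} exactly, and shows that $b$ is a polynomial in $j$ of degree at most $r$. Heuristically, $b_j\approx q(jn/k)$, and this already explains the sampling ratio $(k+1)/(n+1)$ appearing in \eqref{eq:clean-hypergeometric-expansion}.

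\textbf{Reduction of the weighted sum.} I would bound $m_j\le m_k=\binom{k+D-2}{D-2}$ for all $j\le k$. It then remains to prove the unweighted estimate
\[
\sum_{j=0}^k|b_j|^2\le\frac{k+1}{n+1}\,e^{2r^2/k}\sum_{s=0}^n|q(s)|^2 .
\]

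\textbf{Diagonalising the map $\Phi:q\mapsto b$.} Let $t^{(N)}_i$ denote the discrete Chebyshev (Gram) polynomials, i.e.\ the orthogonal polynomials for uniform weight on $\{0,\dots,N\}$. I expect the key structural fact to be the intertwining relation
\[
\Phi\, t^{(n)}_i=\lambda_i\, t^{(k)}_i .
\]
These are Hahn polynomials with $\alpha=\beta=0$, and the hypergeometric kernel is the classical duality/transition kernel between Hahn families. I would verify the relation directly by showing that $\Phi t^{(n)}_i$ is orthogonal to all lower-degree polynomials on $\{0,\dots,k\}$, using the reversed form $H(s,j)=\binom kj\binom{n-k}{s-j}/\binom ns$ together with orthogonality on the larger grid. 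This verification is the step I expect to be the main obstacle; if a clean proof resists, I would fall back to citing the known Hahn-polynomial identity.

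\textbf{Computing the eigenvalues.} Given the intertwining relation, $\lambda_i$ is determined by leading coefficients: the $(s)_i/(n)_i$-coefficient $c_i$ maps to $c_i/(k)_i$ in front of $(j)_i$, so $\lambda_i$ is the ratio $(n)_i/(k)_i$ of monic leading coefficients. The Gram norms of monic polynomials are
\[
\|t^{(N)}_i\|^2=\frac{(i!)^4}{(2i)!\,(2i+1)!}\cdot\frac{(N+i+1)!}{(N-i)!}.
\]
By Parseval in the two orthogonal bases, it therefore suffices to bound, for each $i\le r$,
\[
\frac{(n)_i^2}{(k)_i^2}\cdot\frac{(k+i+1)!/(k-i)!}{(n+i+1)!/(n-i)!}
\;=\;\frac{k+1}{n+1}\prod_{l=1}^{i}\frac{(k+1+l)(n-l+1)^2(n+1-l)\cdots}{\cdots}.
\]
I would simplify the right-hand side into $\frac{k+1}{n+1}\prod_{l=1}^{i}\frac{(1+l/(k+1))(1-l/(n+1))}{(1-l/(k+1))(1+l/(n+1))}\cdot(\text{terms}\le1)$. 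I would then use $k\ge n$-independence via $\frac{1-l/(n+1)}{1+l/(n+1)}\le1$, and $\frac{1+x}{1-x}\le e^{4x}$ for $x\le1/2$ (valid since $l\le r\le k/2$). Summing $\sum_{l\le i}4l/(k+1)\le 2r^2/k$ gives the factor $e^{2r^2/k}$, and hence \eqref{eq:clean-hypergeometric-expansion}. A careful bookkeeping of the factorials in this product is needed to make sure the constant in the exponent is exactly $2$.
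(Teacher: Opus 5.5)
Your architecture is the paper's. It uses the factorial-moment identity $\sum_j \frac{\binom sj\binom{n-s}{k-j}}{\binom nk}\frac{j_{\underline i}}{k_{\underline i}}=\frac{s_{\underline i}}{n_{\underline i}}$, Hahn$(0,0)$ (Gram) polynomials as singular vectors of the hypergeometric transition, Parseval, and $m_j\le\binom{k+D-2}{D-2}$. Since the transition matrix is injective, your $b=\Phi q$ is exactly the paper's $T_{n,k}^{+}\ket{q}$.

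The step you expect to be the main obstacle is immediate, and you do not need the orthogonality argument. In the ${}_3F_2$ normalisation, $Q_\ell^{(m)}(x)=\sum_{i}(-1)^i\binom{\ell}{i}\binom{\ell+i}{i}\,x_{\underline i}/m_{\underline i}$ has coefficients independent of $m$ in the basis $x_{\underline i}/m_{\underline i}$. Your $\Phi$ replaces $s_{\underline i}/n_{\underline i}$ by $j_{\underline i}/k_{\underline i}$, so it sends $Q_\ell^{(n)}$ to $Q_\ell^{(k)}$ exactly. The paper uses this same fact in the other direction, $T_{n,k}Q_\ell^{(k)}=Q_\ell^{(n)}$. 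Your reduction of the ratio to $\frac{k+1}{n+1}\prod_{l=1}^{i}\frac{(k+1+l)(n+1-l)}{(k+1-l)(n+1+l)}$ is correct and equals the paper's $\sigma_i^{-2}$.

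The genuine gap is the last elementary estimate, and it is not a matter of factorial bookkeeping. From $\frac{1+x}{1-x}\le e^{4x}$ with $x=l/(k+1)$ you get $\sum_{l=1}^{i}\frac{4l}{k+1}=\frac{2i(i+1)}{k+1}$. Your claim that this is $\le 2r^2/k$ is false at $i=r$: $\frac{2r(r+1)}{k+1}-\frac{2r^2}{k}=\frac{2r(k-r)}{k(k+1)}>0$ whenever $1\le r<k$. That is the entire nontrivial range; for example $k=2$, $r=1$ gives $4/3>1$. As written, your chain proves the bound only with $\exp\bigl(2r(r+1)/(k+1)\bigr)$ in place of $\exp(2r^2/k)$. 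That costs at most a factor $e$, which would be harmless downstream, but it is not the stated inequality.

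The bound $\ln\frac{1+x}{1-x}\le 4x$ is too lossy. Instead use, for $l\le i$,
$\ln\frac{k+1+l}{k+1-l}=\ln\bigl(1+\frac{2l}{k+1-l}\bigr)\le\frac{2l}{k+1-l}\le\frac{2l}{k+1-i}$.
Summing gives $\frac{i(i+1)}{k+1-i}$. For $i\ge1$, the inequality $\frac{i(i+1)}{k+1-i}\le\frac{2i^2}{k}$ is equivalent to $(i-1)(k-2i)\ge0$, which holds because $i\le r\le k/2$. This is exactly how the paper's ``verified by inspection'' step concludes.
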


\begin{proof}
Define the hypergeometric transition linear operator $T_{n,k}:\C^{k+1}\to\C^{n+1}$ as
\begin{equation}
  T_{n,k}
  \defeq
  \sum_{s=0}^{n}\sum_{j=0}^{k} \frac{\binom sj\binom{n-s}{k-j}}{\binom nk} \ket{s}\bra{j}.
\end{equation}
We can interpret \cref{eq:q_b_relation} as the statement that the matrix $T_{n,k}$, 
acting on the vector $\ket{b}\in\C^{k+1}$ with coefficients $b_j$, produces the vector $\ket{q} \in \C^{n+1}$ with coefficients $q(s)$:
\begin{equation} \label{eq:relation_T_n_k_b_and_q}
  \ket{b} \defeq \sum_{j=0}^{k}b_j\ket{j}, \qquad
  \ket{q} \defeq \sum_{s=0}^{n}q\of s\ket{s}, \qquad 
  T_{n,k}\ket{b} = \ket{q}
\end{equation}
Thus the problem is to invert this map on the subspace of value vectors coming from degree-$r$ polynomials, with a quantitative bound on the coefficient norm.
The Hahn polynomial bases below give singular vectors for this map on that subspace.
Let $\ket{0},\ldots,\ket{k}$ be the standard basis of $\C^{k+1}$, and let $\ket{0},\ldots,\ket{n}$ be the standard basis of $\C^{n+1}$.

For $x\in\R$ and $i\in\N$, write $x_{\underline i}\defeq x(x-1)\cdots(x-i+1)$, with $x_{\underline 0}\defeq1$.
For an arbitrary non-negative integer $m$ and $0\le \ell\le m$, define the degree-$\ell$ Hahn polynomial with parameters $(0,0)$ on $\set{0,\ldots,m}$ by
\begin{equation}
  Q_\ell^{(m)}(x)
  =
  \sum_{i=0}^{\ell}
  (-1)^i
  \binom\ell i
  \binom{\ell+i}{i}
  \frac{x_{\underline i}}{m_{\underline i}},
  \qquad
  0\le x\le m\,.
\end{equation}
The standard Hahn orthogonality formula for parameters $(0,0)$, see \cite[Section 9.5]{KLS}, gives
\begin{equation}\label{eq:clean-hahn-norm}
  \sum_{x=0}^{m}Q_\ell^{(m)}(x)Q_{\ell'}^{(m)}(x)
  =
  N_\ell^{(m)}\delta_{\ell\ell'},
  \qquad
  N_\ell^{(m)}
  \defeq
  \sum_{x=0}^{m}\,\abs*{Q_\ell^{(m)}(x)}^2
  =
  \frac{m+1}{2\ell+1}
  \prod_{i=1}^{\ell}\frac{m+1+i}{m+1-i}.
\end{equation}
Set
\begin{equation}
  \ket{v_\ell^{(m)}}
  \defeq
  \frac1{\sqrt{N_\ell^{(m)}}}
  \sum_{x=0}^{m} Q_\ell^{(m)}(x)\, \ket{x}.
\end{equation}
Then $\braket{v_\ell^{(m)}}{v_{\ell'}^{(m)}}=\delta_{\ell\ell'}$.
We now compute the image of these vectors under $T_{n,k}$.
For $0\le i\le k$,
\begin{align}
  \sum_{j=0}^k
  \frac{\binom sj\binom{n-s}{k-j}}{\binom nk}
  \frac{j_{\underline i}}{k_{\underline i}}
  &=
  \frac1{\binom nk}
  \sum_{j=i}^k
  \binom{s}{j}\binom{n-s}{k-j}
  \frac{j_{\underline i}}{k_{\underline i}}
  =
  \frac{s_{\underline i}}{k_{\underline i}\binom nk}
  \sum_{j=i}^k
  \binom{s-i}{j-i}\binom{n-s}{k-j}
  =
  \frac{s_{\underline i}}{k_{\underline i}\binom nk}
  \binom{n-i}{k-i}
  =
  \frac{s_{\underline i}}{n_{\underline i}}.
\end{align}
Substituting this identity term by term gives, for $0\le\ell\le k$,
\bb \label{eq:singular_vectors_T_n_k}
T_{n,k}\ket{v_\ell^{(k)}}
  &=
  \frac1{\sqrt{N_\ell^{(k)}}}
  \sum_{s=0}^{n}
  \of*{
    \sum_{j=0}^{k}
    \frac{\binom sj\binom{n-s}{k-j}}{\binom nk}
    Q_\ell^{(k)}(j)
  }
  \,\ket{s}
  =
  \frac1{\sqrt{N_\ell^{(k)}}}
  \sum_{s=0}^{n}Q_\ell^{(n)}(s)\,\ket{s}
  =
  \sqrt{\frac{N_\ell^{(n)}}{N_\ell^{(k)}}}\,
  \ket{v_\ell^{(n)}}.
\ee
Thus $\ket{v_\ell^{(k)}}$ and $\ket{v_\ell^{(n)}}$ are right and left singular vectors of $T_{n,k}$, with singular value $\sigma_\ell$ satisfying
\begin{equation}
  \sigma_\ell^2
  =
  \frac{N_\ell^{(n)}}{N_\ell^{(k)}}
  =
  \frac{n+1}{k+1}
  \prod_{i=1}^{\ell}
  \frac{k+1-i}{n+1-i}
  \frac{n+1+i}{k+1+i}.
\end{equation}
We can therefore write
\bb \label{eq:T_n_k_in_terms_of_singular_vectors}
T_{n,k}=
\sum_{\ell=0}^k
\sigma_\ell\,
\ket{v^{(n)}_\ell}\!\bra{v^{(k)}_\ell}, \qquad
T_{n,k}^{+}=
\sum_{\ell=0}^k
\sigma_\ell^{-1}
\ket{v^{(k)}_\ell}\!\bra{v^{(n)}_\ell},
\ee
where $T_{n,k}^{+}$ denotes the Moore--Penrose pseudoinverse.
It satisfies
\bb
T_{n,k}^{+}T_{n,k}^{\vphantom{+}} = \1_{\C^{k+1}}, \qquad T_{n,k}^{\vphantom{+}} T_{n,k}^{+} = \sum_{\ell=0}^k \ket{v_\ell^{(n)}}\!\bra{v_\ell^{(n)}}
\ee

We now bound the above singular values for the indices that will be used.
For $\ell\le k/2$,
\begin{align}
\ln \of*{\frac{n+1}{k+1}\, \sigma_\ell^{-2}}
&\le \sum_{i=1}^{\ell}\ln\frac{k+1+i}{k+1-i} = \sum_{i=1}^{\ell}\ln\left(1 + \frac{2i}{k+1-i}\right) \nonumber \\
&\le \sum_{i=1}^{\ell}\frac{2i}{k+1-i}  \\
&\le \frac{2}{k+1-\ell}\sum_{i=1}^{\ell} i = \frac{\ell(\ell+1)}{k+1-\ell}  \nonumber \\
&\le \frac{2\ell^2}{k}, \nonumber
\end{align}
where the last inequality can be verified by inspection. Hence,
\begin{equation}
  \sigma_\ell^{-2}
  \le
  \frac{k+1}{n+1}
  \exp\of*{\frac{2\ell^2}{k}} .
\end{equation}

Now, let us estimate $\sum_{j=0}^k m_j\abs{b_j}^2$.
Let $V_{n,r}\subseteq\C^{n+1}$ be the subspace of vectors $\ket v$ whose components $\braket{x}{v}$, for $x=0,\ldots,n$, are the values of a polynomial in $x$ of degree at most $r$.
Since $r\le n$, evaluation on the $n+1$ distinct points $0,\ldots,n$ is injective on polynomials of degree at most $r$, and hence $\dim V_{n,r}=r+1$.
Each $Q_\ell^{(n)}$ has degree $\ell$, so the orthonormal vectors $\ket{v_0^{(n)}},\ldots,\ket{v_r^{(n)}}$ form an orthonormal basis of $V_{n,r}$.
Since $\ket q\in V_{n,r}$, there are coefficients $a_0,\ldots,a_r$ such that
\bb
\ket{q}=\sum_{\ell=0}^r a_\ell\ket{v_\ell^{(n)}}, \qquad \sum_{\ell=0}^r\abs{a_\ell}^2=\sum_{s=0}^n\abs{q\of s}^2, \qquad 
\ket{b} = \sum_{j=0}^k b_j \ket{j} = T_{n,k}^{+} \ket{q} = \sum_{\ell=0}^r\sigma_\ell^{-1}a_\ell\ket{v_\ell^{(k)}}\, ,
\ee
where we employed~\eqref{eq:relation_T_n_k_b_and_q} and~\eqref{eq:T_n_k_in_terms_of_singular_vectors}.

Therefore,
\begin{equation}
  \sum_{j=0}^k\abs{b_j}^2
  =
  \sum_{\ell=0}^r\sigma_\ell^{-2}\abs{a_\ell}^2
  \le
  \frac{k+1}{n+1}
  \exp\of*{\frac{2r^2}{k}}
  \sum_{s=0}^n\abs{q\of s}^2.
\end{equation}
Since $m_j\le\binom{k+D-2}{D-2}$ for every $0\le j\le k$, this estimate implies \eqref{eq:clean-hypergeometric-expansion}.
\end{proof}

\begin{lemma}[(Product-vector Kraus approximant)]\label{lem:clean-diagonal-kraus-approximant}
Let $\HH\simeq\C^D$ with $D\ge2$, let $\ket\omega\in\HH$ be a unit vector, and write $\ket{\omega_n} = \Pi_n \ket\omega^{\otimes n}$, where $\Pi_n$ is the projector onto the symmetric subspace defined by~\eqref{eq:Pi_n}. Let $1\le k\le n$, let $\gamma>0$, and set $r=\ceil{\sqrt{2\gamma\ln2 / \kappa}\,n}$.
If $\gamma n>1$ and $r\le\frac{k}{2}$, then there exists a vector $c$ such that the compressed blurring channel on $\Sym^n\of{\HH}$, defined by~\eqref{eq:compressed-blurring}, has a Kraus approximant $K(c)$ with
\begin{align}
  K(c)\, \ket{\omega_n} &= \ket{\omega_n}\, ,
  \qquad
  \norm*{K(c)-\proj{\omega_n}}_\infty^2
  \le
  2^{-2\gamma n}\,, \\
  \label{eq:clean-diagonal-kraus-approximant}
  \norm{c}_2^2
  &\le
  D^k
  \of{k+1}
  \binom{k+D-2}{D-2}
  \exp\of*{2\frac{r^2}{k}}.
\end{align}
\end{lemma}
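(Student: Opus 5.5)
The plan is to reduce to $\ket\omega=\ket1$ by unitary covariance, and then to build $K(c)$ only from the diagonal Kraus operators $K_{\beta,\beta}$ of \cref{lem:clean-symmetric-kraus}. With a suitable choice of coefficients, $K(c)$ acts on Dicke vectors as multiplication by $q(s)$, where $s$ is the number of letters different from $1$. We then take $q=q_r$ from \cref{lem:clean-discrete-delta-approximant}, and \cref{lem:clean-hypergeometric-expansion} supplies the coefficients together with their norm bound.

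\emph{Reduction.} The maximally mixed blurring channel commutes with $U^{\otimes n}(\cdot)U^{\dagger\otimes n}$, and so does its compression, since $U^{\otimes n}$ commutes with $\Pi_n$. Pick $U$ with $U\ket1=\ket\omega$. Then the operators $U^{\otimes n}K_{\beta',\beta}U^{\dagger\otimes n}$ form a Kraus family of the same compressed channel, and $\Pi_n\ket\omega^{\otimes n}=U^{\otimes n}\ket{1^n}$. Both the operator-norm condition and $\norm c_2$ are invariant under this conjugation. This is all that \cref{lem:clean-kraus-approximant} needs, since it holds for any Kraus family realising the channel. So assume $\ket{\omega_n}=\ket{(n,0,\dots,0)}$, which is a normalised Dicke vector.

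\emph{Diagonal construction.} For a weight $\alpha$, set $s\defeq n-\alpha_1$. For $0\le j\le k$, restrict the multinomial Vandermonde identity $\sum_\beta\binom k\beta\binom{n-k}{\alpha-\beta}=\binom n\alpha$ to the weights $\beta$ with $\beta_1=k-j$. This gives
\begin{equation*}
\sum_{\beta:\,\beta_1=k-j}\binom k\beta\binom{n-k}{\alpha-\beta}=\binom n\alpha\,\frac{\binom sj\binom{n-s}{k-j}}{\binom nk},
\end{equation*}
because the left side counts, among arrangements of type $\alpha$, those with exactly $j$ non-$1$ letters in the first $k$ sites. Given coefficients $b_0,\dots,b_k$, put $c_{\beta,\beta}\defeq D^{k/2}b_{k-\beta_1}$ and set all off-diagonal coefficients to zero. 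By \eqref{eq:clean-diagonal-symmetric-kraus},
\begin{equation*}
K(c)\ket\alpha=\Big(\sum_{j}b_j\tfrac{\binom sj\binom{n-s}{k-j}}{\binom nk}\Big)\ket\alpha=q(s)\ket\alpha .
\end{equation*}
Here $q$ is any polynomial of degree $r\le k/2$, and the $b_j$ are those produced by \cref{lem:clean-hypergeometric-expansion}. The number of $\beta$ with $\beta_1=k-j$ is $\binom{j+D-2}{D-2}=m_j$, so $\norm c_2^2=D^k\sum_j m_j\abs{b_j}^2$.

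\emph{Choice of $q$ and estimates.} Take $q=q_r$ with $r=\ceil{\sqrt{2\gamma\ln2/\kappa}\,n}$. Since $r\ge\sqrt{2\gamma\ln 2/\kappa}\,n$, we have $\kappa r^2/n\ge 2\gamma n\ln2$, and hence $\abs{q(s)}\le 2\cdot 2^{-2\gamma n}\le 2^{-\gamma n}<1$ for all $s\ge1$, using $\gamma n>1$. The condition $r\le n$ required by \cref{lem:clean-discrete-delta-approximant} follows from $r\le k/2\le n$.

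We have $q(0)=1$, so $K(c)\ket{\omega_n}=\ket{\omega_n}$. Moreover, $K(c)-\proj{\omega_n}$ is diagonal in the Dicke basis with entries $q(s)$ for $s\ge1$, so $\norm{K(c)-\proj{\omega_n}}_\infty^2\le 2^{-2\gamma n}$.

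Finally, $\sum_{s=0}^n\abs{q(s)}^2\le n+1$ because every term is at most $1$. Substituting this into \eqref{eq:clean-hypergeometric-expansion} gives $\sum_j m_j\abs{b_j}^2\le(k+1)\binom{k+D-2}{D-2}e^{2r^2/k}$. Multiplying by $D^k$ yields \eqref{eq:clean-diagonal-kraus-approximant}.

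The main obstacle is the combinatorial identity showing that grouping the diagonal Kraus operators by $\beta_1$ turns the action into exactly the hypergeometric kernel in $s$. Once that is in place, everything else is a direct application of the two preceding lemmas.
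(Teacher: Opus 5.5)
Your proposal is correct and follows essentially the same route as the paper. It uses diagonal Kraus operators with $c_{\beta,\beta}=D^{k/2}b_{k-\beta_1}$, the discrete delta approximant $q_r$, the hypergeometric expansion bound, and the estimate $\sum_s\abs{q_r(s)}^2\le n+1$. There are only two cosmetic differences: you reduce to $\ket\omega=\ket1$ by unitary covariance rather than choosing an adapted orthonormal basis, and you verify the grouping identity by counting arrangements rather than via $\binom k\beta\binom{n-k}{\alpha-\beta}/\binom n\alpha=\binom nk^{-1}\prod_x\binom{\alpha_x}{\beta_x}$ followed by the multivariate Vandermonde identity.
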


\begin{proof}
Since $r\le k/2$ and $k\le n$, we have $r\le n$, so \cref{lem:clean-discrete-delta-approximant} provides a polynomial $q_r$ of degree at most $r$ such that
\begin{equation}
  q_r\of0
  =
  1,
  \qquad
  \abs*{q_r\of s}
  \le
  2e^{-\kappa r^2/n}\quad \forall\ s\in\{1,\ldots, n\}\, .
\end{equation}
The definition of $r$ and the assumption $\gamma n>1$ imply
\begin{equation}
  2e^{-\kappa r^2/n}
  \le
  2e^{-2\gamma n\ln 2}
  =
  2^{1-2\gamma n}
  \le
  2^{-\gamma n}.
\end{equation}
Because $r\le k/2$, \cref{lem:clean-hypergeometric-expansion} provides coefficients $b_0,\ldots,b_k$ satisfying
\begin{equation}
  q_r\of s
  =
  \sum_{j=0}^{k}
  b_j
  \frac{\binom sj\binom{n-s}{k-j}}{\binom nk},
  \qquad
  0\le s\le n.
\end{equation}
With $m_j=\binom{j+D-2}{D-2}$, the same lemma also gives
\begin{equation}\label{eq:clean-delta-coefficient-intermediate}
  \sum_{j=0}^{k}m_j\abs{b_j}^2
  \le
  \frac{k+1}{n+1}
  \binom{k+D-2}{D-2}
  \exp\of*{2\frac{r^2}{k}}
  \sum_{s=0}^{n}\abs*{q_r\of s}^2.
\end{equation}
The preceding bound yields
\begin{equation}
  \sum_{s=0}^{n}\abs*{q_r\of s}^2
  \le
  1+n2^{-2\gamma n}
  \le
  n+1.
\end{equation}
Substituting this into \eqref{eq:clean-delta-coefficient-intermediate} gives
\begin{equation}\label{eq:clean-delta-coefficients}
  \sum_{j=0}^{k}m_j\abs{b_j}^2
  \le
  \of{k+1}
  \binom{k+D-2}{D-2}
  \exp\of*{2\frac{r^2}{k}}\, .
\end{equation}

We now lift this polynomial to a Kraus approximant.
Choose an orthonormal basis $\ket{1},\ldots,\ket{D}$ of $\HH$ with $\ket{1}=\ket{\omega}$.
We use the corresponding Dicke basis of $\Sym^n\of{\HH}$.
For every $\beta\in\N^D$ with $\abs\beta=k$, let $j(\beta)=\sum_{x=2}^D\beta_x$.
For $\beta,\beta'\in\N^D$ with $\abs\beta=\abs{\beta'}=k$, define the coefficient vector $c$ by
\begin{equation}
  c_{\beta',\beta}
  =
  \begin{cases}
    D^{k/2}b_{j(\beta)}, & \beta'=\beta,\\
    0, & \beta'\ne\beta.
  \end{cases}
\end{equation}
Define the corresponding Kraus combination operator by
\begin{equation}
  K(c)
  \defeq
  \sum_{\substack{\beta\in\N^D\\\abs\beta=k}}
  c_{\beta,\beta}K_{\beta,\beta}.
\end{equation}
Fix a Dicke vector $\ket\alpha$, and let $s=\sum_{x=2}^D\alpha_x$ be its total occupation outside the first basis vector.
The factor $D^{k/2}$ in $c_{\beta,\beta}$ cancels the factor $D^{-k/2}$ in the diagonal formula of \cref{lem:clean-symmetric-kraus}.
Grouping the resulting sum according to $j(\beta)$ gives
\begin{align}
  K(c)\ket\alpha
  &=
  \sum_{\substack{\beta\le\alpha\\\abs{\beta}=k}}
  b_{j(\beta)}
  \frac{\binom k\beta\binom{n-k}{\alpha-\beta}}{\binom n\alpha}
  \ket\alpha
  =
  \sum_{j=0}^{k}
  b_j
  \frac{\binom{s}{j}\binom{n-s}{k-j}}{\binom nk}
  \ket\alpha
  =
  q_r\of s\ket\alpha.
\end{align}
For the second equality, we used
\begin{equation}
  \frac{\binom k\beta\binom{n-k}{\alpha-\beta}}{\binom n\alpha}
  =
  \frac1{\binom nk}
  \prod_{x=1}^{D}\binom{\alpha_x}{\beta_x},
\end{equation}
followed by the multivariate Vandermonde identity at fixed $j(\beta)=j$.
Thus every Dicke vector is an eigenvector of $K(c)$ with eigenvalue $q_r(s)$.
For $\ket\alpha=\ket{\omega_n}$ one has $s=0$, and hence $K(c)\ket{\omega_n}=q_r(0)\ket{\omega_n}=\ket{\omega_n}$.
Every Dicke vector orthogonal to $\ket{\omega_n}$ has $s\ge1$ and therefore an eigenvalue of modulus at most $2^{-\gamma n}$.
Since $K(c)$ is diagonal in this basis, it follows that
\begin{equation}
  \norm*{K(c)-\proj{\omega_n}}_\infty^2
  \le
  2^{-2\gamma n}.
\end{equation}
It remains to bound the norm of the coefficient vector $c$.
For each $j$, the value $\beta_1=k-j$ is fixed, while the remaining $D-1$ entries of $\beta$ form a weak composition of $j$.
Consequently, there are exactly $m_j=\binom{j+D-2}{D-2}$ labels $\beta$ with $j(\beta)=j$.
Using \eqref{eq:clean-delta-coefficients}, we conclude that
\begin{equation}
  \norm{c}_2^2
  =
  D^k
  \sum_{j=0}^{k}m_j\abs{b_j}^2
  \le
  D^k
  \of{k+1}
  \binom{k+D-2}{D-2}
  \exp\of*{2\frac{r^2}{k}}\, ,
\end{equation}
completing the proof.
\end{proof}

\section{Alternative estimates based on the Mazzola--Sutter--Renner techniques} \label{sec:alternative_proof_MSR}

The purpose of this section is to use the methods of~\cite{MazzolaSutterRenner2026} to give an alternative estimate of the hypothesis testing relative entropy in the setting considered here.
We will see that this is quantitatively strictly worse than the one presented in this paper.

In this section only, all divergences and entropies are measured in nats.
Equivalently, the symbols $D$, $D_{\max}$, and $D_{\mathrm H}$ below are defined as before, but with natural logarithms and with order bounds written using $e^\lambda$.
The lower eigenvalue constant used in the MSR notation is $c_\tau/d$ in the notation of~\eqref{ax:free-full-rank}.

\begin{proposition} \label{prop:msr-bound}
Assume \eqref{ax:closed-convex}--\eqref{ax:permutation-invariant}.
Then, for all $s>0$ and all integers $n\ge1$, $0\le k<n$, and $r\ge0$ satisfying $16r\le n-k$, we have
\bb
\frac1n
D_{\mathrm H}^{e^{-s n}}\of{\rho^{\otimes n}\|\FF_n}
&\ge 
\frac1n
\rel{D}{\rho^{\otimes(n-k)}}{\FF_{n-k}}
-3h\of*{2\sqrt{\frac{r}{n-k}}}
-12\sqrt{\frac{r}{n-k}}\ln\frac{d^2}{c_\tau}\\
&\qquad
-\xi_n\ln\frac{d}{c_\tau}
-\frac{g\of{\xi_n}}{n}
-s,
\label{eq:msr-bound}
\ee
where
\begin{equation}
  \xi_n
  \defeq
  \sqrt2\,e^{-\frac{rk}{2n}+\frac{s n}{2}},
\end{equation}
and where
\begin{equation}
  g(x)
  \defeq
  (x+1)\ln(x+1)-x\ln x,
  \qquad
  h(x)
  \defeq
  -x\ln x-(1-x)\ln(1-x).
\end{equation}
\end{proposition}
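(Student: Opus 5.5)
We follow the same skeleton as the proof of \cref{thm:supp-one-shot-gqsl}, but we replace the Tikhonov/Kraus-approximant step with the Mazzola--Sutter--Renner (MSR) mechanism. We start from \eqref{eq:clean-DH-Dmax-comparison}, now in nats and with $\varepsilon=e^{-sn}$: $D_{\mathrm H}^{\varepsilon}\ge D_{\max}^{\sqrt{1-\varepsilon}}(\rho^{\otimes n}\|\FF_n)$. We take optimisers $\rho_n'$ and $\sigma_n\in\FF_n$ and twirl them so that both are permutation invariant, exactly as before. We then have $F(\rho^{\otimes n},\rho_n')\ge e^{-sn/2}$, and \cref{lem:clean-purification-reduction} gives a symmetric purification $\ket\Psi$ with $|\braket{\theta^{\otimes n}}{\Psi}|\ge e^{-sn/2}$. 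The goal is to show that, after discarding (rather than blurring) $k$ of the $n$ copies and paying a small exponential price, the state $\rho^{\otimes(n-k)}$ is close to something dominated by a multiple of $\Tr_k\sigma_n\in\FF_{n-k}$. The latter is free by axiom \eqref{ax:trace}.

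The core MSR step is a de Finetti/post-selection-type amplification on the symmetric subspace. Project $\ket\Psi$ onto the component where the first $k$ sites are measured and found to be close to $\theta$. Equivalently, use that the symmetric subspace decomposes over "types" relative to $\ket\theta$. Components of $\ket\Psi$ with $\ge r$ excitations orthogonal to $\theta$ survive a $k$-site check with amplitude at most about $e^{-rk/(2n)}$. Dividing by the overlap $e^{-sn/2}$ produces the quantity $\xi_n=\sqrt2\,e^{-rk/(2n)+sn/2}$. This lets us write the post-selected, traced-out state on $n-k$ sites as a mixture of a "good" part and a "bad" part. The good part lies in the span of states with at most $r$ excitations, and the bad part has weight at most $\xi_n$. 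The good part's reduced states are close, in trace distance, to $\rho^{\otimes(n-k)}$ up to $O(\sqrt{r/(n-k)})$; the condition $16r\le n-k$ keeps this below $1/2$ and makes $2\sqrt{r/(n-k)}\le1/2$ legitimate as a binary-entropy argument. Meanwhile, the post-selection costs at most a factor $e^{sn}$ in the operator inequality against $\Tr_k\sigma_n$. This is the origin of the $-s$ term.

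Converting to relative entropy: the operator inequality gives $D(\omega\|\FF_{n-k})\le D_{\max}+sn$ for the post-selected state $\omega$. Then two continuity steps of Winter type, each relying on $\tau^{\otimes(n-k)}$ to bound the "mixing" terms by $\ln(d/c_\tau)$ per site, finish the argument. The first removes the bad part of weight $\xi_n$ via the mixture bound $D(p\omega_1+(1-p)\omega_2\|\cdot)$, which yields $\xi_n\ln(d/c_\tau)n$ plus $g(\xi_n)$. The second passes from the good part to $\rho^{\otimes(n-k)}$ at trace distance $\lesssim 2\sqrt{r/(n-k)}$, which yields the $h(\cdot)$ and $\sqrt{r/(n-k)}\ln(d^2/c_\tau)$ terms; we do this in the purified picture on $d^2$ dimensions, which explains the $d^2$. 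We divide by $n$ throughout.

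The main obstacle is the quantitative amplification step. It requires bounding how much of $\ket\Psi$ lies in the high-excitation sector and controlling the reduced $(n-k)$-site state's distance from $\rho^{\otimes(n-k)}$ with the exact constants $3$ and $12$. We would first try the MSR lemma essentially verbatim: the gentle-measurement bound $\|\Pi_{\le r}\Psi\|$ via the hypergeometric tail $\binom{n-r}{k}/\binom nk\le e^{-rk/n}$, followed by a symmetric-state de Finetti estimate for low-excitation states. We would accept losing constants if this does not reproduce them exactly.
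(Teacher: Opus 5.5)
Up to the last step, your skeleton is the paper's proof. That covers the reduction $D_{\mathrm H}^{e^{-sn}}\ge D_{\max}^{\sqrt{1-e^{-sn}}}$, the permutation-invariant optimisers, and the symmetric purification with overlap $\ge e^{-sn/2}$. It also covers the projection of $k$ sites onto $\ket{\theta}^{\otimes k}$, which is the exponential de Finetti theorem \cite[Lemma~III.5]{BrandaoPlenio}. That theorem gives $\Upsilon_{n-k}\le e^{sn}\Tr_k\proj{\Psi}$, within trace distance $\xi_n$ of an almost-i.i.d.\ state $\Phi_{n-k,r}$. The domination by $e^{sn}$ times $\Tr_k\sigma_n\in\FF_{n-k}$ and the Winter-type step giving $n\xi_n\ln(d/c_\tau)+g(\xi_n)$ also match. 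Your ``bad part'' is a trace-distance deviation rather than a mixture component, but Winter's lemma handles that.

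The gap is your second continuity step. You claim the good part $\phi_{n-k,r}=\Tr_{\mathcal R^{\otimes(n-k)}}\Phi_{n-k,r}$ is within trace distance $O(\sqrt{r/(n-k)})$ of $\rho^{\otimes(n-k)}$. This is false. Take $r=1$, $\rho=\proj{0}$ and $\ket{\theta}=\ket{0}\ket{0}$. The normalised symmetrisation of $\ket{\theta}^{\otimes(m-1)}\otimes\ket{1}\ket{0}$ is an admissible almost-i.i.d.\ state. Its $\HH$-marginal is the $m$-qubit $W$ state, which is orthogonal to $\rho^{\otimes m}$, so the trace distance is $1$ for every $m$.

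Almost-i.i.d.\ states are close to i.i.d.\ only on marginals much smaller than $(n-k)/r$. A trace-distance continuity bound on the full $(n-k)$-site state therefore cannot give the stated terms. Even formally, a single Winter step at distance $\delta$ produces $g(\delta)$, not $3n\,h(\delta)$.

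The paper closes this step with a different ingredient, \cite[Corollary~3.3]{MazzolaSutterRenner2026}. It is a stability estimate for the relative entropy of resource itself under almost-i.i.d.\ perturbations:
\[
\big|D(\phi_{n-k,r}\|\FF_{n-k})-D(\rho^{\otimes(n-k)}\|\FF_{n-k})\big|\le n\Big[3h\big(2\sqrt{r/(n-k)}\big)+12\sqrt{r/(n-k)}\,\ln(d^2/c_\tau)\Big].
\]
This is exactly where those two terms come from. Without this result, or a proof of it, your chain does not close.

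Your explanation of the $d^2$ is also off. The free sets live on $\HH^{\otimes(n-k)}$, so any continuity bound must be applied after tracing out $\mathcal R$, not in a purified $d^2$-dimensional picture. The $d^2/c_\tau$ is simply MSR's $d/\lambda_{\min}$ with $\lambda_{\min}=c_\tau/d$.
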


Before we delve into the proof, it is instructive to examine what consequences can be drawn from~\eqref{eq:msr-bound}.
The bound can be seen to trivialise when $\xi_n\ge1$, simply because
\begin{equation}
\rel{D}{\rho^{\otimes(n-k)}}{\FF_{n-k}} \le (n-k)\ln\frac{d}{c_\tau} \le n\ln\frac{d}{c_\tau}.
\end{equation}
Hence, for a fixed $n$, we must choose $r$ and $k$ in such a way that the exponent appearing in the definition of $\xi_n$ is negative. Setting $\phi \coloneqq r/n$ and $\theta \coloneqq k/n$, this yields the inequality
\bb
\phi \theta > s\, ,
\ee
which implies immediately that
\bb
\frac{r}{n-k} = \frac{\phi}{1-\theta} > \phi\, .
\ee
We can repeat the same reasoning on the term $12\sqrt{r/(n-k)}\ln(d^2/c_\tau)$.
Replacing the lower eigenvalue constant of the MSR notation by $c_\tau/d$ only changes fixed prefactors, and the argument still forces
\begin{equation}
  \sqrt{\frac{r}{n-k}}
  \le
  \frac1{12}\, .
\end{equation}
In particular, $2 \sqrt{\frac{r}{n-k}}\le \frac1{6} < \frac1e$, which is the monotonicity threshold for the function $-x \ln(x)$. We can then write
\begin{equation}
3h\of*{2\sqrt{\frac{r}{n-k}}} \ge 6\sqrt{\phi}\ln\frac1{2\sqrt{\phi}} > 6\sqrt{\phi}\, .
\end{equation}

Using also 
\bb
\frac1n
\rel{D}{\rho^{\otimes(n-k)}}{\FF_{n-k}} \ge \frac{n-k}{n} D^\infty(\rho\|\FF) = (1-\theta) D^\infty(\rho\|\FF)\, ,
\ee
where the inequality is tight for asymptotically large $n$ and $k = k_n$, with $\lim_{n\to\infty} k_n/n = \theta$, we see that the right-hand side of~\eqref{eq:msr-bound} is asymptotically no larger than
\bb
(1-\theta) D^\infty(\rho\|\FF) - 6 \sqrt{\phi}\, ,
\ee
Maximising over $\theta$ and $\phi$ subject to the constraint $\phi\theta>s$ yields, in the best-case scenario,
\bb
D^\infty(\rho\|\FF) - 3^{5/3} \big(s\, D^\infty(\rho\|\FF)\big)^{1/3}\, ;
\ee
because of the spurious power $s^{1/3}$ on the right-hand side, this is strictly worse than our bound.

\begin{proof}[Proof of \cref{prop:msr-bound}]
We start by leveraging the usual~\cite[Theorem~12, Eq.~(96)]{RegulaLamiDatta} to write
\bb
\frac1n
D_{\mathrm H}^{e^{-s n}}\of{\rho^{\otimes n}\|\FF_n}
&\ge 
\frac1n
D_{\max}^{\sqrt{1-e^{-s n}}}\of{\rho^{\otimes n}\|\FF_n}
+\frac1n\ln\frac1{1-e^{-s n}}\\
&\ge 
\frac1n
D_{\max}^{\sqrt{1-e^{-s n}}}\of{\rho^{\otimes n}\|\FF_n}
\eqdef
\lambda_n .
\label{eq:msr-bound-proof-1}
\ee
By definition of purified distance, we can find a smoothed state $\bar\rho_n$ and a free state $\sigma_n\in\FF_n$ such that
\bb
F^2\of{\bar\rho_n,\rho^{\otimes n}}
\ge
e^{-s n},
\qquad
\bar\rho_n
\le
e^{n\lambda_n}\sigma_n,
\qquad
\sigma_n\in\FF_n.
\label{eq:msr-bound-proof-2}
\ee
Without loss of generality, we can assume that both $\bar\rho_n$ and $\sigma_n$ are permutation invariant.
Let $\ket\theta\in\HH\otimes\mathcal R$ be a purification of $\rho$, with $\mathcal R\simeq\HH$.
By~\cite[Lemma~III.4]{BrandaoPlenio}, equivalently by \cref{lem:clean-purification-reduction}, we can now find a permutation-symmetric purification $\ket{\Psi_n}\in\Sym^n\of{\HH\otimes\mathcal R}$ of $\bar\rho_n$ such that
\bb
\abs*{\braket{\Psi_n}{\theta^{\otimes n}}}
=
F\of{\bar\rho_n,\rho^{\otimes n}}
\ge
e^{-s n/2}.
\label{eq:msr-bound-proof-3}
\ee

Fix integers $k,r$ satisfying $16r\le n-k$.
Since this implies $r\le n-k$, the exponential de Finetti theorem~\cite{RennerPhD, Renner2007} in the form~\cite[Lemma~III.5]{BrandaoPlenio}, see also~\cite[Lemma~A.7]{MazzolaSutterRenner2026}, applied with $m=k$, shows the existence of a state $\Upsilon_{n-k}$ with two properties.
First,
\bb
\Upsilon_{n-k}
\le
\frac{\Tr_k\proj{\Psi_n}}
     {\abs*{\braket{\Psi_n}{\theta^{\otimes n}}}^2},
\label{eq:msr-bound-proof-4}
\ee
where $\Tr_k$ denotes the partial trace over $k$ subsystems.
Second, there exists an almost i.i.d.\ state
\begin{equation}
  \Phi_{n-k,r}
  \in
  \mathrm{S}\of*{(\HH\otimes\mathcal R)^{\otimes(n-k)},\,\ket\theta^{\otimes(n-k-r)}}
\end{equation}
such that
\bb
\frac12
\norm*{\Upsilon_{n-k}-\Phi_{n-k,r}}_1
\le
\frac{\sqrt2\,e^{-\frac{rk}{2n}}}
     {\abs*{\braket{\Psi_n}{\theta^{\otimes n}}}}.
\label{eq:msr-bound-proof-5}
\ee

Set
\bb
\upsilon_{n-k}
\defeq
\Tr_{\mathcal R^{\otimes(n-k)}}\Upsilon_{n-k},
\qquad
\phi_{n-k,r}
\defeq
\Tr_{\mathcal R^{\otimes(n-k)}}\Phi_{n-k,r}.
\ee
Then $\phi_{n-k,r}$ is almost i.i.d.\ along $\rho$ by construction.
Since $16r\le n-k$, the stability of the relative entropy of resource for almost i.i.d.\ states~\cite[Corollary~3.3]{MazzolaSutterRenner2026}, applied to $n-k$ systems, gives
\bb
\abs*{
\frac1nD\of{\phi_{n-k,r}\|\FF_{n-k}}
-
\frac1nD\of{\rho^{\otimes(n-k)}\|\FF_{n-k}}
}
\le
3h\of*{2\sqrt{\frac{r}{n-k}}}
+12\sqrt{\frac{r}{n-k}}\ln\frac{d^2}{c_\tau}.
\label{eq:msr-bound-proof-7}
\ee

Now, taking the partial traces of~\eqref{eq:msr-bound-proof-4} and~\eqref{eq:msr-bound-proof-5} over the whole purifying system $\mathcal R^{\otimes(n-k)}$, and combining with the estimates in~\eqref{eq:msr-bound-proof-2} and~\eqref{eq:msr-bound-proof-3}, we see that
\begin{align}
\upsilon_{n-k}
&\le
e^{s n}\Tr_k\bar\rho_n
\le
e^{n(\lambda_n+s)}\bar\sigma_{n-k},
\label{eq:msr-bound-proof-8}\\
\frac12\norm*{\upsilon_{n-k}-\phi_{n-k,r}}_1
&\le
\sqrt2\,e^{-\frac{rk}{2n}+\frac{s n}{2}}
=
\xi_n,
\label{eq:msr-bound-proof-9}
\end{align}
where $\bar\sigma_{n-k}\defeq\Tr_k\sigma_n$.
Note that $\bar\sigma_{n-k}\in\FF_{n-k}$ by the Brand\~ao--Plenio axioms.

In particular,
\bb
n\of{\lambda_n+s}
&\geqt{(i)}
D_{\max}\of{\upsilon_{n-k}\|\FF_{n-k}}\\
&\geqt{(ii)}
D\of{\upsilon_{n-k}\|\FF_{n-k}}\\
&\geqt{(iii)}
D\of{\phi_{n-k,r}\|\FF_{n-k}}
-n\xi_n\ln\frac{d}{c_\tau}
-g\of{\xi_n}\\
&\geqt{(iv)}
D\of{\rho^{\otimes(n-k)}\|\FF_{n-k}}
-3n\,h\of*{2\sqrt{\frac{r}{n-k}}}
-12n\sqrt{\frac{r}{n-k}}\ln\frac{d^2}{c_\tau}\\
&\qquad
-n\xi_n\ln\frac{d}{c_\tau}
-g\of{\xi_n}.
\label{eq:msr-bound-proof-10}
\ee
Here, (i) comes from~\eqref{eq:msr-bound-proof-8}.
In (ii), we simply observed that the max-relative entropy upper bounds the Umegaki relative entropy.
In (iii), we used the standard continuity of the relative entropy of resource~\cite{Donald1999, MatthiasPhD, WinterContinuity, LamiGQSL} together with~\eqref{eq:msr-bound-proof-9}.
Finally, (iv) comes from~\eqref{eq:msr-bound-proof-7}.
Plugging~\eqref{eq:msr-bound-proof-10} into~\eqref{eq:msr-bound-proof-1} completes the proof.
\end{proof}

\end{document}

%% file: macros.tex
\newtheorem*{theorem*}{Theorem}
\newtheorem{theorem}{Theorem}
\newtheorem{proposition}[theorem]{Proposition}
\newtheorem{lemma}[theorem]{Lemma}
\newtheorem{definition}{Definition}

\newtheorem{remark}{Remark}

\counterwithin*{equation}{part}
\counterwithin*{theorem}{part}
\counterwithin*{definition}{part}
\counterwithin*{remark}{part}
\counterwithin*{figure}{part}

\makeatletter
\@ifpackageloaded{hyperref}{}{}
\makeatother

\makeatletter
\def\thmhead@plain#1#2#3{\thmname{#1}\thmnumber{\@ifnotempty{#1}{ }\@upn{#2}}\thmnote{ {\the\thm@notefont#3}}}
\let\thmhead\thmhead@plain
\makeatother

\crefname{lemma}{Lemma}{Lemmas}
\crefname{definition}{Definition}{Definitions}
\crefname{theorem}{Theorem}{Theorems}
\crefname{conjecture}{Conjecture}{Conjectures}
\crefname{section}{Section}{Sections}
\crefname{claim}{Claim}{Claims}
\crefname{appendix}{Appendix}{Appendices}
\crefname{figure}{Fig.}{Figs.}
\crefname{table}{Table}{Tables}
\crefname{proposition}{Proposition}{Propositions}
\crefname{corollary}{Corollary}{Corollaries}
\crefname{example}{Example}{Examples}
\crefname{remark}{Remark}{Remarks}

\providecommand\given{}
\newcommand\SetSymbol[1][]{\nonscript\:#1\vert
    \allowbreak
    \nonscript\:
    \mathopen{}}
\DeclarePairedDelimiterX\Set[1]\{\}{\renewcommand\given{\SetSymbol[\delimsize]}
    #1
}

\DeclarePairedDelimiter{\set}{\lbrace}{\rbrace}
\DeclarePairedDelimiter{\abs}{\lvert}{\rvert}
\DeclarePairedDelimiter{\norm}{\lVert}{\rVert}
\DeclarePairedDelimiter{\floor}{\lfloor}{\rfloor}
\DeclarePairedDelimiter{\ceil}{\lceil}{\rceil}
\DeclarePairedDelimiter{\of}{\lparen}{\rparen}
\DeclarePairedDelimiter{\sof}{\lbrack}{\rbrack}

\newcommand{\defeq}{\vcentcolon=}
\newcommand{\eqdef}{=\vcentcolon}
\renewcommand{\leq}{\leqslant}

\providecommand{\bra}[1]{}
\providecommand{\ket}[1]{}
\providecommand{\braket}[2]{}
\providecommand{\proj}[1]{}
\renewcommand{\bra}[1]{\langle{#1}\rvert}
\renewcommand{\ket}[1]{\lvert{#1}\rangle}
\renewcommand{\braket}[2]{\langle{#1}|{#2}\rangle}
\newcommand{\ketbra}[2]{\ket{#1}\bra{#2}}
\renewcommand{\proj}[1]{\ketbra{#1}{#1}}

\newcommand{\x}{\otimes}

\newcommand{\C}{\mathbb{C}} \newcommand{\N}{\mathbb{N}} \newcommand{\R}{\mathbb{R}}

\newcommand{\1}{\mathbbm{1}}

 \DeclareMathOperator{\Sym}{Sym}  \DeclareMathOperator{\D}{D}    \DeclareMathOperator{\Tr}{Tr}  \DeclareMathOperator{\supp}{supp}

   \newcommand{\e}{\varepsilon}

\newcommand\restr[2]{{\left.\kern-\nulldelimiterspace #1 \vphantom{\big|} \right|_{#2} }}

\DeclareMathOperator{\arccosh}{arccosh}
\DeclareMathOperator{\pr}{Pr}

\newcommand{\deff}[1]{\textbf{\textit{#1}}}
\newcommand{\rel}[3]{#1\big(#2\,\big\|\,#3\big)}

\newcommand{\stein}{\mathrm{Stein}}
\newcommand{\hoeff}{\mathrm{Hoeffding}}
\newcommand{\revh}{\mathrm{RevHoeff}}

\newcommand{\sep}{\mathrm{SEP}}

\newcommand{\HH}{\mathcal{H}}

\newcommand{\FF}{\mathcal{F}}
\newcommand{\BB}{\mathcal{B}}
\newcommand{\TT}{\mathcal{T}}
\newcommand{\XX}{\mathcal{X}}

\newcommand{\dhh}[1]{D_{\mathrm{H}}^{#1}}

\newcommand{\bb}{\begin{equation}\begin{aligned}\hspace{0pt}}
\newcommand{\ee}{\end{aligned}\end{equation}}

\newcommand{\eqt}[1]{\stackrel{\mathclap{\scriptsize \mbox{#1}}}{=}}
\newcommand{\leqt}[1]{\stackrel{\mathclap{\scriptsize \mbox{#1}}}{\le}}
\newcommand{\geqt}[1]{\stackrel{\mathclap{\scriptsize \mbox{#1}}}{\ge}}
\newcommand{\lt}[1]{\stackrel{\mathclap{\scriptsize \mbox{#1}}}{<}}

\stackMath

\newcommand{\fakepart}[1]{
 \par\refstepcounter{part}
  \sectionmark{#1}
}